\documentclass{article}
\usepackage{float}
\usepackage[flushleft]{threeparttable}%
\usepackage{amsmath}
\usepackage{geometry}
\usepackage[utf8]{inputenc}
\usepackage[nottoc]{tocbibind}
\usepackage{bm}
\usepackage{enumitem}
\usepackage{amssymb}
\usepackage{cite}
\usepackage[backref]{hyperref} %
\usepackage{mathrsfs}
\usepackage{mathtools}
\usepackage{caption}
\usepackage{indentfirst}
\usepackage{array}
\usepackage{amsfonts}
\usepackage{amsthm}
\usepackage{listings}
\usepackage{graphicx}
\usepackage{centernot}
\usepackage{xcolor}
\usepackage{setspace}
\usepackage{natbib}
\usepackage{booktabs}
\usepackage{subcaption}
\usepackage{pdflscape}
\usepackage[title]{appendix}
\newtheorem{condition}{Condition}[section]

\newtheorem{theorem}{Theorem}
\newtheorem{remark}{Remark}
\newtheorem{lemma}{Lemma}

\definecolor{tianedit}{RGB}{180,0,0} %

\newcommand{\normm}[1]{\lVert#1\rVert}
\newcommand{\E}{\mathbb{E}}

\newcommand{\R}{\mathbb{R}}

\newtheorem{example}{Example}[section]

\title{Conditional-Moment Estimation and Inference in the BLP Model}
\author{Hua Jin\thanks{Department of Economics, University College London. Email: \href{mailto:uctphji@ucl.ac.uk}{uctphji@ucl.ac.uk}} 
\and Tian Xie\thanks{Department of Economics, University of Melbourne. Email: \href{mailto:tian.xie@unimelb.edu.au}{tian.xie@unimelb.edu.au}}}
\date{This draft: \today}

\begin{document}

\maketitle
\doublespacing

\begin{abstract}
The random-coefficient demand model of Berry, Levinsohn, and Pakes (1995) is commonly estimated by the generalized method of moments (GMM), using an unconditional moment restriction with a fixed set of instruments. Identification of the model, however, rests on a conditional moment restriction. The two are not equivalent: the unconditional restriction may admit additional parameter values. We construct a counterexample in which the model is identified by the conditional restriction yet standard GMM is not, even with the optimal instrument. Building directly on the identifying restriction, we propose a two-step estimator, following \cite{ai2003efficient}, that first estimates the relevant conditional expectations nonparametrically and then selects the structural parameters by a conditional-variance-weighted minimum-distance criterion; standard GMM is recovered as the special case of a linear projection onto finitely many instruments. We establish $\sqrt T$-asymptotic normality for the proposed estimator, and we develop the theory for both kernel and series implementations of the first stage. The two implementations share a common limiting distribution, attaining the semiparametric efficiency bound. Simulation evidence illustrates the consequences of the identification gap and demonstrates that the proposed estimator outperforms standard GMM in finite samples.

\end{abstract}

\thispagestyle{empty}
\clearpage

\setcounter{page}{1} 

\section{Introduction}
{
In demand estimation for differentiated products, the \citet*{berry1995automobile} or BLP approach has become the workhorse framework of empirical industrial organization (IO). \cite{nevo2000mergers,nevo2001measuring}, for instance, examines competition and merger effects in the ready-to-eat cereal market, \cite{petrin2002quantifying} measures consumer welfare generated by the minivan's entry, \cite{berry1999voluntary} assess how voluntary export restraints reshaped the U.S. auto market, and \cite{goldberg2001evolution} document price dispersion across European car markets, among many other applications.

However, estimation and identification in this literature rest on different moment conditions. The standard BLP estimation approach uses the generalized method of moments (GMM), which imposes an unconditional moment restriction on a fixed set of instruments. The unconditional moment using optimal-instrument constructions or feasible approximations to them is commonly applied in practice (see, e.g., \cite{berry2021foundations}). By contrast, identification is assumed on the conditional moments. The conditional moments provide basis for the optimal instrument and imply the unconditional moments, but estimation based on the unconditional moments only exploits a necessary condition. It has been pointed out by the seminal work of \citet{dominguez2004consistent} that this gap can be consequential, leading to potential failure of identification by unconditional moments even when the instrument is optimal. Whether it is consequential in the BLP model, however, has not been examined, and the tension is not widely recognized in the applied literature.

In this paper, we directly show that this identification gap can arise in the BLP model. Specifically, we construct an example in which the conditional moment restriction point-identifies the structural parameters, whereas the population GMM unconditional moments with the optimal instrument admit an additional solution. As such, it raises caution for the common BLP practice of using GMM with the unconditional moment restriction, even with the optimal instrument. Motivated by this finding, we propose a semiparametric estimator that directly implements the conditional moment restriction.

The answer is not a generic feature of all conditional-moment models. It is well known that the conditional restriction implies the unconditional one, so GMM imposes a necessary condition for identification; what is not immediate is whether that necessary condition is also sufficient in the BLP model. In a linear model, a full-rank collection of linearly independent unconditional moments with dimension equal to the parameters delivers identification \citep{dominguez2004consistent}. This paper's contribution is twofold. First, it provides an explicit example showing that the BLP model can fail to be identified by the unconditional moments. The example demonstrates how the nonlinearity of the BLP inverse can generate additional solutions to the population GMM moments. Second, the paper establishes properties of estimation and inference by directly building on the conditional moments under BLP-specific conditions.

To make the point concrete, we consider a simplified BLP model with $J=2$ products, where the market share of product $j$ in market $t$ is given by
\begin{align*}
    s_{jt} = \int_{-\infty}^{+\infty} \frac{\exp\left(p_{jt}\alpha_0 + p_{jt} \sigma_0 v  \right)}{1 + \sum_{i=1}^2 \exp\left( p_{it}\alpha_0 + p_{it} \sigma_0 v \right) } \phi (v) dv.
\end{align*}
Here, $p_{jt}$ is the price of product $j$ in market $t$, $\sigma_0$ is the standard deviation of the random coefficient, $\alpha_0$ is the price coefficient, $v$ captures consumer heterogeneity, and $\phi(\cdot)$ is the probability density function of the standard normal distribution. The parameters of interest are $(\sigma_0,\alpha_0)$, and an instrument $z_{jt}$ is also observed alongside $s_{jt}$ and $p_{jt}$.  

For any candidate $\sigma$, not necessarily equal to $\sigma_0$, the share system can be inverted for the mean utilities, which delivers the conditional moment restriction $\E\left[y_j \left(\sigma, p_{1t}, p_{2t}\right) - \alpha p_{jt} \mid z_{jt} \right] =0$. Two features of the notation are worth flagging. The inverse depends on $\sigma$ but not on $\alpha$. And because there is no unobserved characteristic here, the shares are a deterministic function of the two prices at the true parameter, $s_t=s(p_{1t},p_{2t})$; substituting them out gives the inverse as $y_j(\sigma,p_{1t},p_{2t})$, suppressing the shares that it in fact depends on. The following example shows that the unconditional moment built on the optimal instrument can fail to identify $(\sigma_0,\alpha_0)$ even though the conditional restriction identifies them.
\begin{example}\label{eg:counter}
Let $J=2$. Set $\sigma_0 =1$, $\alpha_0 =-2$. Assume $p_{1t}$ and $p_{2t}$ follow the same discrete distribution: $\Pr\left(p_{jt} = -4\right) = 0.6$, $\Pr\left(p_{jt}=2\right) = 0.3$, $\Pr\left(p_{jt}=3\right) = 0.1$. Further assume $z_{jt} = p_{jt}$, i.e., the price is exogenous. The optimal instrument based on $z_{jt}$ is \(z_{jt}^\ast = \left(p_{jt}, \E\left[\frac{\partial y_j}{\partial \sigma }\left(\sigma_0, p_{1t}, p_{2t}\right) \mid p_{jt} \right] \right)'\).
The unconditional moment $\E\left[\left(y_j \left(\sigma, p_{1t}, p_{2t}\right) - \alpha p_{jt}\right) z_{jt}^\ast\right]=0$ obtains identification of $(\sigma_0, \alpha_0)$ iff $g_2\left(\sigma\right) = 0$ implies $\sigma = \sigma_0$, where
\begin{align*}
    g_2 \left(\sigma \right)&= \E\left[y_j\left(\sigma, p_{1t}, p_{2t}\right)  \E\left[\frac{\partial y_j}{\partial \sigma }\left(\sigma_0, p_{1t}, p_{2t}\right) \mid p_{jt} \right] \right] \\
    &\quad - \frac{\E\left[ y_j\left(\sigma, p_{1t}, p_{2t}\right) p_{jt}  \right]}{ \E\left[p_{jt}^2\right]} \E\left[ p_{jt} \E\left[\frac{\partial y_j}{\partial \sigma }\left(\sigma_0, p_{1t}, p_{2t}\right) \mid p_{jt} \right]\right].
\end{align*}
Although the close-form expression of $g_2\left(\sigma\right)$ is not available, Figure \ref{fig:example:optimal_IV} plots the numerical evaluation.
\begin{figure}[t]
    \begin{subfigure}{0.5\textwidth}
        \centering
        \includegraphics[width=0.7\linewidth]{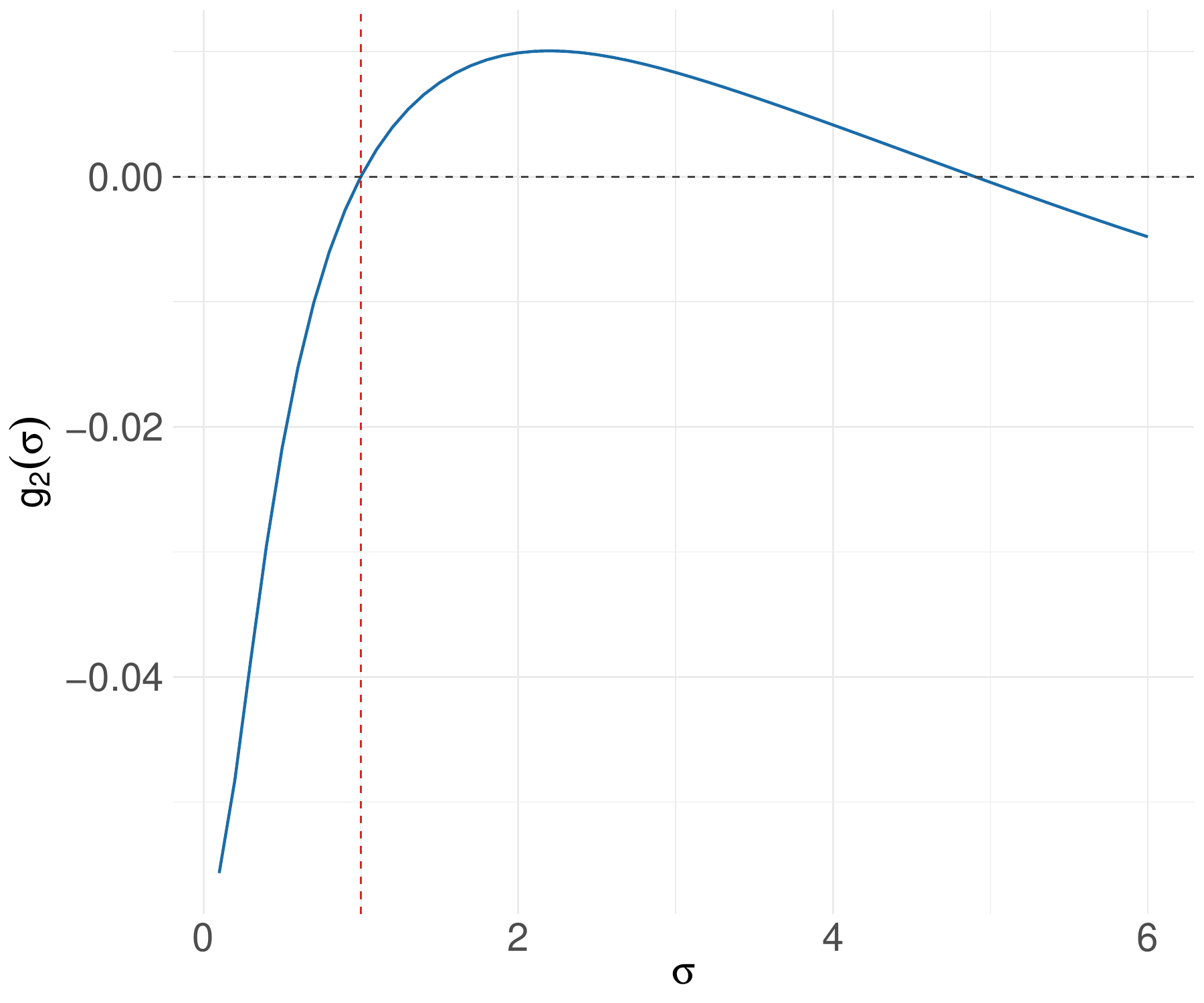}
        \caption{$g_2\left(\sigma\right)$ with Optimal IV}
        \label{fig:example:optimal_IV}
    \end{subfigure}%
    \begin{subfigure}{0.5\textwidth}
        \centering
        \includegraphics[width=0.7\linewidth]{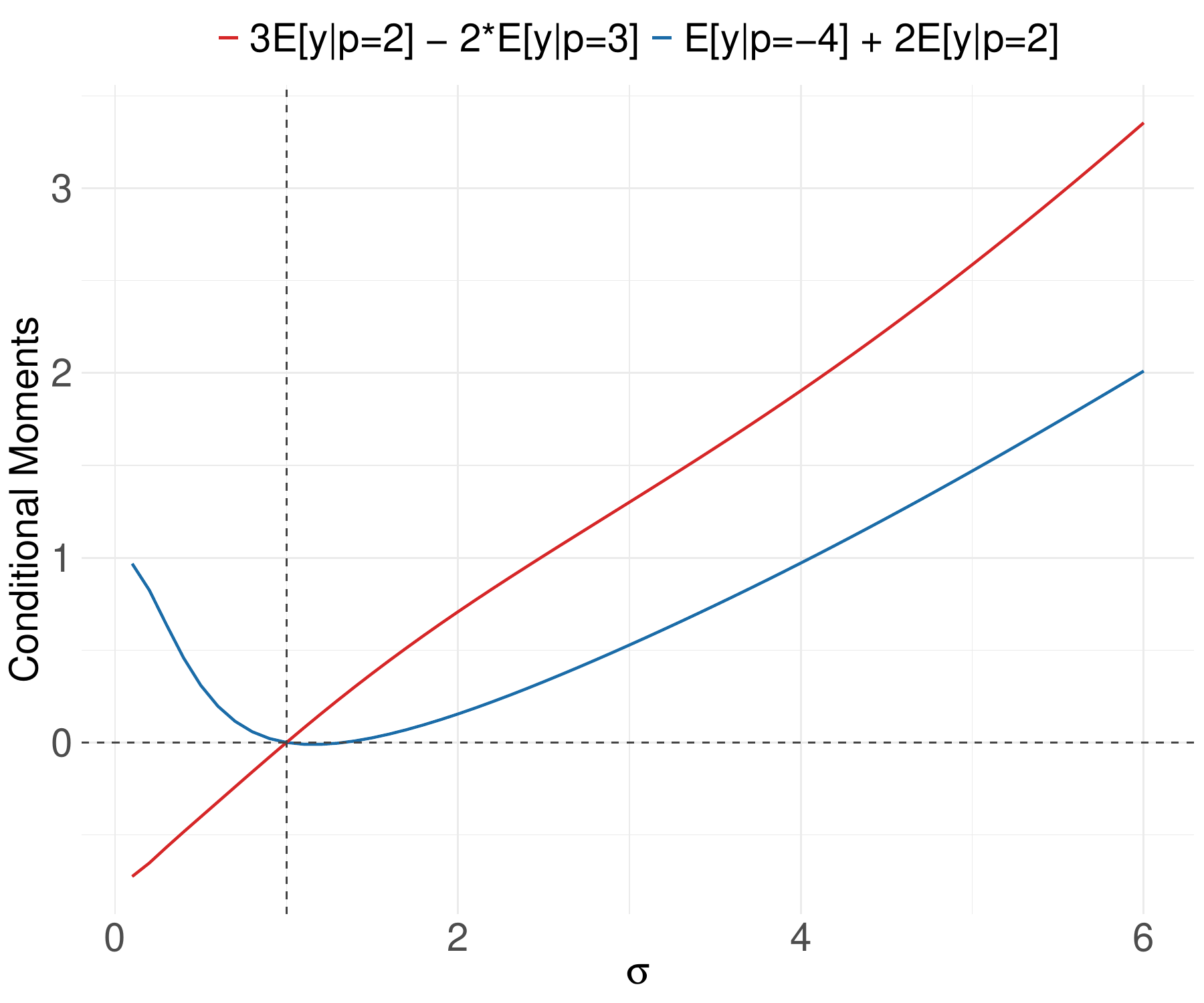}
        \caption{Two Functions of $\sigma$ from the Conditional Moments}
        \label{fig:example:conditional}
    \end{subfigure}
\end{figure}
The function has a second zero for $g_2\left(\sigma\right)$ in addition to $\sigma_0 =1$, so idenfication based on the unconditional moment fails.

By contrast, the conditional moment $\E\left[y_j \left(\sigma, p_{1t}, p_{2t}\right) - \alpha p_{jt} \mid z_{jt} \right] =0$ identifies $(\sigma_0, \alpha_0)$. Indeed, $(\sigma_0, \alpha_0)$ is identified if an only if $\sigma_0$ is the unique zero for the following two functions jointly:
\begin{align*}
    \E\left[ y_1 \left(\sigma, p_{1t}, p_{2t} \right) \mid p_{1t} = -4 \right] + 2 \E\left[ y_1 \left(\sigma, p_{1t}, p_{2t} \right) \mid p_{1t} = 2 \right] &=0, \; \text{and}\\
    3\E\left[ y_1 \left(\sigma, p_{1t}, p_{2t} \right) \mid p_{1t} = 2 \right] - 2 \E\left[ y_1 \left(\sigma, p_{1t}, p_{2t} \right) \mid p_{1t} = 3 \right] &=0.
\end{align*}
Figure \ref{fig:example:conditional} plots their numerical evaluations. Their only common zero is $\sigma_0 =1$, so the conditional moment restriction identifies the true parameters.
\end{example}

This paper proposes a semiparametric approach, built directly on the identification assumption, for estimation and inference in the BLP model. The approach estimates the conditional expectations and conditional variance-covariance matrix nonparametrically, and then chooses the parameters from a weighted minimum distance method. It has two advantages. First, estimation and identification rest on the same condition: it imposes no auxiliary moment restriction distinct from the identifying one, thereby avoiding the failure exposed by Example \ref{eg:counter}. 
Second, it is flexible in the first stage. Any nonparametric estimator satisfying suitable regularity conditions can be used to estimate the conditional expectation function, including machine-learning methods. 
The standard GMM approach is nested as a special case of this nonparametric procedure.
When the conditional expectation is estimated by a linear projection onto a fixed, finite set of instruments, the method reduces to standard GMM, as noted by \cite{ai2003efficient}, while sieve estimation of the conditional expectation corresponds to GMM with a growing set of instruments.

We establish consistency and asymptotic normality for the proposed estimator, which delivers valid inference.
We consider two nonparametric implementations, sieve estimation and kernel estimation, and present simulation evidence that the estimator outperforms the standard GMM approach in finite samples when the identification of the unconditional moments is unclear. As a practical takeaway, when the conditional moment restriction is assumed, we recommend estimation directly from the that restriction for the BLP model, rather than using it to construct optimal instruments and then estimating from the unconditional moments those instruments generate.
}

\paragraph{Related literature.}
This paper connects to several strands of the literature on the identification and estimation of differentiated-products demand.

{Our starting point is the literature on the nonparametric identification of demand from aggregate data. \citet{berry2014identification} use a completeness-type condition on the instruments---of the form $\E[f(s_t,p_t)\mid z_t,x_t]=0$ if and only if $f=0$---to identify demand from a conditional moment restriction. We retain a parametric random-coefficients specification and assume that the conditional restriction identifies the structural parameters. Our concern is whether the fixed, finite collection of unconditional moments used in conventional GMM preserves that identification. We show through a counterexample that it need not, even when the moments are formed with the optimal instrument, and develop an estimator that directly implements the identifying conditional restriction.}

A large literature studies the estimation of BLP-type models by the generalized method of moments and, in particular, the choice of instruments used to form the unconditional moments. \cite{chamberlain1987asymptotic} characterizes the efficient instruments for a conditional moment model, and a subsequent literature develops feasible approximations to them: \cite{berry1999voluntary} and \cite{reynaert2014improving} document the gains from approximating the optimal instruments. \cite{gandhi2019measuring} propose differentiation instruments that can improve the approximating instruments. \cite{conlon2020best,conlon2023incorporating} compare and implement optimal instrument sets in practice, and \cite{donald2009choosing} study how many approximating instruments to include. Our contribution is complementary but distinct: this literature seeks better \emph{finite} instrument sets within the unconditional-moment GMM framework, whereas we show that the selected unconditional moments need not preserve identification by the conditional moment restriction, even when the selected instruments are optimal in the sense of efficiency.

Methodologically, we build on the literature on sieve estimation of conditional moment restrictions, in particular the sieve minimum-distance framework of \cite{ai2003efficient}. Although that framework is not developed for the BLP model, it applies directly once demand is written as a conditional moment restriction in $\theta$, and it nests standard GMM as the special case in which the conditioning information is projected onto a fixed, finite set of instruments. Our estimator can be viewed as an instance of this framework tailored to the BLP inverse, with the number of approximating terms allowed to grow with the sample.

{This paper's approach is related to but differs from the literature using the integrated conditional moment method; see, e.g., \citet{bierens1982consistent,lavergne2013smooth,antoine2014conditional,escanciano2018simple}. These papers replace the conditional moment restriction with an equivalent continuum of unconditional moment restrictions, and under suitable conditions the same construction could conceptually be applied to the BLP model. Yet to our knowledge it has not been discussed in the previous literature, and is beyond the scope of this paper. Establishing the corresponding theory for integrated conditional-moment estimators in the BLP setting and comparing their statistical and computational properties with our approach are directions for future research.}

Our approach is distinct from the recent literature that estimates BLP-type demand allowing a nonparametric distribution of consumer heterogeneity. \cite{wang2023sieve} proposes a sieve BLP estimator that nonparametrically recovers the distribution of random coefficients, and \cite{fox2012random} and \cite{fox2016nonparametric} establish nonparametric identification of that distribution under support and regularity conditions that differ from the standard BLP environment. \cite{lu2023semi} develop a semiparametric estimator and recover the unknown distribution of random coefficients but are under the large-$J$ asymptotic framework. We instead retain the parametric random-coefficient specification and target $(\sigma,\alpha)$ directly with $J$ fixed; our object of interest is the estimation--identification gap rather than the flexibility of the heterogeneity distribution. Finally, our contribution differs from work that accelerates the computation of the standard BLP estimator, such as the MPEC reformulation of \cite{dube2012improving} and the accelerated iteration of \cite{lee2015computationally}: those papers compute a given estimator more efficiently, whereas we study a different population criterion and estimation strategy. \cite{salanie2022fast} construct tractable approximations based on low-order Taylor expansions of the demand inverse. Our approach instead retains the original demand inverse and estimates the relevant conditional expectations nonparametrically.

\paragraph{Organization.} 
The remainder of this paper is structured as follows. 
Section 2 discusses the framework and estimation. 
Section 3 presents the general theory for estimation and inference.
Section 4 summarizes the Monte Carlo simulation results, and Section 5 concludes.
All proofs and lemmas are collected in the Appendix.

\paragraph{Notations.} 
$\normm{A}:=\sqrt{tr(A'A)}$.
$\E_T:= \frac{1}{T}\sum_{t=1}^T$ and $\E_J:= \frac{1}{J}\sum_{j=1}^J$.
$a_n\asymp b_n$ means $a_n=O(b_n)$ and $b_n=O(a_n)$.
$a_n\lesssim b_n$ means $a_n=O(b_n)$.
$\normm{a} = (a'a)^{1/2}$ as the Euclidean norm of a vector $a$.
$\normm{a}_\infty=\max_i |a_i|$ as the sup-norm of a vector $a$.
$\normm{A}_{op}=\sup_{\normm{b}=1}\normm{Ab}=\max\{|\lambda_{\max}(A)|,|\lambda_{\max}(-A)|\}$ as the operator norm of a matrix $A$.
$\normm{f(z)}_{L^2}=\E[f(z)^2]^{1/2}$ as the $L^2$ norm of a function $f(z)$.

\section{Model and Estimation}
\subsection{Model}\label{sec:model}
We adopt the random-coefficients discrete-choice model of \cite{berry1995automobile}.
Consider a sequence of markets indexed by $t=1,\dots,T$.\footnote{The definition of a market is application-specific: with cross-sectional data markets are typically geographic areas, with panel data they are time periods, and in many cases a combination of the two.}
In each market a unit mass of consumers chooses among $J$ inside products, $j=1,\dots,J$, or an outside option $j=0$.
Product $j$ in market $t$ is summarized by the triple $(x_{jt}',p_{jt},\xi_{jt})$: a vector of $d_x$ exogenous characteristics $x_{jt}\in\R^{d_x}$, a scalar price $p_{jt}\in\R$, and a scalar characteristic $\xi_{jt}\in\R$ that is observed by consumers and firms but not by the econometrician.
Because prices are determined in equilibrium, they are correlated with $\xi_{jt}$ and are treated as endogenous.
Identification therefore relies on a vector of instruments $z_{jt}\in\R^{d_z}$ satisfying the conditional moment restriction
\begin{equation}\label{eq:exog}
    \E[\xi_{jt}\mid z_{jt}]=0,
    \qquad
    \sup_{1\le j\le J}\E[\xi_{jt}^{2}\mid z_{jt}]<C<\infty,
\end{equation}
where the exogenous characteristics $x_{jt}$ are themselves included among the instruments $z_{jt}$.

Given this market structure, we turn to the specification of preferences.
The indirect utility that consumer $i$ derives from product $j$ in market $t$ is
\begin{equation}\label{eq:utility}
    u_{ijt}=\underbrace{x_{jt}'\beta_0+p_{jt}\alpha_0+\xi_{jt}}_{\textstyle y_{jt}}+x_{jt}^{(1)\prime}\beta_i+\epsilon_{ijt},
\end{equation}
and the utility of the outside option is normalized to $u_{i0t}=\epsilon_{i0t}$.
The idiosyncratic terms $\epsilon_{ijt}$ are independent and identically distributed across $i$, $j$, and $t$, following the Type-I extreme-value distribution.
The taste for the characteristic $x_{jt}^{(1)}$ is consumer-specific and $\beta_i$ is a random variable drawn from a distribution $F(\,\cdot\,;\sigma_0)$ known up to a finite-dimensional parameter $\sigma_0\in\R^{d_\sigma}$.
The bracketed term $y_{jt}=x_{jt}'\beta_0+p_{jt}\alpha_0+\xi_{jt}$ is the \emph{mean utility}: the linear term of product $j$ in market $t$.

We collect the parameters in $\theta_0=(\sigma_0,\alpha_0,\beta_0')'\in\Theta$, where $\Theta$ is compact and $\theta_0$ lies in its interior.
Throughout, $\beta_i$, $\epsilon_{ijt}$, and $(\xi_{jt},x_{jt},p_{jt})$ are mutually independent, and $(\beta_i,\epsilon_{ijt})$ are i.i.d.\ across consumers.

Aggregating these preferences over consumers delivers the market shares predicted by the model.
Consumer $i$ purchases the product that maximizes her utility, so her choice of product $j$ is described by
\begin{equation*}
    d_{ijt}=
    \begin{cases}
        1 & \text{if } u_{ijt}>u_{ikt}\text{ for all }k\neq j,\\
        0 & \text{otherwise.}
    \end{cases}
\end{equation*}
Integrating the individual choices over the extreme-value shocks and the distribution of $\beta_i$ yields the market share of product $j$,
\begin{align}
    s_{jt}
    &=f_{js}(x^{(1)}_t,y_t,\sigma_0)
    =\int d_{ijt}\,\mathrm{d}F(\epsilon_{i0t},\dots,\epsilon_{iJt})\,\mathrm{d}F(\beta_i;\sigma_0)\notag\\
    &=\int\frac{\exp(y_{jt}+x_{jt}^{(1)\prime}\beta_i)}{1+\sum_{k=1}^{J}\exp(y_{kt}+x_{kt}^{(1)\prime}\beta_i)}\,\mathrm{d}F(\beta_i;\sigma_0),
    \label{eq:share}
\end{align}
where the second line follows from integrating the Type-I extreme-value shocks analytically to obtain the familiar logit kernel and then averaging over the random coefficient $\beta_i$.
Stacking the products, the vector of shares in market $t$ is
\begin{equation}\label{eq:share-vec}
    s_t=f_s(x^{(1)}_t,y_t,\sigma_0)
    :=\bigl(f_{1s}(x^{(1)}_t,y_t,\sigma_0),\dots,f_{Js}(x^{(1)}_t,y_t,\sigma_0)\bigr)',
\end{equation}
with market-level vectors $y_t=(y_{1t},\dots,y_{Jt})'$, $x^{(1)}_t=(x_{1t}^{(1)},\dots,x_{Jt}^{(1)})'$, and $\xi_t=(\xi_{1t},\dots,\xi_{Jt})'$ defined analogously.
Equation~\eqref{eq:share} is the random-coefficients logit share, and \eqref{eq:share-vec} defines the map from parameters and characteristics to shares.

The share system just derived can be inverted to recover the mean utilities, which is the property underlying our estimator.
\cite{berry1994estimating} and \cite{berry1995automobile} establish that this map can be inverted in the mean utilities.
For any fixed $\sigma$, any characteristics vector $x^{(1)}_t$, and any share vector $s_t$ in the interior of the unit simplex, there exists a unique $y\in\R^{J}$ solving $s_t=f_s(x^{(1)}_t,y,\sigma)$; that is, $f_s(x^{(1)}_t,\cdot\,,\sigma)$ is invertible.\footnote{Equivalently, \cite{berry1994estimating} shows that for given $x^{(1)}_t$, $s_t$, $x_t=(x_{1t}',\dots,x_{Jt}')'$, and $(\beta,\alpha,\sigma)$ there is a unique $\xi\in\R^{J}$ solving $s_t=f_s(x^{(1)}_t,x_t\beta+p_t\alpha+\xi,\sigma)$. Uniqueness of $y$ is the special case obtained by setting $\beta=0$ and $\alpha=0$.}
Denote the inverse by $f_s^{-1}(s_t,x^{(1)}_t,\sigma)\in\R^{J}$, and define the imputed mean utilities and demand shocks
\begin{align}
    y(\sigma,w_t)&:=f_s^{-1}(s_t,x^{(1)}_t,\sigma),\label{eq:inv}\\
    \xi(\theta,w_t)&:=y(\sigma,w_t)-x_t'\beta-p_t\alpha,\label{eq:xi}
\end{align}
where $w_t$ denotes all the data in market $t$, $\theta=(\sigma,\alpha,\beta')'$ and $x_t'\beta$ denotes the vector with $j$-th element $x_{jt}'\beta$.
Let $y_{j}(\sigma,w_t)$ and $\xi_{j}(\theta,w_t)$ be the $j$-th elements of $y(\sigma,w_t)$ and $\xi(\theta,w_t)$.
By construction these imputations recover the truth at $\theta_0$: $y_{j}(\sigma_0,w_t)=y_{j t}$ and $\xi_{j}(\theta_0,w_t)=\xi_{j t}$.

The identification assumption (the conditional moment restriction) is
\begin{align}
    \mu_{y,j}(z_{jt};\sigma)-\alpha\,\mu_{p,j}(z_{jt})-x_{jt}'\beta=0 \quad &\text{for all } j\; a.s.\text{ iff}\quad \theta=\theta_0 \label{eq:id1}\\
        \qquad\Longleftrightarrow\qquad& \nonumber\\
    \E_J\E\Bigl[\mu_{y,j}(z_{jt};\sigma)-\alpha\,\mu_{p,j}(z_{jt})-x_{jt}'\beta\Bigr]^2=0 \quad &\text{iff}\quad \theta=\theta_0.\label{eq:id2}
\end{align}
Restriction~\eqref{eq:id2} is the population relationship on which our estimator is built.

\subsection{Estimation}

The estimation proceeds in two steps.
\begin{enumerate}
    \item[Step 1:] For each candidate $\sigma$, estimate the conditional expectations $\mu_{y,j}(z_{jt};\sigma)=\E[y_{j}(\sigma,w_t)\mid z_{jt}]$ and $\mu_{p,j}(z_{jt})=\E[p_{jt}\mid z_{jt}]$ nonparametrically, yielding $\hat\mu_{y,j}(z_{jt};\sigma)$ and $\hat\mu_{p,j}(z_{jt})$.
    \item[Step 2:] Choose $\theta=(\sigma,\alpha,\beta)$ to minimize the sample criterion
    \begin{equation}\label{eq:est}
        \hat{\theta}=\arg\min_{\theta\in\Theta}\sum_{j,t}\big(\hat\mu_{y,j}(z_{jt};\sigma)-\alpha\,\hat\mu_{p,j}(z_{jt})-x_{jt}'\beta\big)^2\,\hat\Lambda_j(z_{jt}),
    \end{equation}
    where $\hat\Lambda_j$ is an estimated weight, constructed below from a nonparametric estimate of the conditional variance $\Sigma_{j}(z_{jt})=\E[\xi_{j}^2(\theta_0,w_t)\mid z_{jt}]$ and a trimming factor that discards instruments lying close to the boundary of their support.
\end{enumerate}
The two steps mirror the population identifying restriction: Step~1 replaces the unknown conditional expectations $\mu_{y,j}$ and $\mu_{p,j}$ with nonparametric estimates, and Step~2 chooses $\theta$ so that the resulting residual $\hat\mu_{y,j}(z_{jt};\sigma)-\alpha\,\hat\mu_{p,j}(z_{jt})-x_{jt}'\beta$---the sample analogue of $\E[\xi_{j}(\theta,w_t)\mid z_{jt}]$---is as close to zero as possible.
The population weight targeted by $\hat\Lambda_j$ is $\Lambda_j(z_{jt})=\Sigma_j(z_{jt})^{-1}$, the conditional-variance weighting familiar from generalized least squares: it downweights observations with noisier unobserved product characteristics and, as we show in Section~\ref{sec:theory}, delivers the efficient estimator within this class. The theory there in fact accommodates any weight bounded away from zero and infinity; the unweighted criterion used in the preliminary step below is the case $\Lambda_j\equiv1$.
We now describe each step in detail.

\paragraph*{Step 1.}
The conditional expectations can be estimated by any nonparametric smoother; we consider two standard choices, kernel and series estimation.
A useful feature common to both is that the smoothing weights depend only on the instruments $z_{jt}$ and not on $\sigma$, so that only the regressand $y_{j}(\sigma,w_t)$ changes as $\sigma$ varies.
Re-estimating $\hat\mu_{y,j}(z_{jt};\sigma)$ across values of $\sigma$ therefore requires no re-computation of the weights, which makes the outer optimization in \eqref{eq:est} computationally light.

\emph{Kernel estimation.}
Using a kernel $K_h(\cdot)$ with bandwidth $h$, the local-constant (Nadaraya--Watson) estimators are
\begin{subequations}\label{est:kernel}
\begin{align}
    \hat\mu_{y,j}(z_{jt};\sigma)&=\frac{\sum_{l=1}^{T}K_h(z_{jl}-z_{jt})\,y_{j}(\sigma,w_{l})}{\sum_{l=1}^{T}K_h(z_{jl}-z_{jt})},\\[4pt]
    \hat\mu_{p,j}(z_{jt})&=\frac{\sum_{l=1}^{T}K_h(z_{jl}-z_{jt})\,p_{jl}}{\sum_{l=1}^{T}K_h(z_{jl}-z_{jt})},
\end{align}
\end{subequations}

where the sums run over markets $l=1,\dots,T$ and the estimators are computed product by product.
If the conditional distributions of $y_{j}(\sigma,w_t)$ and $p_{jt}$ given $z_{jt}$ do not vary across products---for instance when products are exchangeable---the regression functions $\mu_{y,j}(\cdot\,;\sigma)$ and $\mu_{p,j}(\cdot)$ are common across $j$, and pooling over products yields a more efficient estimator by enlarging the effective sample:
\begin{align*}
    \hat\mu_{y,j}(z_{jt};\sigma)&=\frac{\sum_{l=1}^{T}\sum_{j'=1}^{J}K_h(z_{j'l}-z_{jt})\,y_{j'}(\sigma,w_l)}{\sum_{l=1}^{T}\sum_{j'=1}^{J}K_h(z_{j'l}-z_{jt})},\\[4pt]
    \hat\mu_{p,j}(z_{jt})&=\frac{\sum_{l=1}^{T}\sum_{j'=1}^{J}K_h(z_{j'l}-z_{jt})\,p_{j'l}}{\sum_{l=1}^{T}\sum_{j'=1}^{J}K_h(z_{j'l}-z_{jt})}.
\end{align*}

\emph{Series estimation.}
Let $\{\phi_k(z)\}_{k=1}^{K}$ be a set of basis functions and collect them in $\phi^{K}(z_{jt})=(\phi_1(z_{jt}),\dots,\phi_{K}(z_{jt}))'$, with the number of terms $K=K_n$ growing with the sample size.
The series estimators are the fitted values of the least-squares projections onto these bases,
\begin{subequations}\label{est:series}
\begin{align}
    \hat\mu_{y,j}(z_{jt};\sigma)&=\phi^{K}(z_{jt})'\hat{\Pi}_{y,j}(\sigma),\\
    \hat\mu_{p,j}(z_{jt})&=\phi^{K}(z_{jt})'\hat{\Pi}_{p,j},
\end{align}
\end{subequations}

where $\hat{\Pi}_{y,j}(\sigma)$ and $\hat{\Pi}_{p,j}$ are the coefficient vectors from regressing $y_{j}(\sigma,w_t)$ and $p_{jt}$, respectively, on $\phi^{K}(z_{jt})$.
Because the projection is linear in the regressand, the estimated coefficients are
\begin{align*}
    \hat{\Pi}_{y,j}(\sigma)&=\big(\E_T\phi^{K}(z_{jt})\phi^{K}(z_{jt})'\big)^{-1}\E_T\phi^{K}(z_{jt})\,y_{j}(\sigma,w_t),\\
    \hat{\Pi}_{p,j}&=\big(\E_T\phi^{K}(z_{jt})\phi^{K}(z_{jt})'\big)^{-1}\E_T\phi^{K}(z_{jt})\,p_{jt}.
\end{align*}
So the design matrix is inverted once and only the second factor is updated as $\sigma$ changes---the same computational saving noted for the kernel estimator.
As with the kernel case, if the regression functions are common across products---so that $\mu_y(\cdot\,;\sigma)$ and $\mu_p(\cdot)$ do not depend on $j$---the coefficients may be estimated by stacking all products in a single least-squares regression:
\begin{align*}
    \hat{\Pi}_{y,j}(\sigma)&=\Big(\E_J\E_T\phi^{K}(z_{jt})\phi^{K}(z_{jt})'\Big)^{-1}\E_J\E_T\phi^{K}(z_{jt})\,y_{j}(\sigma,w_t),\\[4pt]
    \hat{\Pi}_{p,j}&=\Big(\E_J\E_T\phi^{K}(z_{jt})\phi^{K}(z_{jt})'\Big)^{-1}\E_J\E_T\phi^{K}(z_{jt})\,p_{jt},
\end{align*}
and the fitted values $\hat\mu_{y,j}(z_{jt};\sigma)=\phi^{K}(z_{jt})'\hat{\Pi}_{y,j}(\sigma)$ and $\hat\mu_{p,j}(z_{jt})=\phi^{K}(z_{jt})'\hat{\Pi}_{p,j}$ are formed as before.
By contrast, the product-specific estimator restricts each sum to a single $j$, replacing $\sum_{j=1}^{J}\sum_{t=1}^{T}$ with $\sum_{t=1}^{T}$ and yielding a separate coefficient vector $\hat{\Pi}_{y,j}(\sigma)$ for each product.

\paragraph*{Step 2.}
Given the first-step estimates, the criterion in \eqref{eq:est} is minimized over $\theta=(\sigma,\alpha,\beta)$.
The residual is linear in the finite-dimensional parameters $(\alpha,\beta)$ and enters $\sigma$ only through $\hat\mu_{y,j}(\cdot\,;\sigma)$.
Hence, for any fixed $\sigma$, the inner minimization over $(\alpha,\beta)$ is a weighted least-squares problem with the closed-form solution
\begin{equation*}
    \big(\hat\alpha(\sigma),\hat\beta(\sigma)\big)
    =\Big(\textstyle\sum_{j,t}\tilde{x}_{jt}\tilde{x}_{jt}'\,\hat\Lambda_j(z_{jt})\Big)^{-1}
    \sum_{j,t}\tilde{x}_{jt}\,\hat\mu_{y,j}(z_{jt};\sigma)\,\hat\Lambda_j(z_{jt}),
    \qquad
    \tilde{x}_{jt}=\big(\hat\mu_{p,j}(z_{jt}),\,x_{jt}'\big)',
\end{equation*}
so the optimization reduces to a one-dimensional search over $\sigma$ (or $d_\sigma$-dimensional when $\sigma$ is a vector), with $(\alpha,\beta)$ concentrated out.
This profiling substantially lightens the numerical burden relative to a joint search over all of $\theta$.

\paragraph*{The estimated weight.}
The weight factors into a variance estimate and a trimming indicator,
\begin{equation}\label{eq:weight-factor}
    \hat\Lambda_j(z)=\hat\Sigma_j(z)^{-1}\,\tau_T(z),
\end{equation}
and the two factors do different jobs. The first is obtained in a preliminary estimation: one solves \eqref{eq:est} with $\hat\Lambda_j(z_{jt})= \tau_T(z_{jt})$ (unweighted), forms the residuals $\hat\xi_{jt}=y_j(\tilde\sigma,w_t)-\tilde\alpha \ p_{jt}-x_{jt}'\tilde\beta$, where $\tilde\sigma,\ \tilde\alpha,\ \tilde\beta$ are the preliminary estimates, and estimates $\Sigma_j$ by smoothing $\hat\xi_{jt}^2$ on $z_{jt}$ with the same nonparametric method used in Step~1, denoted by $\hat\Sigma_j(z)$. 
The criterion \eqref{eq:est} is then re-minimized with the estimated weights. The second factor, $\tau_T$, takes the value zero for instruments close to the boundary of their support and one elsewhere for kernel estimation, while for series estimation it takes the value one for all $z_{jt}$. It is there because $\hat\Sigma_j$ and the first-step estimates are nonparametric fits: near the boundary they are computed from few effective observations, and dropping those observations is what allows the uniform control required for inference to be imposed on the interior region alone, which is crucial in the kernel estimation. The discarded region shrinks with the sample, so the trimming does not affect the limiting distribution. Both factors depend on the smoother, and we take the two in turn. Throughout, $\mathcal Z$ denotes the support of $z_{jt}$ and
\[
    d(z,\partial\mathcal Z)=\inf_{z'\in\partial\mathcal Z}\|z-z'\|_\infty
\]
is the sup-norm distance from $z$ to the boundary of $\mathcal Z$, which for $z$ in the interior of $\mathcal Z$ is the distance to the boundary $\partial\mathcal Z$.
 
\emph{Kernel.} The conditional variance is the Nadaraya--Watson fit of the squared preliminary residuals,
\[
    \hat\Sigma_j(z)=\frac{\sum_{l=1}^{T}K_h(z_{jl}-z)\,\hat\xi_{jl}^{\,2}}{\sum_{l=1}^{T}K_h(z_{jl}-z)} .
\]
Let $R<\infty$ be the radius of the support of the kernel, so that $K(u)=0$ whenever $\|u\|_\infty>R$ and $K_h(\cdot-z)$ is supported on the window $\{z':\|z'-z\|_\infty\le Rh\}$. The trimming is then
\[
    \tau_T(z)=\mathbf{1}\{d(z,\partial\mathcal Z)\ge Rh\},
\]
which retains exactly those points whose smoothing window lies inside $\mathcal Z$. Since the discarded collar has width $Rh\to0$, the fraction of observations it removes vanishes.
 
\emph{Series.} The conditional variance is the least-squares projection of the squared preliminary residuals on the same dictionary,
\[
    \hat\Sigma_j(z)=\phi^{K}(z)'\hat\Pi_{\xi^2,j},
    \qquad
    \hat\Pi_{\xi^2,j}=\big(\E_T\phi^{K}(z_{jt})\phi^{K}(z_{jt})'\big)^{-1}\E_T\phi^{K}(z_{jt})\,\hat\xi_{jt}^{\,2},
\]
formed with the same design matrix as in Step~1, so no additional inversion is required. Here no trimming is needed and one sets $\tau_T\equiv1$, so that $\hat\Lambda_j=\hat\Sigma_j^{-1}$: a series fit is a global projection rather than a local average, so it has no boundary window to truncate.
 
Finally, the smoothing parameters---the bandwidth $h$ for the kernel estimator and the number of basis terms $K$ for the series estimator---govern the bias--variance trade-off of the first step; the rate conditions they must satisfy for valid inference on $\theta$ are given in Section~\ref{sec:theory}.

\section{Asymptotic Theory}\label{sec:theory}

\subsection{General Theory}\label{sec:theory-general}

We first fix notation common to both results.  Consider a compact parameter
space $\Theta$ and support set of $z_{jt}$ as $\mathcal Z$.  We maintain the condition that the linear characteristics
are their own projection onto the instruments,
$\mathbb E[x_{jt}\mid z_{jt}]=x_{jt}$.  Under this condition the identifying
restriction takes the form
\begin{equation}\label{eq:theory-id}
  \mu_{y,j}(z_{jt};\sigma_0)=\alpha_0\,\mu_{p,j}(z_{jt})+x_{jt}'\beta_0.
\end{equation}
 
The asymptotic analysis requires the derivatives of $\mu_{y,j}$ in the
nonlinear parameter.  We denote the first and second derivatives by
\[
  \dot\mu_{y,j}(z;\sigma)=\tfrac{\partial}{\partial\sigma}\mu_{y,j}(z;\sigma)
  =\mathbb E\Bigl[\tfrac{\partial}{\partial\sigma}y_j(\sigma,w)\Bigm| z\Bigr],
  \qquad
  \ddot\mu_{y,j}(z;\sigma)=\tfrac{\partial^2}{\partial\sigma^2}\mu_{y,j}(z;\sigma)
  =\mathbb E\Bigl[\tfrac{\partial^2}{\partial\sigma^2}y_j(\sigma,w)\Bigm| z\Bigr].
\]
The corresponding nonparametric estimators of $\mu_{y,j}(z;\sigma)$,
$\dot\mu_{y,j}(z;\sigma)$ and $\ddot\mu_{y,j}(z;\sigma)$ are written
$\hat\mu_{y,j}(z;\sigma)$, $\hat{\dot\mu}_{y,j}(z;\sigma)$ and
$\hat{\ddot\mu}_{y,j}(z;\sigma)$, respectively.  It is convenient to
abbreviate the direction,
\[
  \Gamma_{jt}\;:=\;\bigl(\dot\mu_{y,j}(z_{jt};\sigma_0),\,-\mu_{p,j}(z_{jt}),\,-x_{jt}\bigr)'
  \;=\;\nabla_\theta\, r_{jt}(\theta_0),
\]
with $r_{jt}(\theta)$ the residual term, defined as
$r_{jt}(\theta)=\mu_{y,j}(z_{jt};\sigma)-\alpha\,\mu_{p,j}(z_{jt})-x_{jt}'\beta$ for the population residual; under \eqref{eq:theory-id}, $r_{jt}(\theta_0)=0$ for every $(j,t)$.

\paragraph{Weighting and trimming.} The criterion (\ref{eq:est}) is built from two
further objects.  The first is a population weight
$\omega_j:\mathcal Z\to(0,\infty)$, $j=1,\dots,J$, together with an estimator
$\hat\omega_j$ of it.  We do \emph{not} require $\omega_j$ to be the
conditional variance $\Sigma_j$: any weight bounded away from zero and
infinity is admissible, $\omega_j\equiv1$ giving the unweighted estimator and
$\omega_j=\Sigma_j$ the efficient one (under suitable conditions).  The second is a trimming indicator.
Letting $\varepsilon_T\to0$ be a nonrandom sequence, put
\[
  \mathcal Z_T\;:=\;\bigl\{z\in\mathcal Z:\ d(z,\partial\mathcal Z)\ \ge\ \varepsilon_T\bigr\},
  \qquad
  \tau_T(z)\;:=\;\mathbf 1\bigl\{d(z,\partial\mathcal Z)\ \ge\ \varepsilon_T\bigr\},
\]
so that $\tau_T(z)=0$ for $z$ within $\varepsilon_T$ of the boundary and
$\tau_T(z)=1$ for $z$ in the interior region $\mathcal Z_T$, and write
\[
  p_T\;:=\;\max_{1\le j\le J}\Pr\bigl(\tau_T(z_{jt})=0\bigr)
\]
for the probability mass discarded.  The leading case, used for the kernel
implementation in Section \ref{sec:theory-kernel}, is $\varepsilon_T=Rh$, where $h$ is the
bandwidth and $R$ is the radius of the support of the kernel, so that for
every retained $z$ the smoothing window $\{z':\|z'-z\|_{\infty}\le Rh\}$ lies entirely
inside $\mathcal Z$; the series implementation in Section \ref{sec:theory-series} needs no
trimming and takes $\varepsilon_T=0$, i.e.\ $\tau_T\equiv1$ and $p_T=0$.
Finally, collect the weights as they enter the criterion,
\[
  \Lambda_j(z)\;:=\;\omega_j(z)^{-1},
  \qquad
  \hat\Lambda_j(z)\;:=\;\hat\omega_j(z)^{-1}\,\tau_T(z),
  \qquad
  a_{jt}\;:=\;\Lambda_j(z_{jt})\,\Gamma_{jt},
\]
so that $\Lambda_j$ is the infeasible target, $\hat\Lambda_j$ the feasible
trimmed weight, and $\hat\Lambda_j(z)=0$ whenever $z\notin\mathcal Z_T$.
 
It is convenient to treat these conditional-mean functions as an
infinite-dimensional nuisance parameter, which we collect in
$\eta=(\eta_p,\eta_y,\dot\eta_y,\eta_\Lambda)$, with true value
$\eta_0=(\mu_p,\mu_y,\dot\mu_y,\Lambda)$, where $\mu_p=(\mu_{p,1},\dots,\mu_{p,J})$,
$\mu_y=(\mu_{y,1},\dots,\mu_{y,J})$, $\dot\mu_y=(\dot\mu_{y,1},\dots,\dot\mu_{y,J})$
and $\Lambda=(\Lambda_1,\dots,\Lambda_J)$; the last coordinate is the weight
as it enters the criterion, so the trimming is carried inside the nuisance
parameter.  Define the moment function
\begin{equation}\label{eq:moment-fn}
  g(w_t,\theta,\eta)=\mathbb E_J\left[
    \bigl(\eta_{y,j}(z_{jt};\sigma)-\alpha\,\eta_{p,j}(z_{jt})-x_{jt}'\beta\bigr)\,
    \eta_{\Lambda,j}(z_{jt})
    \begin{pmatrix}\dot\eta_{y,j}(z_{jt};\sigma)\\[2pt] -\eta_{p,j}(z_{jt})\\[2pt] -x_{jt}\end{pmatrix}
  \right],
\end{equation}
The estimator in (\ref{eq:est}) is then a solution to the sample first-order
condition
\[
  \mathbb E_T\,g(w_t,\hat\theta,\hat\eta)=0,
  \qquad
  \hat\eta=(\hat\mu_p,\hat\mu_y,\hat{\dot\mu}_y,\hat\Lambda).
\]

\begin{condition}\label{cond:regularity}
\begin{enumerate}
\item\label{cond:independence} (Independence) The markets $t=1,\dots,T$ are independent and
identically distributed draws of
$w_t=(z_{jt},x_{jt},p_{jt},s_{jt},\xi_{jt})_{j=1}^J$, with $z_{jt}\supset x_{jt}$.
\item\label{cond:cons-reg} (Regressors and parameter space) $\|\xi_{jt}\|_\infty$,
$\|x_{jt}\|_\infty$, $\|z_{jt}\|_\infty\le C$ and $|p_{jt}|\le C$ almost
surely; $\Theta$ is compact and $\theta_0$ lies in its interior.
\item\label{cond:cons-weight} (Weighting) There exist constants $0<c\le C<\infty$ such that
$c\le\Lambda_j(z)=\omega_j(z)^{-1}\le C$ for all $z$ and $j$.
\item\label{cond:cons-id} (Identification) $\theta_0$ uniquely minimizes the population criterion
$Q(\theta)=\mathbb E_J\mathbb E\bigl[r_{jt}(\theta)^2\,\Lambda_j(z_{jt})\bigr]$.
\item\label{cond:cons-bound} (Smoothness) $\mu_{y,j},\mu_{p,j}$ are bounded uniformly and continuous
in $(\sigma,z)$ and $z$ respectively for each $j$.
\end{enumerate}
\end{condition}

\begin{condition}\label{cond:consistency}
\begin{enumerate}
\item\label{cond:cons-rate} (First-stage consistency) Uniformly over $\sigma\in\Theta_\sigma$,
\[
  \sup_{\sigma\in\Theta_\sigma}\E_J\E_T\Big[\big|\hat\mu_{y,j}(z_{jt};\sigma)-\mu_{y,j}(z_{jt};\sigma)\big|^2\tau_T(z_{jt})\Big]
  +\E_J\E_T\Big[\big|\hat\mu_{p,j}(z_{jt})-\mu_{p,j}(z_{jt})\big|^2\tau_T(z_{jt})\Big]=o_p(1).
\]
\item\label{cond:cons-uniform} (Uniform convergence) Uniformly over $z\in\mathcal Z_T$ for each $j$:
\[
  \sup_{z\in\mathcal Z_T}\bigl|\hat\omega_j(z)^{-1}-\omega_j(z)^{-1}\bigr|=o_p(1),
  \qquad\text{equivalently}\qquad
  \sup_{z\in\mathcal Z}\bigl|\hat\Lambda_j(z)-\Lambda_j(z)\tau_T(z)\bigr|=o_p(1).
\]
\item\label{cond:cons-trim} (Trimming) The sequence $\varepsilon_T$ is nonrandom with
$\varepsilon_T\to0$, and $p_T=\max_j\Pr(\tau_T(z_{jt})=0)\to0$.
\end{enumerate}
\end{condition}

\begin{remark}
    Conditions~\ref{cond:regularity} and \ref{cond:consistency} are the primitives behind uniform convergence of the sample criterion to its population analogue, which is what drives consistency of $\hat\theta$.
 
Condition~\ref{cond:regularity} concerns the data and the population problem. Part~\eqref{cond:independence} treats markets as i.i.d.\ draws; dependence across products \emph{within} a market is unrestricted, since all population objects average over $j$. 
Parts~\eqref{cond:cons-reg} and \eqref{cond:cons-weight} are mild boundedness conditions: bounded product characteristics, unobserved characteristics, and instruments on a compact parameter space, and a weight bounded away from zero and infinity so that $\Lambda_j$ neither degenerates nor explodes. Part~\eqref{cond:cons-weight} does not tie the weight to the conditional variance. 
Part~\eqref{cond:cons-id} is the identification assumption \eqref{eq:id1}, restated as a unique minimizer. Since $\Lambda_j$ is bounded above and away from zero by Part~\eqref{cond:cons-weight}, $Q(\theta)=0$ if and only if $r_{jt}(\theta)=0$ almost surely for every $j$. Part~\eqref{cond:cons-bound} keeps the conditional means bounded and continuous.
 
Condition~\ref{cond:consistency} concerns the first stage. Part~\eqref{cond:cons-rate} asks only that the estimated conditional means converge in mean square, uniformly over the nonlinear parameter $\sigma$; this is a high-level requirement, verified for the kernel and series estimators in Sections~\ref{sec:theory-kernel} and \ref{sec:theory-series}. Part~\eqref{cond:cons-uniform} asks the same of the estimated weight, in sup norm, but only on the retained region $\mathcal Z_T$: this is what the trimming buys, since it is near $\partial\mathcal Z$ that a nonparametric weight is estimated from fewest effective observations and is least reliable. 
Part~\eqref{cond:cons-trim} requires only that the discarded region be asymptotically negligible; no rate is imposed. It is implied by the geometry rather than assumed on top of it: if $\mathcal Z$ is compact with a Lipschitz boundary (``sufficiently regular''), and $z_{jt}$ has a density function $f_j$ with $\sup_z f_j(z)\le\bar f<\infty$, then $Leb\{z:d(z,\partial\mathcal Z)<\varepsilon\}\le C\varepsilon$ and hence $p_T=O(\varepsilon_T)$, so Part~\eqref{cond:cons-trim} holds for any $\varepsilon_T\to0$, and in particular for the kernel choice $\varepsilon_T=Rh$; the series implementation takes $\tau_T\equiv1$ and $p_T=0$. Together, Parts~\eqref{cond:cons-uniform} and \eqref{cond:cons-trim} and Condition~\ref{cond:regularity}\eqref{cond:cons-weight} give $0\le\hat\Lambda_j(z)\lesssim C$ uniformly on $\mathcal Z$ with probability approaching one, the boundedness used throughout the proofs.
\end{remark}

\begin{theorem}\label{thm:consistency}
Under Condition~\ref{cond:regularity} and \ref{cond:consistency}, $\hat\theta\xrightarrow{p}\theta_0$.
\end{theorem}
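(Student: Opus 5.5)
The argument is the standard extremum-estimator one: show that the sample criterion converges to $Q(\theta)$ uniformly on $\Theta$, then use identification and compactness. Write the sample criterion (normalized by $JT$) as
\[
\hat Q_T(\theta)=\E_J\E_T\bigl[\hat r_{jt}(\theta)^2\hat\Lambda_j(z_{jt})\bigr],\qquad \hat r_{jt}(\theta)=\hat\mu_{y,j}(z_{jt};\sigma)-\alpha\hat\mu_{p,j}(z_{jt})-x_{jt}'\beta .
\]
Introduce two intermediate criteria:
\[
\tilde Q_T(\theta)=\E_J\E_T\bigl[r_{jt}(\theta)^2\Lambda_j(z_{jt})\tau_T(z_{jt})\bigr],\qquad \bar Q_T(\theta)=\E_J\E_T\bigl[r_{jt}(\theta)^2\Lambda_j(z_{jt})\bigr].
\]
The proof then decomposes $\sup_\Theta|\hat Q_T-Q|$ into three pieces: $\sup|\hat Q_T-\tilde Q_T|$, $\sup|\tilde Q_T-\bar Q_T|$, and $\sup|\bar Q_T-Q|$.

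\textbf{Population boundedness.} By Condition~\ref{cond:regularity}(\ref{cond:cons-reg}) and (\ref{cond:cons-bound}) and compactness of $\Theta$, $r_{jt}(\theta)$ is bounded uniformly in $(\theta,j,t)$ by some $C$. Together with $\Lambda_j\le C$, this gives
\[
|\tilde Q_T-\bar Q_T|\le C\,\E_J\E_T[1-\tau_T(z_{jt})].
\]
This bound does not depend on $\theta$, and its expectation is at most $p_T\to0$ by Condition~\ref{cond:consistency}(\ref{cond:cons-trim}); Markov's inequality then gives $o_p(1)$.

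\textbf{Uniform LLN for $\bar Q_T$.} The summand $r_{jt}(\theta)^2\Lambda_j$ is continuous in $\theta$ and bounded by an integrable constant. Continuity in $\sigma$ comes from continuity of $\mu_{y,j}$ in $(\sigma,z)$, and continuity in $(\alpha,\beta)$ holds because $r$ is linear in them. Markets are i.i.d., so a standard uniform law of large numbers on compact $\Theta$ (dominated continuous summands) gives $\sup|\bar Q_T-Q|=o_p(1)$ and also continuity of $Q$.

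\textbf{The first-stage term.} This is the main obstacle. Write $\hat r=r+\Delta$, where
\[
\Delta_{jt}(\theta)=\bigl(\hat\mu_{y,j}-\mu_{y,j}\bigr)-\alpha\bigl(\hat\mu_{p,j}-\mu_{p,j}\bigr).
\]
Then
\[
\hat r^2\hat\Lambda-r^2\Lambda\tau=r^2\bigl(\hat\Lambda-\Lambda\tau\bigr)+\bigl(2r\Delta+\Delta^2\bigr)\hat\Lambda .
\]
The first part is bounded by $C\sup_z|\hat\Lambda_j-\Lambda_j\tau_T|=o_p(1)$ by Condition~\ref{cond:consistency}(\ref{cond:cons-uniform}).

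For the second part, use that $\hat\Lambda_j(z)=0$ off $\mathcal Z_T$ and that $\hat\Lambda_j\le C$ with probability approaching one, as noted in the Remark. Hence $\hat\Lambda=\hat\Lambda\tau_T$, and by Cauchy--Schwarz the second part is bounded by
\[
C\Bigl(\E_J\E_T[\Delta^2\tau_T]\Bigr)^{1/2}\Bigl(\E_J\E_T r^2\Bigr)^{1/2}+C\,\E_J\E_T[\Delta^2\tau_T].
\]
Since $|\alpha|$ is bounded on $\Theta$, $\sup_\theta\E_J\E_T[\Delta^2\tau_T]$ is at most $2\times$ the left-hand side of Condition~\ref{cond:consistency}(\ref{cond:cons-rate}), which is $o_p(1)$. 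The key point is that the trimming indicator must be carried inside $\hat\Lambda$ so that the first-stage error is only ever evaluated where Condition~\ref{cond:consistency}(\ref{cond:cons-rate}) controls it. The estimator $\hat\mu$ is never needed near the boundary.

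\textbf{Conclusion.} Combining the three pieces gives $\sup_\Theta|\hat Q_T-Q|=o_p(1)$. Take any $\epsilon>0$ and let $\delta=\inf_{\|\theta-\theta_0\|\ge\epsilon}Q(\theta)-Q(\theta_0)$. By Condition~\ref{cond:regularity}(\ref{cond:cons-id}), continuity of $Q$, and compactness, $\delta>0$. Since $\hat Q_T(\hat\theta)\le\hat Q_T(\theta_0)$,
\[
Q(\hat\theta)-Q(\theta_0)\le 2\sup_\Theta|\hat Q_T-Q|<\delta
\]
with probability approaching one, so $\|\hat\theta-\theta_0\|<\epsilon$ with probability approaching one. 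If $\hat\theta$ is defined only as an approximate minimizer, the same argument applies with an added $o_p(1)$ slack.
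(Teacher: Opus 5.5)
Your proposal is correct and follows essentially the same route as the paper. Both arguments establish uniform convergence of the sample criterion through an infeasible criterion, apply a ULLN to the untrimmed infeasible part, use Cauchy--Schwarz against the trimmed mean-square first-stage rate, use Markov's inequality with $p_T\to0$ for the discarded mass, and finish with the standard argmin argument, which the paper outsources to Newey--McFadden, Theorem~2.1.

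The differences are only bookkeeping. You isolate the trimming cost in a separate intermediate criterion, and you use the w.p.a.1 bound on $\hat\Lambda_j$ directly instead of the paper's four-term split with the cross term $(\hat r_{jt}^2-r_{jt}^2)(\hat\Lambda_j-\Lambda_j\tau_T)$. One small slip: the constant ``$2$'' in your bound on $\sup_\theta\E_J\E_T[\Delta_{jt}^2\tau_T]$ should read $2\max\{1,\sup_\Theta\alpha^2\}$, which changes nothing.
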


\begin{condition}\label{cond:y-smooth} 
$\mu_{y,j}(z;\sigma)$ is bounded and three times continuously differentiable
in $\sigma$ on $\Theta_\sigma$ for each $z$, i.e., $\sup_{z\in\mathcal{Z}}|\partial^l_{\sigma} \mu_{y,j}(z;\sigma)|\le C$ for $l=0,1,2,3$ and some constant $C>0$.
\end{condition}

\begin{condition}\label{cond:normality}
\begin{enumerate}
\item\label{cond:deriv} The derivative estimators are the derivatives of the first-stage
estimator, $\hat{\dot\mu}_{y,j}(z;\sigma)=\partial_\sigma\hat\mu_{y,j}(z;\sigma)$
and $\hat{\ddot\mu}_{y,j}(z;\sigma)=\partial^2_\sigma\hat\mu_{y,j}(z;\sigma)$.
\item\label{cond:norm-deriv} (Derivative estimators consistency)
\[
  \sup_{\sigma\in\Theta_\sigma}\mathbb E_J\mathbb E_T\Big[\bigl|\hat{\dot\mu}_{y,j}(z_{jt};\sigma)-\dot\mu_{y,j}(z_{jt};\sigma)\bigr|^2\tau_T(z_{jt})\Big]
  +\sup_{\sigma\in\Theta_\sigma}\mathbb E_J\mathbb E_T\Big[\bigl|\hat{\ddot\mu}_{y,j}(z_{jt};\sigma)-\ddot\mu_{y,j}(z_{jt};\sigma)\bigr|^2\tau_T(z_{jt})\Big]=o_p(1).
\]
\item\label{cond:norm-msr} (Mean-square rate) The first-stage estimators converge in mean square
faster than $T^{-1/4}$:
\begin{align*}
    &\E_J\E_T\Big[\bigl|\hat\mu_{y,j}(z_{jt};\sigma_0)-\mu_{y,j}(z_{jt};\sigma_0)\bigr|^2\tau_T(z_{jt})\Big]
    +\E_J\E_T\Big[\bigl|\hat{\dot\mu}_{y,j}(z_{jt};\sigma_0)-\dot\mu_{y,j}(z_{jt};\sigma_0)\bigr|^2\tau_T(z_{jt})\Big]\\
    & +\;\E_J\E_T\Big[\bigl|\hat\mu_{p,j}(z_{jt})-\mu_{p,j}(z_{jt})\bigr|^2\tau_T(z_{jt})\Big]
    +\mathbb E_J\mathbb E_T\Bigl[\bigl(\hat\omega_j(z_{jt})^{-1}-\omega_j(z_{jt})^{-1}\bigr)^2\tau_T(z_{jt})\Bigr]=o_p(T^{-1/2}).
\end{align*}
The last term is equivalent to $\mathbb E_J\mathbb E_T(\hat\Lambda_j-\Lambda_j\tau_T)^2$.
\item\label{cond:norm-linear} (Asymptotic linearity) There is a market-level function $\delta(\cdot)$ with $\E[\delta(w_t)]=0$ and $\E\|\delta(w_t)\|^2<\infty$ such that
\[
  \E_J\E_T\Big[\big\{(\hat\mu_{y,j}-\mu_{y,j})-\alpha_0(\hat\mu_{p,j}-\mu_{p,j})\big\}\,
  \Lambda_j(z_{jt})\,\tau_T(z_{jt})\,\Gamma_{jt}\Big]
  =\E_T\delta(w_t)+o_p(T^{-1/2}),
\]
all first-stage objects being evaluated at $\sigma_0$.
\end{enumerate}
\end{condition}

\begin{remark}
Condition~\ref{cond:y-smooth} is mild: three-times differentiability of $\mu_{y,j}(z;\sigma)$ in $\sigma$ comes from smoothness of the inverse function $y_{j}(\sigma,w)$, which is inherited from smoothness of the random-coefficient density $f(b;\sigma)$ in $\sigma$.
 
Condition~\ref{cond:normality} strengthens Condition~\ref{cond:consistency} from consistency to $\sqrt T$-inference, and is likewise verified for the kernel and series estimators in Sections~\ref{sec:theory-kernel} and \ref{sec:theory-series}. 
Part~\eqref{cond:deriv} is a definitional requirement: the derivative estimators are the analytic $\sigma$-derivatives of the first-stage fit, not separately smoothed objects; this is what enables analysis on $\partial_\sigma \hat\mu_{y,j}$. 
Part~\eqref{cond:norm-deriv} then asks these derivative estimators to be mean-square consistent uniformly over $\sigma$; this is an analogy to the Condition~\ref{cond:consistency}\eqref{cond:cons-rate}, but for derivatives, which are concerned in asymptotic normality.
Parts~\eqref{cond:norm-msr} and \eqref{cond:norm-linear} are the heart of the normality argument, and both exploit the degeneracy of the moment at the truth, $r_{jt}(\theta_0)=0$, which holds for every admissible weight. Part~\eqref{cond:norm-msr} imposes the familiar $T^{-1/4}$ root-mean-square rate on every first-stage object, including the weight---on the retained region only, matching Condition~\ref{cond:consistency}\eqref{cond:cons-uniform}; this is precisely what makes the second-order (quadratic) remainder in the expansion of the moment negligible, since that remainder is a product of two first-stage errors each of order $o_p(T^{-1/4})$. 
Part~\eqref{cond:norm-linear} is the high-level asymptotic-linearity requirement that isolates the surviving first-order term; the function $\delta$ is the influence function of $\hat\theta$ (up to the Jacobian factor $-M^{-1}$), whose explicit form is derived for each estimator in Sections~\ref{sec:theory-kernel} and \ref{sec:theory-series}. 

\end{remark}

\begin{theorem}\label{thm:normality}
Suppose Condition~\ref{cond:regularity}, \ref{cond:consistency}, \ref{cond:y-smooth}, \ref{cond:normality} hold and that
\[
  M=\mathbb E_J\,\mathbb E\left[\Lambda_j(z_{jt})\,\Gamma_{jt}\Gamma_{jt}'\right]
   =\mathbb E_J\,\mathbb E\left[\omega_j(z_{jt})^{-1}
   \begin{pmatrix}\dot\mu_{y,j}(z_{jt};\sigma_0)\\[2pt] -\mu_{p,j}(z_{jt})\\[2pt] -x_{jt}\end{pmatrix}
   \bigl(\dot\mu_{y,j}(z_{jt};\sigma_0),\,-\mu_{p,j}(z_{jt}),\,-x_{jt}'\bigr)\right]
\]
is nonsingular. Then
\[
  \sqrt T\,(\hat\theta-\theta_0)\;\xrightarrow{\ d\ }\;N\bigl(0,\;M^{-1}V_\delta(M^{-1})'\bigr),
  \qquad V_\delta=\operatorname{Var}\bigl(\delta(w_t)\bigr),
\]
with $\delta$ as in Condition~\ref{cond:normality}\eqref{cond:norm-linear}.
\end{theorem}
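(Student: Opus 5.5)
The plan is the standard two-step argument for estimators defined by a first-order condition: a mean-value expansion of $\E_T g(w_t,\theta,\hat\eta)$ in $\theta$ around $\theta_0$, followed by a decomposition of $\E_T g(w_t,\theta_0,\hat\eta)$ into a linear first-stage term and negligible remainders. By Theorem~\ref{thm:consistency}, $\hat\theta\xrightarrow{p}\theta_0$. Since $\theta_0$ is interior, with probability approaching one $\hat\theta$ satisfies $\E_T g(w_t,\hat\theta,\hat\eta)=0$. Condition~\ref{cond:normality}\eqref{cond:deriv} makes $\theta\mapsto g(w_t,\theta,\hat\eta)$ twice differentiable, and Condition~\ref{cond:y-smooth} controls the population counterpart. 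A componentwise mean-value expansion then gives
\[
0=\E_T g(w_t,\theta_0,\hat\eta)+\hat M(\bar\theta)(\hat\theta-\theta_0),\qquad \hat M(\theta)=\E_T\partial_\theta g(w_t,\theta,\hat\eta).
\]

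\textbf{Step 1: Jacobian.} I show $\hat M(\bar\theta)\xrightarrow{p}M$. The derivative of $g$ contains two pieces: $\hat\Lambda_j\hat\Gamma_{jt}(\theta)\hat\Gamma_{jt}(\theta)'$, and terms proportional to the residual $\hat r_{jt}(\theta)$ multiplied by $\hat{\ddot\mu}_{y,j}$ or by $\hat{\dot\mu}_{y,j}$. I replace each hatted object by its population counterpart using Cauchy--Schwarz together with:
\begin{itemize}
\item the uniform mean-square consistency in Conditions~\ref{cond:consistency}\eqref{cond:cons-rate} and \ref{cond:normality}\eqref{cond:norm-deriv};
\item the bound $0\le\hat\Lambda_j\lesssim C$ with probability approaching one, and the uniform weight convergence in Condition~\ref{cond:consistency}\eqref{cond:cons-uniform};
\item boundedness from Conditions~\ref{cond:regularity} and \ref{cond:y-smooth}.
\end{itemize}
Trimming costs $O(p_T)\to0$ in mean. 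After these replacements, I use continuity in $\sigma$ (from the bounded third derivative) and $\bar\theta\to\theta_0$. The residual terms vanish because $r_{jt}(\theta_0)=0$. A law of large numbers for the i.i.d.\ markets gives $\E_T\Lambda_j\Gamma_{jt}\Gamma_{jt}'\to M$. Since $M$ is nonsingular, $\hat M(\bar\theta)^{-1}=M^{-1}+o_p(1)$.

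\textbf{Step 2: score at $\theta_0$.} Write $\hat r_{jt}=\hat r_{jt}(\theta_0)$. Because $r_{jt}(\theta_0)=0$, this equals the first-stage error $\Delta_{jt}:=(\hat\mu_{y,j}-\mu_{y,j})-\alpha_0(\hat\mu_{p,j}-\mu_{p,j})$, all evaluated at $\sigma_0$. Hence
\[
\E_T g(w_t,\theta_0,\hat\eta)=\E_J\E_T\bigl[\Delta_{jt}\hat\Lambda_j\hat\Gamma_{jt}\bigr].
\]
I split $\hat\Lambda_j\hat\Gamma_{jt}=\Lambda_j\tau_T\Gamma_{jt}+(\hat\Lambda_j-\Lambda_j\tau_T)\hat\Gamma_{jt}+\Lambda_j\tau_T(\hat\Gamma_{jt}-\Gamma_{jt})$. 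The first piece is exactly the left side of Condition~\ref{cond:normality}\eqref{cond:norm-linear}, so it equals $\E_T\delta(w_t)+o_p(T^{-1/2})$. The other two pieces are products of two first-stage errors. Every factor is supported on $\tau_T=1$, since $\hat\Lambda_j$ vanishes off $\mathcal Z_T$. Cauchy--Schwarz with Condition~\ref{cond:normality}\eqref{cond:norm-msr} therefore bounds each by $o_p(T^{-1/4})\cdot o_p(T^{-1/4})=o_p(T^{-1/2})$; the factor $\hat\Gamma_{jt}$ is bounded in mean square.

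\textbf{Conclusion and main obstacle.} Combining the two steps gives $\sqrt T(\hat\theta-\theta_0)=-M^{-1}\sqrt T\,\E_T\delta(w_t)+o_p(1)$. The Lindeberg--L\'evy CLT with $\E\delta=0$ and $\E\|\delta\|^2<\infty$, together with Slutsky's lemma, yields the stated normal limit; the sign is irrelevant for the variance. I expect the main obstacle to be Step~1. The first-stage controls are only mean-square and not sup-norm, yet they must hold at the random point $\bar\sigma$, which is why the supremum over $\sigma$ in Condition~\ref{cond:normality}\eqref{cond:norm-deriv} is needed. There is also a subtlety that $\hat\Gamma$ at $\bar\theta$ includes $\hat{\dot\mu}_{y,j}(\cdot;\bar\sigma)$, which I would handle by the uniform-in-$\sigma$ conditions plus equicontinuity from the bounded second derivative of $\hat\mu$ if needed. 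Beyond that, the mean-value expansion must be done row by row, since $g$ is vector-valued.
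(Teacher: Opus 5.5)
Your proposal follows essentially the same route as the paper's proof. You use a mean-value expansion of the first-order condition and show the Jacobian converges, with the curvature term removed by $r_{jt}(\theta_0)=0$ and the random point $\bar\sigma$ handled through the uniform-in-$\sigma$ mean-square conditions. You then split the score at $\theta_0$ into the term covered by Condition~\ref{cond:normality}\eqref{cond:norm-linear} plus second-order remainders, and finish with the CLT and Slutsky.

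One step needs a small repair. In your decomposition, the cross term $\E_J\E_T[\Delta_{jt}(\hat\Lambda_j-\Lambda_j\tau_T)\hat\Gamma_{jt}]$ is a triple product. Knowing only that $\hat\Gamma_{jt}$ is bounded in mean square does not give $o_p(T^{-1/2})$, because Cauchy--Schwarz then needs one factor in sup norm. The only sup-norm control you have on the weight error is the $o_p(1)$ of Condition~\ref{cond:consistency}\eqref{cond:cons-uniform}, which would leave you with only $o_p(T^{-1/4})$.

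The fix is to split $\hat\Gamma_{jt}=\Gamma_{jt}+(\hat\Gamma_{jt}-\Gamma_{jt})$:
\begin{itemize}
\item The bounded part $\Gamma_{jt}$ is handled by Cauchy--Schwarz on $\Delta_{jt}\tau_T$ and $\hat\Lambda_j-\Lambda_j\tau_T$, using the weight term of Condition~\ref{cond:normality}\eqref{cond:norm-msr}.
\item The remainder is handled by $\sup_z|\hat\Lambda_j-\Lambda_j\tau_T|=o_p(1)$ times Cauchy--Schwarz on $\Delta_{jt}\tau_T$ and $(\hat\Gamma_{jt}-\Gamma_{jt})\tau_T$.
\end{itemize}

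The paper's arrangement does this automatically. It writes the score as a linear term $\hat r_{jt}(\theta_0)\hat\Lambda_j\Gamma_{jt}$ and a quadratic term $\hat r_{jt}(\theta_0)\hat\Lambda_j(\hat\Gamma_{jt}-\Gamma_{jt})$. That way the weight error is always paired either with the bounded $\Gamma_{jt}$ or, through the uniformly bounded $\hat\Lambda_j$, with a product of two mean-square first-stage errors.
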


\begin{remark}
The estimator is $\sqrt T$-consistent and asymptotically normal, with the
sandwich form $M^{-1}V_\delta(M^{-1})'$ standard for a two-step estimator:
$M$ is the Jacobian of the moment and $V_\delta$ the variance of its
influence function.  Nonsingularity of $M$ is the local identification
requirement that the moment be sensitive to $\theta$ in every direction at
$\theta_0$.  Two features of the statement are worth emphasizing.  First, the
weight enters only through $M$ and $V_\delta$: the argument is unchanged for
any admissible $\omega_j$. In the kernel and series implementations
$\delta(w_t)=\mathbb E_J[\xi_{jt}a_{jt}]$ with $a_{jt}=\Lambda_j(z_{jt})\Gamma_{jt}$,
giving $V_\delta=\mathbb E[(\mathbb E_J a_{jt}\xi_{jt})(\mathbb E_J a_{jt}\xi_{jt})']$.
Second, the trimming leaves no trace: both $M$ and $V_\delta$ are the
untrimmed population objects.
Efficiency is a property of the particular weight $\omega_j=\Sigma_j$: when
$\xi_{jt}$ are independent across $j$ conditional on $z_t$, a matrix
Cauchy--Schwarz inequality gives
$M^{-1}V_\delta(M^{-1})'\succeq J^{-1}\bigl(\mathbb E_J\mathbb E[\Sigma_j^{-1}\Gamma_{jt}\Gamma_{jt}']\bigr)^{-1}$
for every admissible weight, with equality at $\omega_j=\Sigma_j$, in which
case $V_\delta=M/J$ and the asymptotic variance collapses to $M^{-1}/J$, the
semiparametric efficiency bound for the conditional-moment restriction \eqref{eq:theory-id}.
Without conditional independence across $j$ the efficient weight becomes more complicated.
\end{remark}

\subsection{Theory for Kernel Estimation}\label{sec:theory-kernel}
This section provides primitive conditions for the kernel estimator in~\eqref{est:kernel} to satisfy the high-level conditions of Section~\ref{sec:theory-general}.
It is convenient to work with the kernel \emph{numerator}, the density-weighted regression function, and to keep the density estimate separate. For a generic regressand $\psi_{jt}(\sigma)$ (below, $\psi_{jt}(\sigma)\in\{y_j(\sigma,w_t),\ \partial_\sigma y_j(\sigma,w_t),\ \partial^2_\sigma y_j(\sigma,w_t),\ p_{jt}\}$, all of which are uniformly bounded by Condition~\ref{cond:regularity} \eqref{cond:cons-reg} and Condition~\ref{cond:kernel}\eqref{cond:kernel-psi}), define
\[
    m_{\psi,j}(z;\sigma)=\mu_{\psi,j}(z;\sigma)\,f_j(z),
    \qquad
    \mu_{\psi,j}(z;\sigma)=\E[\psi_{jt}(\sigma)\mid z_{jt}=z],
\]
with estimators
\[
    \hat m_{\psi,j}(z;\sigma)=\E_T\big[K_h(z_{jt}-z)\,\psi_{jt}(\sigma)\big],
    \qquad
    \hat f_j(z)=\E_T\big[K_h(z_{jt}-z)\big],
    \qquad
    \hat\mu_{\psi,j}(z;\sigma)=\frac{\hat m_{\psi,j}(z;\sigma)}{\hat f_j(z)}.
\]
Specializing $\psi$ recovers the objects of Section~\ref{sec:theory-general}: writing $m_{y,j}=\mu_{y,j}f_j$, $\dot m_{y,j}=\dot\mu_{y,j}f_j$, $\ddot m_{y,j}=\ddot\mu_{y,j}f_j$, and $m_{p,j}=\mu_{p,j}f_j$, the first-stage estimators are $\hat\mu_{y,j}=\hat m_{y,j}/\hat f_j$ and $\hat\mu_{p,j}=\hat m_{p,j}/\hat f_j$, and the derivative estimators are $\hat{\dot\mu}_{y,j}=\hat{\dot m}_{y,j}/\hat f_j=\partial_{\sigma}\hat\mu_{y,j}$ and $\hat{\ddot\mu}_{y,j}=\hat{\ddot m}_{y,j}/\hat f_j=\partial^2_{\sigma}\hat\mu_{y,j}$, obtained by replacing $\psi=y_j(\sigma)$ with its $\sigma$-derivatives in the numerator (the denominator $\hat f_j$ does not depend on $\sigma$).

\paragraph{Weighting.} Section~\ref{sec:theory-general} leaves the weight $\omega_j$ free. From here on we specialize to
\[
    \omega_j(z)=\Sigma_j(z)=\E[\xi_{jt}^2\mid z_{jt}=z],
    \qquad\text{so that}\qquad
    \Lambda_j(z)=\Sigma_j(z)^{-1},
    \qquad
    \hat\Lambda_j(z)=\hat\Sigma_j(z)^{-1}\tau_T(z),
\]
with $\hat\Sigma_j$ the kernel estimator of $\Sigma_j$ obtained by smoothing the squared preliminary residuals which is constructed using the preliminary estimator $\tilde\theta$. With an abuse of notation, the preliminary estimator $\tilde\theta$ that supplies those residuals is the case $\omega_j\equiv1$, for which $\Lambda_j\equiv1$ and $\hat\Lambda_j=\tau_T$.

\paragraph{Trimming.} The trimming introduced in Section~\ref{sec:theory-general} is instantiated here, and the kernel dictates the choice.  Let $R<\infty$ be such that $\operatorname{supp}K\subseteq[-R,R]^{d_z}$ (Condition~\ref{cond:kernel}\eqref{cond:kernel-smoothness-bw}), write $d(z,\partial\mathcal Z)=\inf_{z'\in\partial\mathcal Z}\|z-z'\|_\infty$ for the sup-norm distance from $z$ to the boundary $\partial\mathcal Z$ and set
\begin{equation}\label{eq:kernel-trim}
    \varepsilon_T=Rh,
    \qquad
    \mathcal Z_T=\big\{z\in\mathcal Z:\ d(z,\partial\mathcal Z)\ge Rh\big\},
    \qquad
    \tau_T(z)=\mathbf{1}\big\{d(z,\partial\mathcal Z)\ge Rh\big\}.
\end{equation}
Because $K_h(\cdot-z)$ is supported on $\{z':\|z'-z\|_\infty\le Rh\}$, the definition says exactly that the smoothing window of a retained point lies entirely inside $\mathcal Z$:
\begin{equation}\label{eq:window-inside}
    z\in\mathcal Z_T
    \iff
    \big\{z':\|z'-z\|_\infty\le Rh\big\}\subseteq\mathcal Z .
\end{equation}
Since $h=h_T\to0$ is nonrandom, so is $\tau_T$, as Condition~\ref{cond:consistency}\eqref{cond:cons-trim} requires.

\begin{condition}\label{cond:kernel}
    Let $\mathcal Z$ denote the support of $z_{jt}$. 
\begin{enumerate}
    \item\label{cond:kernel-boundedness} $\mathcal Z$ is a compact set with Lipschitz boundary. $\bar f\ge\sup_{z\in\mathcal Z}f_j(z)\ge\inf_{z\in\mathcal Z}f_j(z)\ge\underline f>0$, where $f_j$ is the density of $z_{jt}$. $0<\underline c\le \Sigma_j(z)^{-1}\le \bar c<\infty$ for all $z\in\mathcal Z$ and $j=1,\dots,J$.
    \item\label{cond:kernel-smoothness-bw} The kernel $K:\R^{d_z}\to\R$ is a Lipschitz function. It has compact support, $\operatorname{supp}K\subseteq[-R,R]^{d_z}$ for some $R<\infty$, and order $l$, i.e.
    \[
        \int K(u)\,du=1,\qquad \int K(u)\,u^{m}\,du=0 \ \ \text{for } 1\le m<l,
    \]
    and $K_h(u)=h^{-d_z}K(u/h)$ with bandwidth $h=h_T\to0$ and $u\in\R^{d_z}$. 
    The kernel has order $l>d_z$, and the bandwidth is $h=T^{-q}$ with $\tfrac{1}{2l}<q<\tfrac{1}{2d_z}$.
    \item\label{cond:kernel-psi} 
    For each $\psi_{jt}(\sigma) \in \bigl\{y_j(\sigma, w_t),\ \partial_\sigma y_j(\sigma, w_t),\ \partial_\sigma^2 y_j(\sigma, w_t)\bigr\}$, we have $\sup_{w_t} |\psi_{jt}(\sigma_0)| < C$, and $\psi_{jt}(\sigma)$ is Lipschitz continuous in $\sigma$; that is, there exists a constant $L > 0$ such that
\[
    |\psi_{jt}(\sigma_1) - \psi_{jt}(\sigma_2)| \le L\,|\sigma_1 - \sigma_2|, \qquad \forall\, \sigma_1, \sigma_2 \in \Theta_\sigma.
\]
Moreover, for each $\psi_{jt}(\sigma) \in \bigl\{y_j(\sigma, w_t),\ \partial_\sigma y_j(\sigma, w_t),\ \partial_\sigma^2 y_j(\sigma, w_t)\bigr\}$, the map $m_{\psi,j}(\cdot\,;\sigma)$ is $l$-times continuously differentiable in $z$, with derivatives bounded uniformly over $\Theta_\sigma \times \mathcal{Z}$:
\[
    \sup_{\sigma \in \Theta_\sigma,\ z \in \mathcal{Z},\ k \le l} \bigl\|\partial_z^k m_{\psi,j}(z; \sigma)\bigr\| \le C.
\]
Finally, for each $\psi(z) \in \{\mu_{p,j}(z),\ f_j(z),\ \Sigma_j(z)\}$, the map $\psi(\cdot)$ is $l$-times continuously differentiable in $z$, with derivatives bounded uniformly over $\mathcal{Z}$:
\[
    \sup_{z \in \mathcal{Z},\ k \le l} \bigl\|\partial_z^k \psi(z)\bigr\| \le C.
\]
\end{enumerate}
\end{condition}

\begin{remark}
Condition~\ref{cond:kernel} collects the standard primitives for uniform kernel rates, together with what the trimming requires.
 
Part~\eqref{cond:kernel-boundedness} restricts attention to a compact support on which the instrument density is bounded away from zero. This lets the ratio $\hat\mu_{\psi,j}=\hat m_{\psi,j}/\hat f_j$ inherit the numerator's rate on $\mathcal Z_T$, since there $\hat f_j\ge\underline f/2$ with probability approaching one. It also bounds the mass discarded by the trimming: for a compact set with Lipschitz boundary the collar satisfies $Leb\{z:d(z,\partial\mathcal Z)<\varepsilon\}\le C\varepsilon$, so with $f_j\le\bar f$,
\[
    p_T=\max_j\Pr\big(\tau_T(z_{jt})=0\big)\le \bar f\,C R\,h=O(h)\longrightarrow0,
\]
which verifies Condition~\ref{cond:consistency}\eqref{cond:cons-trim}.
 
Part~\eqref{cond:kernel-smoothness-bw} is the higher-order-kernel construction: a compactly supported, Lipschitz kernel of order $l$ yields a bias of order $h^{l}$. The kernel order and bandwidth are chosen so that the uniform rate $\rho_T=(\log T/Th^{d_z})^{1/2}+h^{l}$ is $o(T^{-1/4})$, in fact fast enough that the estimated weight $\hat\Sigma_{j}$ also converges at $o_p(T^{-1/4})$, and keep the bias term introduced by the first stage estimation negligible; this is what the tightened window $\tfrac{1}{2l}<q<\tfrac{1}{2d_z}$ secures. The requirement $l>d_z$ keeps this interval nonempty, so a large instrument dimension $d_z$ calls for a high-order kernel---the standard cost of dimensionality here. The support radius $R$ enters no rate; it is named only to fix the trimming in \eqref{eq:kernel-trim}, and the choice $\varepsilon_T=Rh$ is what makes the  bias legitimate. The regression functions are smooth only \emph{inside} $\mathcal Z$: by part~\eqref{cond:kernel-boundedness} the density is bounded away from zero on $\mathcal Z$ and is zero outside it, so $f_j$ and hence $m_{\psi,j}=\mu_{\psi,j}f_j$ jump at $\partial\mathcal Z$. For $z$ within $Rh$ of the boundary the change of variables in the bias step integrates the kernel over the truncated region $\{v:z+vh\in\mathcal Z\}$, on which $\int K(v)\,dv\ne1$ and the higher-order moments no longer vanish; the local-constant estimator then has larger bias, and $\rho_T=o(T^{-1/4})$ fails. By \eqref{eq:window-inside} the trimming removes exactly those points and nothing more.

Part~\eqref{cond:kernel-psi} imposes boundedness and Lipschitz continuity in $\sigma$ on each regressand, and $l$-fold smoothness in $z$ on the density-weighted targets $m_{\psi,j}$, the density $f_j$, and the variance weight $\Sigma_{j}$. The Lipschitz-in-$\sigma$ property supplies the uniform-over-$\Theta_\sigma$ boundedness, while the $z$-smoothness of $\Sigma_{j}$ is what ensures the kernel estimator of $\Sigma_{j}$ inherits the same rate. Note that this implies $\mu_{\psi,j}$ is $l$-fold smooth in $z$ as well, since $\mu_{\psi,j}=m_{\psi,j}/f_j$ and $f_j$ is bounded away from zero on $\mathcal Z$.
\end{remark}

The limiting distribution of the kernel estimator then follows from Theorem~\ref{thm:normality}.
\begin{theorem}[Asymptotic normality for kernel estimator]\label{thm:kernel-normality}
Suppose Conditions~\ref{cond:regularity}, \ref{cond:y-smooth} and \ref{cond:kernel} hold, and that $M=\mathbb E_J\,\mathbb E\left[\Sigma_j(z_{jt})^{-1}\,\Gamma_{jt}\Gamma_{jt}'\right]$ is nonsingular. Then the kernel estimator $\hat\theta$ satisfies
\[
    \sqrt T\,(\hat\theta-\theta_0)\xrightarrow{d}N\!\big(0,M^{-1}V\ (M^{-1})'\big),
\]
with $V=\E[(\E_J \xi_{jt}\Sigma_j(z_{jt})^{-1}\Gamma_{jt})(\E_J \xi_{jt}\Sigma_j(z_{jt})^{-1}\Gamma_{jt})']$.
\end{theorem}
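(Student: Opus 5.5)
The plan is to verify the high-level Conditions~\ref{cond:consistency} and \ref{cond:normality} for the kernel first stage and the kernel variance weight $\hat\Lambda_j=\hat\Sigma_j^{-1}\tau_T$. Theorem~\ref{thm:consistency} and Theorem~\ref{thm:normality} then give the result. Finally, $\delta$ is identified as $\delta(w_t)=\E_J[\xi_{jt}\Sigma_j(z_{jt})^{-1}\Gamma_{jt}]$, so that $V_\delta=V$.

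\textbf{Step 1: uniform rate for the first stage.} The backbone is a uniform rate
\[
  \sup_{\sigma\in\Theta_\sigma}\sup_{z\in\mathcal Z_T}\bigl|\hat\mu_{\psi,j}(z;\sigma)-\mu_{\psi,j}(z;\sigma)\bigr|=O_p(\rho_T),
  \qquad \rho_T=(\log T/Th^{d_z})^{1/2}+h^{l},
\]
for every $\psi\in\{y_j,\partial_\sigma y_j,\partial^2_\sigma y_j,p_j\}$ and for $\hat f_j$. The bias step uses \eqref{eq:window-inside}. For $z\in\mathcal Z_T$ the change of variables integrates $K$ over its full support, so the order-$l$ moment conditions and the $l$-fold smoothness of $m_{\psi,j}$ in Condition~\ref{cond:kernel}\eqref{cond:kernel-psi} give bias $O(h^l)$ uniformly. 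For the variance part, I would use a grid over $\mathcal Z_T\times\Theta_\sigma$ of polynomial cardinality and Bernstein's inequality (the summands are bounded). I would pass off the grid using Lipschitz continuity of $K$ in $z$ and of $\psi$ in $\sigma$. Since $\hat f_j\ge\underline f/2$ on $\mathcal Z_T$ w.p.a.1, the ratio inherits the rate.

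The bandwidth window $\tfrac1{2l}<q<\tfrac1{2d_z}$ makes $\rho_T=o(T^{-1/4})$. This immediately yields Condition~\ref{cond:consistency}\eqref{cond:cons-rate} and parts \eqref{cond:norm-deriv} and \eqref{cond:norm-msr} of Condition~\ref{cond:normality} for the $\mu$ terms. Part~\eqref{cond:deriv} holds by construction, since $\hat f_j$ does not depend on $\sigma$. Condition~\ref{cond:consistency}\eqref{cond:cons-trim} was already shown in the remark: $p_T=O(h)$.

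\textbf{Step 2: the weight.} First apply Step 1 to the unweighted case $\Lambda_j\equiv1$. With Theorem~\ref{thm:consistency} and then Theorem~\ref{thm:normality}, this gives $\tilde\theta-\theta_0=O_p(T^{-1/2})$. Then write $\hat\xi_{jt}^2=\xi_{jt}^2+2\xi_{jt}(\xi_j(\tilde\theta,w_t)-\xi_{jt})+(\xi_j(\tilde\theta,w_t)-\xi_{jt})^2$. The correction terms are $O_p(T^{-1/2})$ uniformly, by the Lipschitz property of $y_j$ in $\sigma$ and boundedness of $p,x$. Hence $\sup_{\mathcal Z_T}|\hat\Sigma_j-\Sigma_j|=O_p(\rho_T+T^{-1/2})$, using the smoothness of $\Sigma_j$. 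Because $\Sigma_j^{-1}\in[\underline c,\bar c]$, inversion preserves this rate. That verifies Condition~\ref{cond:consistency}\eqref{cond:cons-uniform} and the weight term in Condition~\ref{cond:normality}\eqref{cond:norm-msr}.

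\textbf{Step 3: asymptotic linearity (main obstacle).} This is the hard step. Using \eqref{eq:theory-id} at $\sigma_0$, the quantity $(\hat\mu_{y,j}-\mu_{y,j})-\alpha_0(\hat\mu_{p,j}-\mu_{p,j})$ equals the kernel smooth of $\xi_{jl}+x_{jl}'\beta_0$ minus its conditional mean. Since $x$ is its own projection on $z$, this reduces to $\hat f_j^{-1}\E_T[K_h(z_{jl}-z)\xi_{jl}]$ plus an $O(h^l)$ bias; the bias is $o(T^{-1/2})$ because $q>1/(2l)$. Replace $\hat f_j$ by $f_j$, at cost $O_p(\rho_T^2)=o_p(T^{-1/2})$.

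The remaining term is a second-order V-statistic,
\[
  \E_J\,T^{-2}\sum_{t,l}K_h(z_{jl}-z_{jt})\,\xi_{jl}\,f_j(z_{jt})^{-1}a_{jt}\tau_T(z_{jt}).
\]
I would apply the Hoeffding decomposition. The projection onto $l$ gives $\xi_{jl}\int K_h(z_{jl}-z)a_j(z)\tau_T(z)\,dz$, which equals $\xi_{jl}a_{jl}$ up to an $O(h^l)$ smoothing error plus the trimming collar. The collar contributes $o_p(T^{-1/2})$ since its variance is $O(p_T)$. The projection onto $t$ vanishes because $\E[\xi\mid z]=0$. The degenerate part has variance $O(T^{-2}h^{-d_z})=o(T^{-1})$ since $Th^{d_z}\to\infty$, and the diagonal terms are $O_p(T^{-1}h^{-d_z})=o_p(T^{-1/2})$ by $q<1/(2d_z)$. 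This yields $\delta(w_t)=\E_J[\xi_{jt}\Sigma_j(z_{jt})^{-1}\Gamma_{jt}]$, which has mean zero and finite variance by boundedness. Theorem~\ref{thm:normality} then delivers the stated limit.
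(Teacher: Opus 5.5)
Your route is the paper's. It uses the uniform kernel rate on $\mathcal Z_T$ (Lemma~\ref{lemma:kernel}), then root-$T$ consistency of the unweighted $\tilde\theta$ followed by the plug-in weight rate (Lemma~\ref{lemma:kernel-weight}), then a H\'ajek projection of a second-order V-statistic for asymptotic linearity (Lemma~\ref{lemma:kernel-linear}), all fed into Theorem~\ref{thm:normality}. The one organizational difference is that you combine $y$ and $p$ before linearizing, so your V-statistic carries $\xi_{jl}$ and its $t$-projection vanishes exactly. The paper instead linearizes a generic regressand $\psi$ and combines afterwards via $e^{y}_{jt}-\alpha_0 e^{p}_{jt}=\xi_{jt}$. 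Your Step 2 implicitly needs Step 3 with $a_{jt}=\Gamma_{jt}$ to get $\tilde\theta-\theta_0=O_p(T^{-1/2})$; that is fine, since the argument is identical.

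One step does not go through as written. After combining, the leftover term $\hat f_j(z)^{-1}\E_T[K_h(z_{jl}-z)(x_{jl}-x(z))'\beta_0]$, with $x(z)$ the $x$-coordinates of $z$, is not a deterministic $O(h^l)$ bias; only its mean is. Its stochastic part has pointwise standard deviation of order $h(Th^{d_z})^{-1/2}$, which is not $o(T^{-1/2})$ once $d_z\ge 2$. So a pointwise or uniform bound on it, pushed through the average over $t$, does not deliver $o_p(T^{-1/2})$.

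The term is negligible only through cancellation when averaged against $\tau_T(z_{jt})a_{jt}/f_j(z_{jt})$. It must therefore stay inside the V-statistic: use $\xi_{jl}+(x_{jl}-x_{jt})'\beta_0$ as the inner factor. Then:
\begin{itemize}
\item its projection onto $l$ is a function of $z_{jl}$ bounded by $O(h)$ with mean $O(h^l)$, contributing $O(h^l)+O_p(hT^{-1/2})$;
\item its projection onto $t$ is $O(h^l)$, because the window lies in $\mathcal Z$ and the order-$l$ kernel removes the lower Taylor terms;
\item its diagonal vanishes;
\item its degenerate part has variance $O(h^{2-d_z}T^{-2})=o(T^{-1})$.
\end{itemize}
This is exactly the role of $B_T$ and $c_{1,T}$ in the proof of Lemma~\ref{lemma:kernel-linear}. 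Your replacement of $\hat f_j$ by $f_j$ survives, because this numerator is still $O_p(\rho_T)$ uniformly on $\mathcal Z_T$.

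Finally, in the $l$-projection you do not need an $O(h^l)$ approximation of $a$. The smoothing error multiplies the conditionally mean-zero $\xi_{jl}$, so $\|\tilde a_T-a\|_{L^2}=o(1)$ suffices, which the paper obtains from continuity of $a$ and dominated convergence. Your version is also valid, since $a$ is $l$-times differentiable under Condition~\ref{cond:kernel}, but it assumes more than necessary.
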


\begin{remark}
Theorem~\ref{thm:kernel-normality} shows the kernel estimator is asymptotically normal at the rate $T^{-1/2}$, despite the nonparametric first stage. Because the weight equals the conditional variance $\Sigma_{j}(z_{jt})=\E[\xi_{jt}^2\mid z_{jt}]$, under a stronger condition that $\xi_{jt}$ are mean-zero and independent across $j$ conditional on $z_{t}$, the sandwich collapses to $V=M/J$, and the asymptotic variance simplifies to $M^{-1}/J$. Then the estimator attains the semiparametric efficiency bound for the conditional-moment restriction \eqref{eq:theory-id}. The first-stage rate requirements are entirely absorbed into the bandwidth and kernel choice of Condition~\ref{cond:kernel}\eqref{cond:kernel-smoothness-bw}: it removes the first-stage bias, so no bias term enters the limit and the distribution is centered at zero. The only price is the high-order kernel needed when $d_z$ is large.   
\end{remark}

\subsection{Theory for Series Estimation}\label{sec:theory-series}

We now verify the requirements of Section~\ref{sec:theory-general} for a series first stage. Fix a product $j$; the estimator projects the regressand onto a $K$-dimensional dictionary. Let $\phi^K(z)=(\phi_1(z),\dots,\phi_K(z))'$ be a vector of basis functions. For a generic regressand $\psi_{jt}(\sigma)\in\{y_j(\sigma,w_t),\ \partial_\sigma y_j(\sigma,w_t),\ \partial^2_\sigma y_j(\sigma,w_t),\ p_{jt}\}$, the least-squares series estimator of the conditional mean $\mu_{\psi,j}(z;\sigma)=\E[\psi_{jt}(\sigma)\mid z_{jt}=z]$ is
\[
    \hat\mu_{\psi,j}(z;\sigma)=\phi^K(z)'\hat\Pi_{\psi,j}(\sigma),
    \qquad
    \hat\Pi_{\psi,j}(\sigma)=\hat Q_j^{-1}\,\E_T\!\big[\phi^K(z_{jt})\,\psi_{jt}(\sigma)\big],
\]
where $\hat Q_j=\E_T[\phi^K(z_{jt})\phi^K(z_{jt})']$ is the empirical Gram matrix and $Q_j=\E[\phi^K(z_{jt})\phi^K(z_{jt})']$ its population counterpart. Specializing $\psi$ recovers the four first-stage objects of Section~\ref{sec:theory-general}: $\hat\mu_{y,j}$, the derivative estimators $\hat{\dot\mu}_{y,j}$ and $\hat{\ddot\mu}_{y,j}$—obtained by replacing $\psi=y_j(\sigma,w_t)$ with its $\sigma$-derivatives in the regressand, the design $\hat Q_j$ being free of $\sigma$—and $\hat\mu_{p,j}$. Write $e^{\psi}_{jt}(\sigma)=\psi_{jt}(\sigma)-\mu_{\psi,j}(z_{jt};\sigma)$ for the projection residual, which satisfies $\E[e^{\psi}_{jt}(\sigma)\mid z_{jt}]=0$. Finally, let
\[
    \zeta_0(K)=\sup_{z\in\mathcal Z}\big\|\phi^K(z)\big\|
\]
denote the envelope of the basis.

\paragraph{Trimming.} The trimming function is trivial here: $\tau_T(z)\equiv1$ for all $z\in\mathcal Z$, so that $\mathcal Z_T=\mathcal Z$. The series estimator is defined on the entire support of the instruments, and no trimming is needed.

\begin{condition}\label{cond:series}
\begin{enumerate}
    \item\label{cond:series-gram} \textup{(Design)} $\mathcal Z$ is compact, and the eigenvalues of $Q_j$ are bounded above and away from zero uniformly in $j$: $0<\underline{q}\le\lambda_{\min}(Q_j)\le\lambda_{\max}(Q_j)\le\overline{q}<\infty$. The basis satisfies the envelope bound $\zeta_0(K)<\infty$ for each $K$.  $0<\underline c\le \Sigma_j(z)^{-1}\le \bar c<\infty$ for all $z\in\mathcal Z$ and $j=1,\dots,J$.
    \item\label{cond:series-approx} \textup{(Approximation)} There is $\gamma>0$ such that, for each\\
     $\psi\in\{y_j(\sigma,w_t),\ \partial_\sigma y_j(\sigma,w_t),\ \partial^2_\sigma y_j(\sigma,w_t),\ p_{jt},\ x_{jt}',\ \xi_{jt}^2, \Sigma_{j}(z_{jt})^{-1}, \Sigma_{j}(z_{jt})^{-1}\cdot \partial_\sigma y_{j}(\sigma,w_t),\Sigma_{j}(z_{jt})^{-1}\cdot p_{jt}, \Sigma_{j}(z_{jt})^{-1}\cdot x_{jt}'\}$, there exist coefficient vectors $\Pi^{K}_{\psi,j}(\sigma)$ with
    \[
        \sup_{\sigma\in\Theta_\sigma}\ \sup_{z\in\mathcal Z}\big|\mu_{\psi,j}(z;\sigma)-\phi^K(z)'\Pi^{K}_{\psi,j}(\sigma)\big|=O(K^{-\gamma}).
    \]
    \item\label{cond:series-Lipschitz} \textup{(Lipschitz)} For each $\psi\in\{y_j(\sigma,w_t),\ \partial_\sigma y_j(\sigma,w_t),\ \partial^2_\sigma y_j(\sigma,w_t)\}$, $\sup_{w_t}|\psi_{jt}(\sigma_0)|<C$, and there exists a constant $L>0$ that does not depend on data $z_{jt}$ such that $|\psi_{jt}(\sigma_1)-\psi_{jt}(\sigma_2)|\le L|\sigma_1-\sigma_2|$ for all $\sigma_1,\sigma_2\in\Theta_\sigma$.
    \item\label{cond:series-rate} \textup{(Rates)} As $T\to\infty$, $K=K_T\to\infty$ with
    \[
        \sqrt T K^{-\gamma}=o(1),
        \qquad
        \frac{\zeta_0(K) K \log (K\vee T)}{\sqrt{T}}=o(1).
    \]
\end{enumerate}
\end{condition}

\begin{remark}
Condition~\ref{cond:series} is the series counterpart of the kernel primitives in Condition~\ref{cond:kernel} and follows the framework of \citet{newey1997convergence}.
Part~\eqref{cond:series-gram} restricts the dictionary so that the population design $Q_j$ is well-conditioned. Together with the envelope $\zeta_0(K)$, this controls the estimation error of the empirical Gram matrix, and hence ensures $\hat Q_j$ is invertible with probability approaching one, so that the linear projection is well defined. It is the series analogue of the kernel requirement that the instrument density be bounded away from zero.
Part~\eqref{cond:series-approx} is the sieve approximation requirement, imposing a common rate $K^{-\gamma}$ on the four first-stage regression functions $\mu_{y,j},\dot\mu_{y,j},\ddot\mu_{y,j},\mu_{p,j}$ and, in addition, on $x_{jt}$, the conditional variance $\Sigma_{j}(z_{jt})=\E[\xi_{jt}^2\mid z_{jt}]$, its inverse, the multiplications $\Sigma_j^{-1}\cdot \partial_\sigma y_{j}(\sigma,w_t)$, $\Sigma_j^{-1}\cdot p_{jt}$, and $\Sigma_j^{-1}\cdot x_{jt}'$. The last two are needed because $x_{jt}$ and $\Sigma$ enter the moment and because the feasible weight is itself a series fit of squared residuals. The rate $\gamma$ reflects the smoothness of these functions relative to the basis; for $s$-times differentiable functions of a $d_z$-dimensional instrument, $\gamma=s/d_z$ with polynomial or spline dictionaries. Particularly, if the function is analytic and univariate, the approximation error diminishes at $\exp(-K)$, with polynomial dictionaries.
Part~\eqref{cond:series-Lipschitz} is the series analogue of the kernel Lipschitz-in-$\sigma$ requirement, which is needed to control the uniformity of the first-stage convergence over $\Theta_\sigma$.
Part~\eqref{cond:series-rate} collects the two operative rate restrictions. The first, $\sqrt T\,K^{-\gamma}=o(1)$, requires the dictionary must be rich enough that the approximation bias is negligible relative to $T^{-1/2}$, so no bias term enters the limit—the series analogue of $h^{l}=o(T^{-1/2})$. The second, $\zeta_0(K)K\log (K\vee T)/\sqrt T=o(1)$, is the binding upper restriction on how fast $K$ may grow; it controls the estimation error of $\hat Q_j^{-1}$ and, through it, the second-order remainder in the linearization, playing the role that $q<1/(2d_z)$ plays for the kernel. As in the kernel case, the crude rate $T^{-1/4}$ suffices only because the moment is degenerate at the truth ($r_{jt}(\theta_0)=0$), so the error enters at the second order.
\end{remark}

\begin{theorem}[Asymptotic normality, series estimator]\label{thm:series-normality}
Suppose Conditions~\ref{cond:regularity}, \ref{cond:y-smooth} and \ref{cond:series} hold, and that $M=\mathbb E_J\,\mathbb E\left[\Sigma_j(z_{jt})^{-1}\,\Gamma_{jt}\Gamma_{jt}'\right]$ is nonsingular. Then the series estimator $\hat\theta$ satisfies
\[
    \sqrt T\,(\hat\theta-\theta_0)\xrightarrow{d}N\!\big(0,M^{-1}V\ (M^{-1})'\big),
\]
with $V=\E[(\E_J \xi_{jt}\Sigma_j(z_{jt})^{-1}\Gamma_{jt})(\E_J \xi_{jt}\Sigma_j(z_{jt})^{-1}\Gamma_{jt})']$.
\end{theorem}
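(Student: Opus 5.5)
The plan is to deduce Theorem~\ref{thm:series-normality} from Theorems~\ref{thm:consistency} and~\ref{thm:normality} by checking Conditions~\ref{cond:consistency} and~\ref{cond:normality} for the series first stage with $\tau_T\equiv1$ and $p_T=0$, which makes Condition~\ref{cond:consistency}\eqref{cond:cons-trim} trivial. We then identify $\delta(w_t)=\mathbb E_J[\xi_{jt}\Lambda_j(z_{jt})\Gamma_{jt}]$, so that $V_\delta=V$.

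\textbf{Step 1: Gram matrix.} Using a matrix Bernstein bound with the envelope $\zeta_0(K)$, we expect $\|\hat Q_j-Q_j\|_{op}=O_p(\zeta_0(K)\sqrt{\log K/T})=o_p(1)$, so $\hat Q_j$ is invertible with probability approaching one and its eigenvalues stay in $[\underline q/2,2\overline q]$.

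\textbf{Step 2: First-stage error decomposition.} For each regressand $\psi$ we write
\[
\hat\mu_{\psi,j}-\mu_{\psi,j}=\phi^{K\prime}\hat Q_j^{-1}\mathbb E_T[\phi^K e^\psi_{jt}]+\phi^{K\prime}\hat Q_j^{-1}\mathbb E_T[\phi^K(\mu_\psi-\phi^{K\prime}\Pi^K_\psi)]+(\phi^{K\prime}\Pi^K_\psi-\mu_\psi).
\]
The empirical mean square of the first term is $O_p(K/T)$. The last two terms are $O(K^{-2\gamma})$, because least squares is an empirical projection. Uniformity over $\sigma$ comes from the Lipschitz condition, Condition~\ref{cond:series}\eqref{cond:series-Lipschitz}: a finite grid of mesh $T^{-1}$ together with a maximal inequality costs only a $\log T$ factor. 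Since $K\log T/T=o(T^{-1/2})$ and $K^{-2\gamma}=o(T^{-1})$ under Condition~\ref{cond:series}\eqref{cond:series-rate}, this gives Condition~\ref{cond:consistency}\eqref{cond:cons-rate} and Conditions~\ref{cond:normality}\eqref{cond:norm-deriv}--\eqref{cond:norm-msr} for $\mu_y,\dot\mu_y,\ddot\mu_y,\mu_p$. Condition~\ref{cond:normality}\eqref{cond:deriv} holds because $\hat Q_j$ does not depend on $\sigma$.

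\textbf{Step 3: Weight.} The weight $\hat\Sigma_j$ projects $\hat\xi^2_{jt}$ rather than $\xi_{jt}^2$. We split $\hat\xi^2_{jt}-\xi^2_{jt}$, which is linear and quadratic in $\tilde\theta-\theta_0$ with bounded coefficients. Using Step~2 for the preliminary estimator ($\Lambda\equiv1$, so Theorem~\ref{thm:normality} gives $\tilde\theta-\theta_0=O_p(T^{-1/2})$), the mean-square error of $\hat\Sigma_j$ is $O_p(K/T+K^{-2\gamma}+T^{-1})$. We then need the sup-norm version: $\sup_z|\hat\Sigma_j-\Sigma_j|\le\zeta_0(K)\|\hat\Pi-\Pi^K\|+O(K^{-\gamma})=O_p(\zeta_0(K)\sqrt{K/T})+o(1)=o_p(1)$ by the second rate condition. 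Since $\Sigma_j$ is bounded away from zero, the same holds for $\hat\Sigma_j^{-1}$, which verifies Condition~\ref{cond:consistency}\eqref{cond:cons-uniform} and the weight term in Condition~\ref{cond:normality}\eqref{cond:norm-msr}.

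\textbf{Step 4: Asymptotic linearity (the main obstacle).} We need
\[
\mathbb E_J\mathbb E_T[\{(\hat\mu_y-\mu_y)-\alpha_0(\hat\mu_p-\mu_p)\}a_{jt}]=\mathbb E_T\delta+o_p(T^{-1/2}).
\]
At $\sigma_0$, $y_j-\alpha_0p_{jt}-x_{jt}'\beta_0=\xi_{jt}$. Because $x_{jt}$ is projected exactly up to $O(K^{-\gamma})$, the bracket equals $\phi^{K\prime}\hat Q_j^{-1}\mathbb E_T[\phi^K\xi]$ plus bias. The stochastic term is then
\[
\mathbb E_T[\phi^K\xi]'\hat Q_j^{-1}\mathbb E_T[\phi^K a_{jt}'].
\]
Here $\hat Q_j^{-1}\mathbb E_T[\phi^K a']$ is the series fit of $a_{jt}=\Sigma_j^{-1}\Gamma_{jt}$, a function of $z$ that is approximable by Condition~\ref{cond:series}\eqref{cond:series-approx}; this is exactly why the products $\Sigma^{-1}\partial_\sigma y$, $\Sigma^{-1}p$, $\Sigma^{-1}x$ were assumed there. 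Replacing that fit by $a_{jt}$ itself leaves $\mathbb E_T[\xi_{jt}a_{jt}]$. The remainder is bounded by $\|\mathbb E_T\phi^K\xi\|\cdot(\text{fit error})$, which is $O_p(\sqrt{K/T})\cdot O_p(K^{-\gamma}+\zeta_0(K)\sqrt{K\log K/T})=o_p(T^{-1/2})$ under Condition~\ref{cond:series}\eqref{cond:series-rate}. The bias pieces contribute $O(K^{-\gamma})\cdot O_p(T^{-1/2})$ plus $O(\sqrt T K^{-\gamma})T^{-1/2}$, which are negligible because the bias is orthogonal to the sieve span in-sample.

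Checking this orthogonality and the cross terms carefully is where I expect the difficulty. The $\hat Q_j^{-1}-Q_j^{-1}$ fluctuation has to be handled without losing a factor of $K$, which is why the rate $\zeta_0(K)K\log(K\vee T)/\sqrt T=o(1)$ is imposed. With all conditions verified, Theorem~\ref{thm:normality} yields the stated limit with $V_\delta=V$.
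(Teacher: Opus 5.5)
Your proposal is correct. Steps 1--3 follow the paper closely: they match Lemmas~\ref{lemma:series-gram}, \ref{lemma:series-rate} and \ref{lemma:series-weight}, applied twice (first with unit weight, then with $\Sigma_j^{-1}$), exactly as in the paper's proof.

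The genuine difference is Step~4.

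\emph{The paper's route.} Lemma~\ref{lemma:series-linear} treats $y$ and $p$ separately. It splits the average into an empirical-process term and an integral of $(\hat\mu_{\psi,j}-\mu_{\psi,j})a$ against the population law of $z$. It then compares the latter with the population projections $\psi^K$ and $a^K$. This forces it to control $\hat Q_j^{-1}-Q_j^{-1}$ (term $A_1$), which is where the condition $\zeta_0(K)K\log(K\vee T)/\sqrt T=o(1)$ is used.

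\emph{Your route.} Let $P_T$ denote the in-sample least-squares projection on $\phi^K(z_{jt})$. You write the bracket at $\sigma_0$ as $P_T\xi+(P_Tx-x)'\beta_0$, which handles $y$ and $p$ jointly. You then use the self-adjointness of $P_T$ under $\mathbb E_T$ to move the projection onto the deterministic weight, which gives $\mathbb E_T[\xi_{jt}(P_Ta)(z_{jt})]$.

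This representation removes the difficulty you flag. $P_Ta$ depends only on $\{z_{jt}\}_{t\le T}$, markets are independent, and $\E[\xi_{jt}\mid z_{jt}]=0$. Hence, conditional on the instruments, $\mathbb E_T[\xi_{jt}(P_Ta-a)(z_{jt})]$ is centered with variance at most
\[
C\,\mathbb E_T[(P_Ta-a)^2]/T\le C\sup_z|a-\phi^{K\prime}\Pi^K_a|^2/T=O(K^{-2\gamma}/T).
\]
Separately, $\mathbb E_T[(P_Tx-x)'\beta_0\,a]=O(K^{-\gamma})=o(T^{-1/2})$ by Cauchy--Schwarz. No $\hat Q_j^{-1}-Q_j^{-1}$ term ever appears, so nothing has to be handled ``without losing a factor of $K$.''

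\emph{One correction.} Your bound ``$\|\mathbb E_T\phi^K\xi\|$ times fit error'' only controls the in-span part $\phi^{K\prime}(\hat\Pi_a-\Pi^K_a)$ of $P_Ta-a$. The out-of-span part $\phi^{K\prime}\Pi^K_a-a$ needs the conditional (or a plain centered-i.i.d.) variance bound above. That step is easy, but it should be stated.

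\emph{Comparison.} Your linearization is exact and, beyond invertibility of $\hat Q_j$, needs only $\sqrt T K^{-\gamma}\to0$ at this step. The $\zeta_0(K)$ rate is then needed only for the Gram-matrix bound and for the sup-norm consistency of $\hat\Sigma_j^{-1}$. The paper's version deliberately parallels the Hájek-projection argument of Lemma~\ref{lemma:kernel-linear}, which keeps the kernel and series proofs parallel, but it carries more terms and uses more of the rate conditions.
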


\begin{remark}
The series estimator attains the same limiting distribution $N(0,M^{-1}V(M^{-1})')$---same Jacobian $M$ and same influence function---as the kernel estimator of Theorem~\ref{thm:kernel-normality}. The reason is that the influence function does not depend on which smoother produced it. In both implementations the projection of the first-stage error onto $a(z_{jt})$ collapses to $\E_T\E_J[\xi_{jt}a_{jt}]$---Lemma~\ref{lemma:kernel-linear} for the kernel and Lemma~\ref{lemma:series-linear} for the series---while the smoother-specific pieces, the bias $h^{l}$ or $K^{-\gamma}$ and the quadratic remainders, are held below $T^{-1/2}$ by the rate restrictions in Conditions~\ref{cond:kernel}\eqref{cond:kernel-smoothness-bw} and \ref{cond:series}\eqref{cond:series-rate}. The choice between the two is therefore governed by finite-sample and computational considerations rather than by efficiency. 
\end{remark}

\section{Monte Carlo Simulation}
In this section, we provide simulation evidence to show performance of estimators across scenarios.

\begin{table}[t]
    \centering
    \begin{threeparttable}
        \caption{Simulation Results $\lambda = 0$}
        \label{tab:simulation-results-0}

        \setlength{\tabcolsep}{6pt}
        \renewcommand{\arraystretch}{1.15}

        \begin{tabular}{lrrrrrrrr}
            \toprule
            Parameter
            & True
            & Bias
            & Abs.\ Bias
            & St.\ Err.
            & RMSE
            & Min
            & Max
            & Rej.\ Rate \\
            \midrule

            \multicolumn{9}{l}{\textit{Kernel Estimator}} \\
\cmidrule(lr){1-9}

$\sigma$
& 1.000
& -0.026
& 0.066
& 0.108
& 0.110
& 0.600
& 1.300
& 0.060 \\

$\alpha$
& -1.000
& 0.017
& 0.046
& 0.070
& 0.071
& -1.174
& -0.727
& 0.060 \\

\addlinespace

\multicolumn{9}{l}{\textit{Series Estimator (B-spline, df $= 5$)}} \\
\cmidrule(lr){1-9}

$\sigma$
& 1.000
& -0.006
& 0.050
& 0.077
& 0.076
& 0.800
& 1.100
& 0.040 \\

$\alpha$
& -1.000
& 0.003
& 0.037
& 0.051
& 0.051
& -1.090
& -0.879
& 0.040 \\

\addlinespace

\multicolumn{9}{l}{
    \textit{GMM Estimator 1: Infeasible Optimal IV}
} \\
\cmidrule(lr){1-9}

$\sigma$
& 1.000
& 2.462
& 2.514
& 2.944
& 3.815
& 0.800
& 10.000
& 0.200 \\

$\alpha$
& -1.000
& -1.738
& 1.774
& 2.077
& 2.692
& -7.401
& -0.863
& 0.180 \\

\addlinespace

\multicolumn{9}{l}{
    \textit{GMM Estimator 2: Feasible Optimal IV}
} \\
\cmidrule(lr){1-9}

$\sigma$
& 1.000
& 0.916
& 1.008
& 2.088
& 2.261
& 0.600
& 9.200
& 0.120 \\

$\alpha$
& -1.000
& -0.647
& 0.708
& 1.464
& 1.587
& -6.566
& -0.754
& 0.120 \\

            \bottomrule
        \end{tabular}

        \begin{tablenotes}
            \footnotesize
            \item \textit{Notes:}
            Abs.\ Bias denotes mean absolute bias.
            Rej.\ Rate denotes the empirical rejection rate.
            GMM Estimator 1 uses the true conditional expectation
            in the optimal instrument.
            GMM Estimator 2 uses the feasible approximation in
            \citet{berry1999voluntary}.
        \end{tablenotes}
    \end{threeparttable}
\end{table}

\begin{table}[t]
    \centering
    \begin{threeparttable}
        \caption{Simulation Results $\lambda = 0.3$}
        \label{tab:simulation-results-0.3}

        \setlength{\tabcolsep}{6pt}
        \renewcommand{\arraystretch}{1.15}

        \begin{tabular}{lrrrrrrrr}
            \toprule
            Parameter
            & True
            & Bias
            & Abs.\ Bias
            & St.\ Err.
            & RMSE
            & Min
            & Max
            & Rej.\ Rate \\
            \midrule

\multicolumn{9}{l}{\textit{Kernel Estimator}} \\
\cmidrule(lr){1-9}

$\sigma$
& 1.000
& -0.084
& 0.124
& 0.239
& 0.251
& 0.100
& 1.500
& 0.100 \\

$\alpha$
& -1.000
& 0.080
& 0.111
& 0.320
& 0.327
& -1.319
& 1.152
& 0.020 \\

\addlinespace

\multicolumn{9}{l}{\textit{Series Estimator (B-spline, df $= 5$)}} \\
\cmidrule(lr){1-9}

$\sigma$
& 1.000
& -0.004
& 0.040
& 0.073
& 0.072
& 0.800
& 1.200
& 0.060 \\

$\alpha$
& -1.000
& 0.003
& 0.032
& 0.050
& 0.049
& -1.127
& -0.861
& 0.060 \\

\addlinespace

\multicolumn{9}{l}{
    \textit{GMM Estimator 1: Infeasible Optimal IV}
} \\
\cmidrule(lr){1-9}

$\sigma$
& 1.000
& 0.396
& 0.436
& 1.560
& 1.594
& 0.800
& 8.200
& 0.060 \\

$\alpha$
& -1.000
& -0.272
& 0.303
& 1.077
& 1.100
& -6.019
& -0.863
& 0.060 \\

\addlinespace

\multicolumn{9}{l}{
    \textit{GMM Estimator 2: Feasible Optimal IV}
} \\
\cmidrule(lr){1-9}

$\sigma$
& 1.000
& 0.470
& 0.578
& 1.923
& 1.961
& 0.100
& 9.800
& 0.060 \\

$\alpha$
& -1.000
& -0.333
& 0.400
& 1.347
& 1.375
& -7.138
& -0.585
& 0.060 \\

            \bottomrule
        \end{tabular}

        \begin{tablenotes}
            \footnotesize
            \item \textit{Notes:}
            Abs.\ Bias denotes mean absolute bias.
            Rej.\ Rate denotes the empirical rejection rate.
            GMM Estimator 1 uses the true conditional expectation
            in the optimal instrument.
            GMM Estimator 2 uses the feasible approximation in
            \citet{berry1999voluntary}.
        \end{tablenotes}
    \end{threeparttable}
\end{table}

\begin{table}[t]
    \centering
    \begin{threeparttable}
        \caption{Simulation Results $\lambda = 1$}
        \label{tab:simulation-results-1}

        \setlength{\tabcolsep}{6pt}
        \renewcommand{\arraystretch}{1.15}

        \begin{tabular}{lrrrrrrrr}
            \toprule
            Parameter
            & True
            & Bias
            & Abs.\ Bias
            & St.\ Err.
            & RMSE
            & Min
            & Max
            & Rej.\ Rate \\
            \midrule

\multicolumn{9}{l}{\textit{Kernel Estimator}} \\
\cmidrule(lr){1-9}

$\sigma$
& 1.000
& -0.024
& 0.128
& 0.292
& 0.291
& 0.100
& 2.400
& 0.060 \\

$\alpha$
& -1.000
& 0.039
& 0.095
& 0.287
& 0.286
& -1.893
& 0.750
& 0.040 \\

\addlinespace

\multicolumn{9}{l}{\textit{Series Estimator (B-spline, df $= 5$)}} \\
\cmidrule(lr){1-9}

$\sigma$
& 1.000
& -0.012
& 0.032
& 0.063
& 0.063
& 0.800
& 1.100
& 0.040 \\

$\alpha$
& -1.000
& 0.007
& 0.025
& 0.038
& 0.038
& -1.075
& -0.867
& 0.060 \\

\addlinespace

\multicolumn{9}{l}{
    \textit{GMM Estimator 1: Infeasible Optimal IV}
} \\
\cmidrule(lr){1-9}

$\sigma$
& 1.000
& -0.004
& 0.040
& 0.067
& 0.066
& 0.800
& 1.100
& 0.020 \\

$\alpha$
& -1.000
& 0.002
& 0.028
& 0.040
& 0.040
& -1.076
& -0.867
& 0.020 \\

\addlinespace

\multicolumn{9}{l}{
    \textit{GMM Estimator 2: Feasible Optimal IV}
} \\
\cmidrule(lr){1-9}

$\sigma$
& 1.000
& 0.004
& 0.084
& 0.194
& 0.192
& 0.100
& 1.800
& 0.040 \\

$\alpha$
& -1.000
& -0.006
& 0.052
& 0.109
& 0.108
& -1.527
& -0.585
& 0.040 \\

            \bottomrule
        \end{tabular}

        \begin{tablenotes}
            \footnotesize
            \item \textit{Notes:}
            Abs.\ Bias denotes mean absolute bias.
            Rej.\ Rate denotes the empirical rejection rate.
            GMM Estimator 1 uses the true conditional expectation
            in the optimal instrument.
            GMM Estimator 2 uses the feasible approximation in
            \citet{berry1999voluntary}.
        \end{tablenotes}
    \end{threeparttable}
\end{table}

\subsection{Simulation Design}

In the simulations, we illustrate the finite-sample performance of different estimators when the true data-generating process resembles Example \ref{eg:counter}. We draw samples of size $T=500$ with $J=2$. The mean utility takes the form $y_{jt} = p_{jt} \alpha_0 + \xi_{jt}$. The random coefficient for $p_{jt}$ follows $N(0,\sigma_0^2)$. We set the true parameters $\alpha_0 = -1$ and $\sigma_0 = 1$. We draw $\xi_{jt} \sim \text{U}[-1,1]$, independent of the instrument $z_{jt}$, with $p_{jt} = z_{jt} + 0.3 \xi_{jt}$. The instrument $z_{j,t}$ is generated from a mixed distribution: denoting $F_z \coloneqq \frac{5}{11} N(-2,0.0625) + \frac{1}{11} N(1, 0.0625) + \frac{5}{11} N(2,0.0625)$, we draw $z_{1t} \sim F_z$ and $z_{2t} \sim \lambda U[-2,2] + (1-\lambda)F_z$. We consider three configurations with $\lambda=0$, $0.3$, and $1$. Notice that when $\lambda=1$, the distribution of $z_{jt}$ is discrete, similar to the distribution of $p_{jt}$ in Example \ref{eg:counter} with no endogeneity. Our setup therefore illustrates performance of estimators under various generalization of Example \ref{eg:counter}.

For each of the three configurations, we implement estimation of $(\sigma_0, \alpha_0)$ with a kernel estimator, a series estimator, and two GMM estimators. The kernel and series estimators are based on the conditional moment. The two GMM estimators are based on the unconditional moment with respectively the infeasible optimal instrument and a feasible optimal instrument. We implement the estimation in each of the $50$ simulations.

More specifically, we construct the kernel estimator using the sixth-order Epanechnikov kernel, and construct the series estimator using B-splines. For the first GMM estimation, we adopt the infeasible optimal instrument as
\begin{align*}
    \left( \E\left[ x_{jt} \mid z_{jt} \right], \E\left[\frac{\partial y_j}{\partial \sigma} \left( \sigma_0, x_{1t}, x_{2t}\right) \mid z_{jt}  \right]   \right)'= \left( z_{jt}, \E\left[\frac{\partial y_j}{\partial \sigma} \left( \sigma_0, x_{1t}, x_{2t}\right) \mid z_{jt}  \right]   \right)',
\end{align*}
for $j=1,2$. Here we approximate the conditional expectation by averaging over $200$ draws for each grid point of $z_{jt}$ and interpolate the conditional expectation between grid points. For the second GMM estimation, we follow the feasible approximation to the optimal instrument suggested by \citet{berry1999voluntary}

\subsection{Results}
Table \ref{tab:simulation-results-0} summarizes the results under $\lambda=0$. In this case, $z_{jt} \sim F_z$ and thus the scenario is similar to Example \ref{eg:counter}. The kernel and series estimators exhibit small biases and low root mean square errors (RMSEs), indicating good performance in estimating the parameters. In contrast, the GMM estimators show systematically larger biases and RMSEs, suggesting problem of identification of unconditional moments with the optimal instrument, both for the infeasible and the feasible optimal instrument. The empirical rejection rates for the kernel and series estimators are also close to the nominal level, while the GMM estimators display invalid inference at the true parameter values. 

Table \ref{tab:simulation-results-0.3} summarizes the results under $\lambda=0.3$, with $z_{2t}$ having a probability of $0.3$ for taking values from $U[-2,2]$. Although this setup is more different from Example \ref{eg:counter}, larger biases are still displayed in GMM estimators. The kernel and series estimators maintain smaller bias in this case. Under $T=500$ across $50$ Monte Carlo draws, the rejection rate of the kernel estimator displays unstability, but the bias and RMSE is still much smaller than GMM estimators. The series estimator achieves both smaller bias and currect rejection rate.

Table \ref{tab:simulation-results-1} summarizes the results under $\lambda=1$. Here $z_{2t}\sim U[-2,2]$. With the distribution of $z_{jt}$ far from Example \ref{eg:counter}, here the systematically larger biases in GMM estimators disappear. All four estimators display good performance in biases and rejection rates. This serves as an important baseline check, showing the source of the GMM bias is mainly the theoretical property, rather than the computational implementation.

Overall, the simulation results are consistent with our theory for the conditional moment methods, illustrating cases where the unconditional methods are limited while the conditional moment methods still apply.

\section{Conclusion}\label{sec:conclusion}
This paper revisits estimation and inference in the random-coefficients demand model of \cite{berry1995automobile} through its identifying assumption. We begin from the observation that the conditional moment restriction $\E[\xi_{jt}(\theta)\mid z_{jt}]=0$ on which the prevalent identification rests is not equivalent to the unconditional restriction $\E[\xi_{jt}(\theta)\,z_{jt}]=0$ that standard GMM exploits: the unconditional restriction may admit additional parameter values. We give a counterexample in which the model is identified by the conditional restriction yet standard GMM fails to recover $\theta_0$. This gap motivates an estimator built directly on the conditional restriction.

The proposed estimator is a two-step procedure. The first step estimates the relevant conditional expectations nonparametrically; the second chooses the structural parameters to minimize a conditional-variance-weighted sum of squared conditional-moment residuals. Standard GMM arises as the special case in which the conditional expectation is replaced by a linear projection onto a fixed, finite set of instruments.

Our main results establish that this estimator is $\sqrt T$-consistent and asymptotically normal, and we provide two concrete implementations of the first stage—kernel and series—each with primitive conditions under which the limiting distribution holds. Three features of the theory are worth emphasizing. 
First, the two implementations share the same limiting distribution and the same influence function, which depends only on the unobserved product characteristics $\xi_{jt}$. This is not because the first-stage error washes out. With the true conditional expectations the residual is identically zero and the criterion is minimized at $\theta_0$ in every sample, so the first-stage error is in fact the whole source of the limiting variance; what the theory establishes is that it enters only through its projection onto the weighted direction $a_{jt}=\Sigma_j(z_{jt})^{-1}\Gamma_{jt}$. 
Second, because the moment is degenerate at the truth, the conditional expectation may be estimated with a slow nonparametric rate without disturbing the first-order distribution, so a simple plug-in approach suffices. 
Third, with the weight set to the inverse conditional variance, the estimator is efficient for the conditional-moment problem under the additional assumption of within-market independence. The bandwidth and sieve-dimension requirements needed for these conclusions are collected in transparent rate conditions, and the same plug-in variance estimator delivers standard errors for both implementations.

Several extensions are natural. The framework accommodates any first-stage nonparametric estimator satisfying the high-level rate and linearity conditions, so alternatives such as penalized or machine-learning first stages could be analyzed within the same structure. The analysis also takes the market shares as observed; incorporating the sampling and simulation error studied by \cite{freyberger2015asymptotic} into the present conditional-moment framework is left for future work.

\bibliography{bib.bib}

@incollection{berry2021foundations,
  title={Foundations of demand estimation},
  author={Berry, Steven T and Haile, Philip A},
  booktitle={Handbook of industrial organization},
  volume={4},
  number={1},
  pages={1--62},
  year={2021},
  publisher={Elsevier}
}

@article{lavergne2013smooth,
  title={Smooth minimum distance estimation and testing with conditional estimating equations: uniform in bandwidth theory},
  author={Lavergne, Pascal and Patilea, Valentin},
  journal={Journal of Econometrics},
  volume={177},
  number={1},
  pages={47--59},
  year={2013},
  publisher={Elsevier}
}

@article{antoine2014conditional,
  title={Conditional moment models under semi-strong identification},
  author={Antoine, Bertille and Lavergne, Pascal},
  journal={Journal of Econometrics},
  volume={182},
  number={1},
  pages={59--69},
  year={2014},
  publisher={Elsevier}
}

@article{escanciano2018simple,
  title={A simple and robust estimator for linear regression models with strictly exogenous instruments},
  author={Escanciano, Juan Carlos},
  journal={The Econometrics Journal},
  volume={21},
  number={1},
  pages={36--54},
  year={2018},
  publisher={Oxford University Press Oxford, UK}
}

@article{bierens1982consistent,
  title={Consistent model specification tests},
  author={Bierens, Herman J},
  journal={Journal of Econometrics},
  volume={20},
  number={1},
  pages={105--134},
  year={1982},
  publisher={Elsevier}
}

@article{gandhi2019measuring,
  title={Measuring substitution patterns in differentiated-products industries},
  author={Gandhi, Amit and Houde, Jean-Fran{\c{c}}ois},
  journal={NBER Working paper},
  number={w26375},
  year={2019}
}

@article{dominguez2004consistent,
  title={Consistent estimation of models defined by conditional moment restrictions},
  author={Dom{\'\i}nguez, Manuel A and Lobato, Ignacio N},
  journal={Econometrica},
  volume={72},
  number={5},
  pages={1601--1615},
  year={2004},
  publisher={Wiley Online Library}
}

@article{berry1995automobile,
  title={Automobile prices in market equilibrium},
  author={Berry, Steven and Levinsohn, James and Pakes, Ariel},
  journal={Econometrica: Journal of the Econometric Society},
  pages={841--890},
  year={1995},
  publisher={JSTOR}
}

@techreport{salanie2022fast,
  title={Fast, detail-free, and approximately correct: Estimating mixed demand systems},
  author={Salani{\'e}, Bernard and Wolak, Frank A},
  year={2022},
  institution={Working paper}
}

@article{berry2014identification,
  title={Identification in differentiated products markets using market level data},
  author={Berry, Steven T and Haile, Philip A},
  journal={Econometrica},
  volume={82},
  number={5},
  pages={1749--1797},
  year={2014},
  publisher={Wiley Online Library}
}

@article{conlon2020best,
  title={Best practices for differentiated products demand estimation with pyblp},
  author={Conlon, Christopher and Gortmaker, Jeff},
  journal={The RAND Journal of Economics},
  volume={51},
  number={4},
  pages={1108--1161},
  year={2020},
  publisher={Wiley Online Library}
}

@article{reynaert2014improving,
  title={Improving the performance of random coefficients demand models: the role of optimal instruments},
  author={Reynaert, Mathias and Verboven, Frank},
  journal={Journal of econometrics},
  volume={179},
  number={1},
  pages={83--98},
  year={2014},
  publisher={Elsevier}
}

@article{fox2012random,
  title={The random coefficients logit model is identified},
  author={Fox, Jeremy T and il Kim, Kyoo and Ryan, Stephen P and Bajari, Patrick},
  journal={Journal of Econometrics},
  volume={166},
  number={2},
  pages={204--212},
  year={2012},
  publisher={Elsevier}
}

@article{fox2016nonparametric,
  title={Nonparametric identification and estimation of random coefficients in multinomial choice models},
  author={Fox, Jeremy T and Gandhi, Amit},
  journal={The RAND Journal of Economics},
  volume={47},
  number={1},
  pages={118--139},
  year={2016},
  publisher={Wiley Online Library}
}

@article{berry1999voluntary,
  title={Voluntary export restraints on automobiles: Evaluating a trade policy},
  author={Berry, Steven and Levinsohn, James and Pakes, Ariel},
  journal={American Economic Review},
  volume={89},
  number={3},
  pages={400--431},
  year={1999},
  publisher={American Economic Association}
}

@techreport{conlon2023incorporating,
  title={Incorporating Micro Data into Differentiated Products Demand Estimation with PyBLP},
  author={Conlon, Christopher and Gortmaker, Jeff},
  year={2023},
  institution={National Bureau of Economic Research}
}

@article{lee2015computationally,
  title={A computationally fast estimator for random coefficients logit demand models using aggregate data},
  author={Lee, Jinhyuk and Seo, Kyoungwon},
  journal={The RAND Journal of Economics},
  volume={46},
  number={1},
  pages={86--102},
  year={2015},
  publisher={Wiley Online Library}
}

@article{chamberlain1987asymptotic,
  title={Asymptotic efficiency in estimation with conditional moment restrictions},
  author={Chamberlain, Gary},
  journal={Journal of econometrics},
  volume={34},
  number={3},
  pages={305--334},
  year={1987},
  publisher={Elsevier}
}

@article{donald2009choosing,
  title={Choosing instrumental variables in conditional moment restriction models},
  author={Donald, Stephen G and Imbens, Guido W and Newey, Whitney K},
  journal={Journal of Econometrics},
  volume={152},
  number={1},
  pages={28--36},
  year={2009},
  publisher={Elsevier}
}

@article{dube2012improving,
  title={Improving the numerical performance of static and dynamic aggregate discrete choice random coefficients demand estimation},
  author={Dub{\'e}, Jean-Pierre and Fox, Jeremy T and Su, Che-Lin},
  journal={Econometrica},
  volume={80},
  number={5},
  pages={2231--2267},
  year={2012},
  publisher={Wiley Online Library}
}

@article{ai2003efficient,
  title={Efficient estimation of models with conditional moment restrictions containing unknown functions},
  author={Ai, Chunrong and Chen, Xiaohong},
  journal={Econometrica},
  volume={71},
  number={6},
  pages={1795--1843},
  year={2003},
  publisher={Wiley Online Library}
}

@book{li2023nonparametric,
  title={Nonparametric econometrics: theory and practice},
  author={Li, Qi and Racine, Jeffrey Scott},
  year={2023},
  publisher={Princeton University Press}
}

@article{newey1994large,
  title={Large sample estimation and hypothesis testing},
  author={Newey, Whitney K and McFadden, Daniel},
  journal={Handbook of econometrics},
  volume={4},
  pages={2111--2245},
  year={1994},
  publisher={Elsevier}
}

@article{newey1997convergence,
  title={Convergence rates and asymptotic normality for series estimators},
  author={Newey, Whitney K},
  journal={Journal of econometrics},
  volume={79},
  number={1},
  pages={147--168},
  year={1997},
  publisher={Elsevier}
}

@article{lu2023semi,
  title={Semi-nonparametric estimation of random coefficients logit model for aggregate demand},
  author={Lu, Zhentong and Shi, Xiaoxia and Tao, Jing},
  journal={Journal of Econometrics},
  volume={235},
  number={2},
  pages={2245--2265},
  year={2023},
  publisher={Elsevier}
}

@article{nevo2000mergers,
  title={Mergers with differentiated products: The case of the ready-to-eat cereal industry},
  author={Nevo, Aviv},
  journal={The Rand journal of economics},
  pages={395--421},
  year={2000},
  publisher={JSTOR}
}

@article{nevo2001measuring,
  title={Measuring market power in the ready-to-eat cereal industry},
  author={Nevo, Aviv},
  journal={Econometrica},
  volume={69},
  number={2},
  pages={307--342},
  year={2001},
  publisher={Wiley Online Library}
}

@article{petrin2002quantifying,
  title={Quantifying the benefits of new products: The case of the minivan},
  author={Petrin, Amil},
  journal={Journal of political Economy},
  volume={110},
  number={4},
  pages={705--729},
  year={2002},
  publisher={The University of Chicago Press}
}

@article{goldberg2001evolution,
  title={The evolution of price dispersion in the European car market},
  author={Goldberg, Pinelopi Koujianou and Verboven, Frank},
  journal={The Review of Economic Studies},
  volume={68},
  number={4},
  pages={811--848},
  year={2001},
  publisher={Wiley-Blackwell}
}

@article{wang2023sieve,
  title={Sieve BLP: A semi-nonparametric model of demand for differentiated products},
  author={Wang, Ao},
  journal={Journal of Econometrics},
  volume={235},
  number={2},
  pages={325--351},
  year={2023},
  publisher={Elsevier}
}

@ARTICLE{berry1994estimating,
title = {Estimating Discrete-Choice Models of Product Differentiation},
author = {Berry, Steven},
year = {1994},
journal = {RAND Journal of Economics},
volume = {25},
number = {2},
pages = {242-262},
url = {https://EconPapers.repec.org/RePEc:rje:randje:v:25:y:1994:i:summer:p:242-262}
}

@article{freyberger2015asymptotic,
  title={Asymptotic theory for differentiated products demand models with many markets},
  author={Freyberger, Joachim},
  journal={Journal of Econometrics},
  volume={185},
  number={1},
  pages={162--181},
  year={2015},
  publisher={Elsevier}
}

@article{tropp2015introduction,
  title={An introduction to matrix concentration inequalities},
  author={Tropp, Joel A},
  journal={Foundations and trends{\textregistered} in machine learning},
  volume={8},
  number={1-2},
  pages={1--230},
  year={2015},
  publisher={Emerald Publishing Limited}
}

\begin{appendices}
\section{Lemmas}

The following lemma is a matrix version of Bernstein's inequality, which implies a more usual version of Bernstein's inequality for scalar random variables.
\begin{lemma}[Matrix Bernstein's Inequality, Theorem 6.1.1 of \cite{tropp2015introduction}]\label{lem:bernstein}
Let $X_1, X_2, \ldots, X_n$ be independent random matrices of dimension $d_1 \times d_2$ satisfying
\[
    \mathbb{E}[X_i] = 0 \quad \text{and} \quad \|X_i\|_{op} \leq M \quad \text{almost surely, for all } i.
\]
Define
\[
\sigma^2 := \max\left\{ \left\| \sum_{i=1}^n \mathbb{E}[X_i X_i'] \right\|_{op}, \ \left\| \sum_{i=1}^n \mathbb{E}[X_i' X_i] \right\|_{op} \right\}.
\]
Then for any $t > 0$,
\[
    \mathbb{P}\!\left(\|\sum_{i=1}^n X_i\|_{op} \geq t\right)
    \;\leq\;
    (d_1+d_2)\exp\!\left(
        -\,\frac{\,t^2/2\,}{\sigma^2 + M t/3}
    \right).
\]
\end{lemma}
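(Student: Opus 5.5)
Since this lemma is quoted from \cite{tropp2015introduction}, we could simply cite it. If a self-contained argument is wanted, I would follow the matrix Laplace-transform method. The first step is to reduce the rectangular case to the Hermitian one by the \emph{Hermitian dilation}
\[
  \mathscr H(X)=\begin{pmatrix}0 & X\\ X' & 0\end{pmatrix},
\]
a symmetric $(d_1+d_2)\times(d_1+d_2)$ matrix with the following properties:
\begin{itemize}
  \item $\lambda_{\max}(\mathscr H(X))=\|X\|_{op}$;
  \item $\mathscr H$ is linear, so $\E\mathscr H(X_i)=0$ and $\|\mathscr H(X_i)\|_{op}\le M$;
  \item $\mathscr H(X)^2=\operatorname{diag}(XX',X'X)$, so $\bigl\|\sum_i\E\mathscr H(X_i)^2\bigr\|_{op}=\sigma^2$.
\end{itemize}
It therefore suffices to prove that for independent, mean-zero, symmetric $Y_i$ of dimension $d$ with $\lambda_{\max}(Y_i)\le M$,
\[
  \Pr\bigl(\lambda_{\max}(\textstyle\sum_i Y_i)\ge t\bigr)\le d\exp\{-(t^2/2)/(\sigma^2+Mt/3)\},
\]
and then to apply this with $d=d_1+d_2$.

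For the Hermitian bound, fix $\vartheta>0$ and use Markov's inequality together with $e^{\vartheta\lambda_{\max}(S)}\le \operatorname{tr} e^{\vartheta S}$, where $S=\sum_iY_i$:
\[
  \Pr(\lambda_{\max}(S)\ge t)\le e^{-\vartheta t}\,\E\operatorname{tr}\exp(\vartheta S).
\]
The key step is the subadditivity of matrix cumulant generating functions,
\[
  \E\operatorname{tr}\exp\bigl(\textstyle\sum_i\vartheta Y_i\bigr)\le \operatorname{tr}\exp\bigl(\textstyle\sum_i\log\E e^{\vartheta Y_i}\bigr).
\]
This follows from Lieb's theorem: the map $A\mapsto\operatorname{tr}\exp(H+\log A)$ is concave on positive definite matrices. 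Conditioning on all summands but one and applying Jensen's inequality one summand at a time, using independence, gives the bound. I expect this step to be the main obstacle, since it is the only genuinely non-commutative ingredient; Golden--Thompson alone handles only two summands. I would invoke Lieb's theorem as a black box rather than reprove it.

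Next I would bound each summand's mgf. The scalar function $f(x)=(e^{\vartheta x}-\vartheta x-1)/x^2$ is increasing, so $e^{\vartheta x}\le 1+\vartheta x+f(M)x^2$ for $x\le M$. The transfer rule for the spectral calculus then gives $e^{\vartheta Y_i}\preceq I+\vartheta Y_i+f(M)Y_i^2$. Taking expectations and using $\log(I+A)\preceq A$ yields
\[
  \log\E e^{\vartheta Y_i}\preceq g(\vartheta)\,\E Y_i^2,
  \qquad g(\vartheta)=\frac{e^{\vartheta M}-\vartheta M-1}{M^2}\le\frac{\vartheta^2/2}{1-\vartheta M/3}
\]
for $0<\vartheta<3/M$.

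Finally I would combine these bounds. Monotonicity of $\operatorname{tr}\exp$ gives $\operatorname{tr}\exp\bigl(g(\vartheta)\sum_i\E Y_i^2\bigr)\le d\,e^{g(\vartheta)\sigma^2}$, and hence
\[
  \Pr(\lambda_{\max}(S)\ge t)\le d\exp\Bigl\{-\vartheta t+\frac{\vartheta^2\sigma^2/2}{1-\vartheta M/3}\Bigr\}.
\]
Choosing $\vartheta=t/(\sigma^2+Mt/3)$, which lies in $(0,3/M)$, turns the exponent into exactly $-(t^2/2)/(\sigma^2+Mt/3)$. Translating back through the dilation gives the stated inequality with dimensional factor $d_1+d_2$.
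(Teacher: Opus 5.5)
Your proposal is correct and matches the paper, which gives no proof of its own and simply cites Theorem 6.1.1 of \citet{tropp2015introduction}; your sketch (Hermitian dilation, Lieb-based subadditivity of the matrix cumulant generating function, the Bernstein mgf bound $\log\E e^{\vartheta Y_i}\preceq g(\vartheta)\E Y_i^2$, and the choice $\vartheta=t/(\sigma^2+Mt/3)$) is exactly Tropp's own argument. Two small points are left implicit: passing from $\E e^{\vartheta Y_i}\preceq I+g(\vartheta)\E Y_i^2$ to $\log\E e^{\vartheta Y_i}\preceq\log(I+g(\vartheta)\E Y_i^2)$ uses the operator monotonicity of the logarithm, and the degenerate case $\sigma^2=0$ (where every $Y_i=0$ a.s.) should be set aside before claiming $\vartheta<3/M$.
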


\begin{lemma}[Uniform convergence rate]\label{lemma:kernel}
Suppose Conditions \ref{cond:regularity}, and \ref{cond:kernel} hold, and let $\mathcal Z_T$ be as in \eqref{eq:kernel-trim}. Then for each $j$ and $\psi\in\{y_j(\sigma,w_t),\partial_\sigma y_j(\sigma,w_t),\partial^2_\sigma y_j(\sigma,w_t),p_{jt},\xi_{jt}^2\}$,
\[
    \sup_{\sigma\in\Theta_\sigma,\ z\in\mathcal Z_T}\big|\hat m_{\psi,j}(z;\sigma)-m_{\psi,j}(z;\sigma)\big|
    =O_p(\rho_T),
    \qquad
    \sup_{z\in\mathcal Z_T}\big|\hat f_j(z)-f_j(z)\big|=O_p(\rho_T),
\]
where $\rho_T=\big(\tfrac{\log T}{T h^{d_z}}\big)^{1/2}+h^{l}$. Consequently, since $\inf_{\mathcal Z}f_j\ge\underline f>0$,
\[
    \sup_{\sigma\in\Theta_\sigma,\ z\in\mathcal Z_T}\big|\hat\mu_{\psi,j}(z;\sigma)-\mu_{\psi,j}(z;\sigma)\big|=O_p(\rho_T).
\]
Since $h=T^{-q}$ with $\tfrac{1}{2l}<q<\tfrac{1}{2d_z}$, $\rho_T=o(T^{-1/4})$.
\end{lemma}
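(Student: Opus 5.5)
The plan is the standard bias--variance decomposition for kernel estimators. We write $\hat m_{\psi,j}-m_{\psi,j}=(\hat m_{\psi,j}-\E\hat m_{\psi,j})+(\E\hat m_{\psi,j}-m_{\psi,j})$ and bound each piece uniformly over $(\sigma,z)\in\Theta_\sigma\times\mathcal Z_T$. The density estimate $\hat f_j$ is the special case $\psi\equiv1$, so we treat them jointly.

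\emph{Bias.} For $z\in\mathcal Z_T$, \eqref{eq:window-inside} guarantees that the whole window $z+hu$, $u\in[-R,R]^{d_z}$, lies in $\mathcal Z$. The change of variables
\[
\E\hat m_{\psi,j}(z;\sigma)=\int K(u)\,m_{\psi,j}(z+hu;\sigma)\,du
\]
therefore integrates over the full support of $K$. A Taylor expansion of $m_{\psi,j}(\cdot;\sigma)$ to order $l$ with integral remainder, combined with the vanishing moments of the order-$l$ kernel, leaves only the remainder. By Condition~\ref{cond:kernel}\eqref{cond:kernel-psi} this remainder is bounded by $Ch^l\int|K(u)|\|u\|^l du=O(h^l)$, uniformly in $(\sigma,z)$. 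The cases $\psi=p_{jt}$ and $\psi=\xi_{jt}^2$ are handled the same way, using the smoothness of $\mu_{p,j}f_j$ and $\Sigma_jf_j$, both of which follow from the stated smoothness of $\mu_{p,j},\Sigma_j,f_j$.

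\emph{Stochastic term.} We first discretize. Cover $\mathcal Z_T$ by a grid $\mathcal G_z$ of mesh $\delta_T=h^{d_z+1}T^{-1/2}$ and $\Theta_\sigma$ by a grid $\mathcal G_\sigma$ of mesh $T^{-1/2}$. Both grids have cardinality polynomial in $T$, because $\mathcal Z$ and $\Theta_\sigma$ are compact and finite-dimensional. At a fixed grid point, the summands $X_t=K_h(z_{jt}-z)\psi_{jt}(\sigma)-\E[\cdot]$ are i.i.d.\ by Condition~\ref{cond:regularity}\eqref{cond:independence} and satisfy $|X_t|\le Ch^{-d_z}$. They also satisfy $\E X_t^2\le Ch^{-d_z}$, since $f_j\le\bar f$ and $\psi$ is bounded. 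Boundedness of $\psi$ uniformly in $\sigma$ comes from the bound at $\sigma_0$, Lipschitz continuity, and compactness of $\Theta_\sigma$; for $p_{jt}$ and $\xi_{jt}^2$ it comes from Condition~\ref{cond:regularity}\eqref{cond:cons-reg}. Lemma~\ref{lem:bernstein} with $d_1=d_2=1$ and $t=A(\log T/(Th^{d_z}))^{1/2}$ then gives a tail bound of order $\exp(-cA^2\log T)$. Here we use $\log T/(Th^{d_z})\to0$, which holds because $q<1/(2d_z)$. A union bound over the polynomially many grid points costs only a power of $T$, which is absorbed by taking $A$ large.

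Between grid points we use Lipschitz control. Since $K$ is Lipschitz, $|K_h(z'-z_1)-K_h(z'-z_2)|\le Ch^{-d_z-1}\|z_1-z_2\|$. Since $\psi$ is Lipschitz in $\sigma$, $|\hat m(z;\sigma_1)-\hat m(z;\sigma_2)|\le \E_T|K_h|\,L|\sigma_1-\sigma_2|$, and $\E_T|K_h|=O_p(1)$ by the same argument applied to $|K|$. These bounds hold both for $\hat m$ and for its expectation, so the discretization error is $O_p(T^{-1/2})=o(\rho_T)$. We expect this step to be the main technical obstacle, specifically the uniformity in $\sigma$ jointly with $z$; the fallback is to bound $\E_T|K_h(z_{jt}-z)|$ uniformly in $z$ by the same grid argument.

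\emph{Ratio and rate.} On $\mathcal Z_T$ we have $f_j\ge\underline f$, so with probability approaching one $\inf_{\mathcal Z_T}\hat f_j\ge\underline f/2$. The identity
\[
\hat\mu-\mu=\bigl[(\hat m-m)-\mu(\hat f-f)\bigr]/\hat f,
\]
together with the boundedness of $\mu_{\psi,j}$, gives the $O_p(\rho_T)$ bound for the ratio. Finally, $h^l=T^{-ql}=o(T^{-1/4})$ because $ql>1/2$. Also $(\log T/(Th^{d_z}))^{1/2}=T^{-(1-qd_z)/2}(\log T)^{1/2}$, and $(1-qd_z)/2>1/4$ because $qd_z<1/2$. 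Hence $\rho_T=o(T^{-1/4})$.
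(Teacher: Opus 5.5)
Your proposal is correct and follows essentially the same route as the paper: the bias comes from the change of variables over the full window guaranteed by \eqref{eq:window-inside} together with the order-$l$ kernel, and the stochastic part from a polynomial-size grid, Bernstein's inequality with a union bound, and Lipschitz interpolation between grid points, after which the ratio identity with $\hat f_j\ge\underline f/2$ w.p.a.1 finishes the argument. The only difference is minor: the paper covers $\Theta_\sigma\times\mathcal Z$ by a single family of cubes of side $\ell_T=(\log T)^{1/2}h^{d_z/2+1}T^{-1/2}$ and uses the crude joint Lipschitz constant $O(h^{-d_z-1})$ in $(\sigma,z)$, so the discretization error is bounded deterministically and the step you flag as the main obstacle (needing $\sup_z\E_T|K_h(z_{jt}-z)|=O_p(1)$) never arises---equivalently, you could simply refine your $\sigma$-mesh to $h^{d_z}T^{-1/2}$.
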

 
\begin{proof}[Proof of Lemma~\ref{lemma:kernel}]
The argument is a market-indexed version of the uniform kernel bound in \citet[Thm.~1.4]{li2023nonparametric}.
We prove the rate for $\hat m_{\psi,j}$; the density is the special case $\psi\equiv1$, and the ratio bound follows from
\[
    \hat\mu_{\psi,j}-\mu_{\psi,j}
    =\frac{\hat m_{\psi,j}-m_{\psi,j}}{\hat f_j}-\mu_{\psi,j}\frac{\hat f_j-f_j}{\hat f_j},
\]
together with $\hat f_j\ge\underline f/2$ w.p.a.1 uniformly on $\mathcal Z_T$ (from the density rate on $\mathcal Z_T$ and $\inf_{\mathcal Z} f_j\ge\underline f$).
 
Split into a bias and a variance part,
\[
    \sup_{\sigma,\,z\in\mathcal Z_T}\big|\hat m_{\psi,j}(z;\sigma)-m_{\psi,j}(z;\sigma)\big|
    \le\underbrace{\sup_{\sigma,\,z\in\mathcal Z}\big|\hat m_{\psi,j}(z;\sigma)-\E\hat m_{\psi,j}(z;\sigma)\big|}_{(\mathrm{I})}
    +\underbrace{\sup_{\sigma,\,z\in\mathcal Z_T}\big|\E\hat m_{\psi,j}(z;\sigma)-m_{\psi,j}(z;\sigma)\big|}_{(\mathrm{II})}.
\]
 
\textbf{\emph{Bias $(\mathrm{II})$.}} Fix $z\in\mathcal Z_T$. Since $f_j$ vanishes off $\mathcal Z$, so does $m_{\psi,j}=\mu_{\psi,j}f_j$, and the integral below is over $\R^{d_z}$ or over $\mathcal Z$ indifferently. Exploiting the $l$-th order kernel,
\begin{align*}
    \E\hat m_{\psi,j}(z;\sigma) = & \E \psi_{jt}(\sigma)\,K_h(z_{jt}-z) \\
    = & \E \mu_{\psi,j}(z_{jt};\sigma)\,K_h(z_{jt}-z)\\
    =&\int K_h(s-z)\,m_{\psi,j}(s;\sigma)\,ds\\
    =&\int K(v)\,m_{\psi,j}(z+vh;\sigma)\,dv\\
    =&m_{\psi,j}(z;\sigma)+O(h^{l}),
\end{align*}
uniformly in $(\sigma,z)\in\Theta_\sigma\times\mathcal Z_T$, using the $l$-th order Taylor expansion of $m_{\psi,j}$ and the bound on its $l$-th derivatives. The expansion is available precisely because $z\in\mathcal Z_T$: by \eqref{eq:window-inside} every point $z+vh$ with $v\in\operatorname{supp}K$ lies in $\mathcal Z$, so the change of variables integrates the full kernel---$\int K(v)\,dv=1$ and $\int K(v)v^m\,dv=0$ for $1\le m<l$---against a function that is $l$-times continuously differentiable on the whole window by Condition~\ref{cond:kernel}\eqref{cond:kernel-psi}. Hence $(\mathrm{II})=O(h^{l})$ on $\mathcal Z_T$.

\emph{Variance part $(\mathrm{I})$.} This part requires no trimming and is proved uniformly over $\mathcal Z$. Let $\tilde{\mathcal S}=\Theta_\sigma\times\mathcal Z$, a compact subset of $\R^{d_\sigma+d_z}$, and cover it by $L_T$ cubes $\{I_{k}\}_{k\le L_T}$ of side $\ell_T$ centered at $(\sigma_k,z_k)$, so $L_T\asymp \ell_T^{-(d_\sigma+d_z)}$. Decompose
\begin{align*}
    (\mathrm{I})\le& 
    \underbrace{\max_k\sup_{(\sigma,z)\in I_k}\big|\hat m_{\psi,j}(z;\sigma)-\hat m_{\psi,j}(z_k;\sigma_k)\big|}_{Q_1}
    +\underbrace{\max_k\big|\hat m_{\psi,j}(z_k;\sigma_k)-\E\hat m_{\psi,j}(z_k;\sigma_k)\big|}_{Q_2}\\
    &+\underbrace{\max_k\sup_{(\sigma,z)\in I_k}\big|\E\hat m_{\psi,j}(z;\sigma)-\E\hat m_{\psi,j}(z_k;\sigma_k)\big|}_{Q_3}.
\end{align*}
Because $K$ has compact support in $\R^{d_z}$ and is Lipschitz and $\psi_{jt}(\sigma)$ is Lipschitz in $\Theta_\sigma$, $K_h(\cdot-z)\psi(\sigma)$ is Lipschitz in $(\sigma,z)$ with constant $O(h^{-d_z-1})$, so both discretization terms satisfy $Q_1,Q_3=O\big(\ell_T\,h^{-d_z-1}\big)$ almost surely.

For $Q_2$, fix $(\sigma_k,z_k)$ and write $W_T=\hat m_{\psi,j}(z_k;\sigma_k)-\E\hat m_{\psi,j}(z_k;\sigma_k)=\sum_{t=1}^{T}\zeta_t$ with
\[
    \zeta_t=\tfrac1T\big\{K_h(z_{jt}-z_k)\psi_{jt}(\sigma_k)-\E K_h(z_{jt}-z_k)\psi_{jt}(\sigma_k)\big\}.
\]
The summands are independent across markets (Condition~\ref{cond:regularity}\eqref{cond:independence}), mean zero, bounded by $|\zeta_t|\le 2A_1/(Th^{d_z})$ for $A_1=\|\psi\|_\infty\|K\|_\infty$, and with variance
\begin{align*}
    \E\zeta_t^2\le \tfrac1{T^2}\,\E\big[K_h(z_{jt}-z_k)^2\psi_{jt}(\sigma_k)^2\big]
    \le & \tfrac1{T^2}\,\|\psi\|_\infty^2\E\big[K_h(z_{jt}-z_k)^2\big]\\
    = & \tfrac1{T^2}\,\|\psi\|_\infty^2\int\tfrac1{h^{d_z}}K(v)^2f_j(vh+z_k)dv\\
    \le & \tfrac{A_2}{T^2 h^{d_z}},\qquad
    A_2=\|\psi\|_\infty^2\Big(\sup_{\mathcal Z}f_j\Big)\!\int K(v)^2dv .
\end{align*}
Note that this bound uses only $f_j\le\bar f$, which holds on all of $\mathcal Z$, so no trimming is needed here.

By Bernstein's inequality (Lemma \ref{lem:bernstein}), for any $\eta_T>0$,
\[
    P\big(|W_T|\ge\eta_T\big)\le 2\exp\!\Big(-\tfrac{\tfrac12\eta_T^2}{A_2/(Th^{d_z})+\tfrac{2A_1}{3Th^{d_z}}\eta_T}\Big)
    = 2\exp\!\Big(-\frac{1}{2(A_2+\frac{2}{3}A_1\eta_T)}\,\eta_T^2 Th^{d_z}\Big)
\]
and a union bound over the $L_T$ centers gives $P(Q_2\ge\eta_T)\le 2L_T\exp(-C\,\eta_T^2 Th^{d_z})$ for $\eta_T$ bounded. Setting $\eta_T=C_4\big(\tfrac{\log T}{Th^{d_z}}\big)^{1/2}$, a bounded sequence, and choosing $C_4$ large,
\[
    P(Q_2\ge\eta_T)\le 2L_T\,T^{-C_4^2 C}\longrightarrow0
\]
Summarizing, we have
\[
    Q_1+Q_2+Q_3 \lesssim \ell_T\,h^{-d_z-1}+C_4\big(\tfrac{\log T}{Th^{d_z}}\big)^{1/2}\quad \text{with probability at least } 1-2L_T\,T^{-C_4^2 C}
\]
Finally take $\ell_T=(\log T)^{1/2}h^{d_z/2+1}T^{-1/2}$, which makes $\ell_T\,h^{-d_z-1}=(\log T/(Th^{d_z}))^{1/2}$ as well while keeping $\log L_T\asymp \log \ell_T^{-1}\asymp \log T-(\frac{d_z}{2}+1)\log h\asymp \log T$, because $h\asymp T^{-q}$. 
Hence $2L_T\,T^{-C_4^2 C}\to0$.
Combining, $(\mathrm{I})=O_p\big((\log T/(Th^{d_z}))^{1/2}\big)$ uniformly on $\Theta_\sigma\times\mathcal Z$, and with $(\mathrm{II})=O(h^l)$ on $\Theta_\sigma\times\mathcal Z_T$ we obtain the stated $\rho_T$.
\end{proof}

\begin{lemma}[Kernel influence function]\label{lemma:kernel-linear}
Let $a(z)=\Lambda_j(z)\Gamma_j(z)$, where $\Gamma_j(z)=\big(\dot\mu_{y,j}(z;\sigma_0),-\mu_{p,j}(z),-x(z)\big)'$ and $\Lambda_{j}$ satisfies $0<\underline c<\Lambda_{j}(z)<\bar c$ for all $z\in\mathcal Z$. Suppose $a$ is bounded, $\sup_{z\in\mathcal Z}\|a(z)\|\le\bar C$, and continuous on the interior of $\mathcal Z$; no derivatives of $a$ are required.
Under Conditions~\ref{cond:regularity}, \ref{cond:y-smooth} and \ref{cond:kernel}, define $e^{\psi}_{jt}=\psi_{jt}(\sigma_0)-\mu_{\psi,j}(z_{jt})$, with the kernel estimator $\hat\mu_{\psi,j}$ evaluated at $\sigma_0$. Then, for any $\psi_{jt}(\sigma)\in\{y_j(\sigma,w_t),\ \partial_\sigma y_j(\sigma,w_t),\ \partial^2_\sigma y_j(\sigma,w_t),\ p_{jt}\}$,
\begin{equation}\label{eq:kernel-proj}
    \E_T\big[\big(\hat\mu_{\psi,j}(z_{jt};\sigma_0)-\mu_{\psi,j}(z_{jt};\sigma_0)\big)\,\tau_T(z_{jt})\,a(z_{jt})\big]
    =\E_T\,\big[e^{\psi}_{jt}\,a(z_{jt})\big]+o_p(T^{-1/2}),
\end{equation}
where all objects are evaluated at $\sigma_0$. Note that the left-hand side is trimmed and the right-hand side is not: the smoothing of the weight by the kernel and the deletion of the boundary collar both wash out at the $\sqrt T$ scale.
\end{lemma}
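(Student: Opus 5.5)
The plan is the standard V-statistic projection argument for Nadaraya--Watson plug-ins, built on the uniform rates of Lemma~\ref{lemma:kernel}. Fix $j$, suppress $\sigma_0$, and first linearize the ratio. On $\mathcal Z_T$ write
\[
\hat\mu_{\psi,j}-\mu_{\psi,j}=\frac{\hat m_{\psi,j}-\mu_{\psi,j}\hat f_j}{f_j}
-\frac{(\hat m_{\psi,j}-\mu_{\psi,j}\hat f_j)(\hat f_j-f_j)}{f_j\hat f_j}.
\]
By Lemma~\ref{lemma:kernel}, and since $\hat f_j\ge\underline f/2$ w.p.a.1 on $\mathcal Z_T$, the second term is $O_p(\rho_T^2)=o_p(T^{-1/2})$ uniformly. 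After multiplying by the bounded $\tau_T a$ and averaging, it is therefore negligible. What remains is the term with numerator
\[
\hat m_{\psi,j}(z)-\mu_{\psi,j}(z)\hat f_j(z)=\E_T\bigl[K_h(z_{jl}-z)\{\psi_{jl}-\mu_{\psi,j}(z)\}\bigr],
\]
so the main term becomes the second-order V-statistic
\[
S_T=\frac1{T^2}\sum_{t,l}K_h(z_{jl}-z_{jt})\bigl\{\psi_{jl}-\mu_{\psi,j}(z_{jt})\bigr\}\frac{\tau_T(z_{jt})a(z_{jt})}{f_j(z_{jt})}.
\]

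Next I would apply the Hoeffding decomposition to $S_T$. The diagonal terms $t=l$ are $O(1/(Th^{d_z}))=o(T^{-1/2})$, because $q<1/(2d_z)$. The degenerate part of the U-statistic has second moment $O(T^{-2}h^{-d_z})$, hence is $o_p(T^{-1/2})$. Two projections remain.

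\emph{Projection onto the index $t$.} This is $\E_T[(\E_l[\cdot]\mid z_{jt})]$, which equals $\E_T[(\E\hat m_{\psi,j}(z_{jt})-\mu_{\psi,j}(z_{jt})\E\hat f_j(z_{jt}))\tau_Ta/f_j]$. By the bias step of Lemma~\ref{lemma:kernel} on $\mathcal Z_T$, this is $O(h^l)=o(T^{-1/2})$, since $q>1/(2l)$. Its centering is likewise negligible.

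\emph{Projection onto the index $l$.} Write $\psi_{jl}-\mu_{\psi,j}(z)=e^\psi_{jl}+\{\mu_{\psi,j}(z_{jl})-\mu_{\psi,j}(z)\}$. The projection equals
\[
\E_T\Bigl[e^\psi_{jl}\,\bar a_T(z_{jl})\Bigr]+\E_T\Bigl[\int K_h(z_{jl}-s)\{\mu_{\psi,j}(z_{jl})-\mu_{\psi,j}(s)\}\tau_T(s)a(s)\,ds\Bigr],
\]
where
\[
\bar a_T(u)=\int K_h(u-s)\tau_T(s)a(s)\,ds.
\]
The second summand has mean $-$(the $t$-projection mean) and a variance of order $h^2/T$, so it contributes $o_p(T^{-1/2})$ after centering.

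The key remaining step, which I expect to be the main obstacle, is to show
\[
\E_T\bigl[e^\psi_{jl}\{\bar a_T(z_{jl})-a(z_{jl})\}\bigr]=o_p(T^{-1/2}).
\]
Since $\E[e^\psi\mid z]=0$, the summand is mean zero, so it suffices to show $\E[(\bar a_T-a)^2(z_{jl})]\to0$ using only boundedness and continuity of $a$. For $u$ with $d(u,\partial\mathcal Z)\ge 2Rh$, every $s$ in the window has $\tau_T(s)=1$. Hence $\bar a_T(u)-a(u)=\int K(v)\{a(u-vh)-a(u)\}dv\to0$ pointwise on the interior by continuity, and it is uniformly bounded by $\|K\|_1\bar C$. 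On the collar of width $2Rh$, whose probability is $O(h)$ by the Lipschitz-boundary argument, the difference is merely bounded. Dominated convergence then gives $L^2$ convergence to zero, and Chebyshev finishes. This is also where the untrimmed right-hand side in \eqref{eq:kernel-proj} appears: the trimming is absorbed into $\bar a_T$, and it disappears in the limit precisely because the collar mass vanishes.
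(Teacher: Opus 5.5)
Your proposal is correct and follows essentially the same route as the paper's proof. The shared steps are the ratio linearization with an $O_p(\rho_T^2)$ remainder and the H\'ajek projection of the resulting V-statistic: the outer projection is killed by the order-$l$ bias on $\mathcal Z_T$, and the inner projection splits into $e^{\psi}\bar a_T$ plus a bias piece with mean $O(h^l)$ and fluctuation $O_p(hT^{-1/2})$; you then use dominated convergence for $\|\bar a_T-a\|_{L^2}\to0$ and the same diagonal and degenerate-part bounds. The only slip is harmless: the bias piece's mean equals, rather than is minus, the $t$-projection mean, both being $\bar c_T=O(h^l)=o(T^{-1/2})$.
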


\begin{proof}[Proof of Lemma~\ref{lemma:kernel-linear}]
Throughout, $\tau_T(z_{jt})=1$ forces $z_{jt}\in\mathcal Z_T$, so every uniform bound below is needed only on $\mathcal Z_T$, which is exactly what Lemma~\ref{lemma:kernel} supplies.
 
By Lemma~\ref{lemma:kernel} and $\hat f_j\ge\underline f/2$ w.p.a.1 on $\mathcal Z_T$,
\begin{align*}
    \hat\mu_{\psi,j}-\mu_{\psi,j}
    =&\frac{1}{f_j}\Big[(\hat m_{\psi,j}-m_{\psi,j})-\mu_{\psi,j}(\hat f_j-f_j)\Big]\\
    &-\underbrace{\frac{1}{f_j}\Big[(\hat\mu_{\psi,j}-\mu_{\psi,j})(\hat{f}_j-f_j)\Big]}_{R_{jt}}\qquad \sup_{\Theta_\sigma,\ \mathcal Z_T}|R_{jt}|=O_p(\rho_T^2)=o_p(T^{-1/2}),
\end{align*}
so the remainder $R_{jt}$ is negligible after averaging against the bounded $\tau_T\,a$. Evaluating the linear part at $z=z_{jt}$ and using $m_{\psi,j}=\mu_{\psi,j}f_j$,
\[
    (\hat m_{\psi,j}-m_{\psi,j})-\mu_{\psi,j}(\hat f_j-f_j)\Big|_{z=z_{jt}}=(\hat m_{\psi,j}-\mu_{\psi,j}\hat f_j)\Big|_{z=z_{jt}}
    =\E_T^{s}\big[K_h(z_{js}-z_{jt})\big(\psi_{js}-\mu_{\psi,j}(z_{jt})\big)\big],
\]
where $\E_T^s=\frac{1}{T}\sum_{s=1}^T$ averages over the smoothing index $s=1,\dots,T$. 

Hence the linear part of the left side of \eqref{eq:kernel-proj} is the double sum
\[
    U_T=\frac{1}{T^2}\sum_{t=1}^{T}\sum_{s=1}^{T}\,\underbrace{\frac{\tau_T(z_{jt})\,a(z_{jt})}{f_j(z_{jt})}\,K_h(z_{js}-z_{jt})\,\big(\psi_{js}-\mu_{\psi,j}(z_{jt})\big)}_{=:~c_T(w_{jt},w_{js})},
\]
where $w_{jt}=(z_{jt},\psi_{jt})$. This is a $V$-statistic of order two: each summand depends on a pair of observations $(t,s)$, the outer index $t$ entering through the evaluation point $\tau_T(z_{jt})a(z_{jt})/f_j(z_{jt})$, $\mu_{\psi,j}(z_{jt})$, and the smoothing kernel $K_h(z_{js}-z_{jt})$, and the inner index $s$ through the regressand $\psi_{js}$, and $K_h(z_{js}-z_{jt})$. The trimming indicator rides along on the outer variable only. The kernel $c_T$ is not symmetric in $(t,s)$ and changes with $T$ through $h$ and through $\tau_T$, but the projection argument below applies unchanged.

\emph{Idea of the projection.}
A second-order $V$-statistic $\tfrac{1}{T^2}\sum_{t,s}c_T(w_{jt},w_{js})$ is, to first order, equal to the sum of its one-observation conditional averages,
\begin{equation}\label{eq:hajek}
    U_T=\bar c_T+\E_T\big[c_{1,T}(w_{jt})\big]+\E_T\big[c_{2,T}(w_{js})\big]-2\bar c_T+(\text{degenerate remainder}),
\end{equation}
where $\bar c_T=\E[c_T(w_{jt},w_{js})]$ (both arguments independent), and the two \emph{Hájek projections}
\[
    c_{1,T}(w)=\E[c_T(w,w_{js})],\qquad c_{2,T}(w)=\E[c_T(w_{jt},w)]
\]
average out one argument while holding the other fixed. The remainder collects the part of $c_T$ orthogonal to every single observation; we will show at last that it is $o_p(T^{-1/2})$, so $U_T$ is asymptotically a plain sample average. We now compute the two projections.

\emph{Projection fixing the inner observation $w_{js}=(z_{js},\psi_{js})$.}
Averaging over the outer observation $z_{jt}$ (which is independent of $w_{js}$ and has density $f_j$),
\[
    c_{2,T}(w_{js})
    =\int \frac{\tau_T(z)a(z)}{f_j(z)}\,K_h(z_{js}-z)\,\big(\psi_{js}-\mu_{\psi,j}(z)\big)\,f_j(z)\,dz
    =\int \tau_T(z)\,a(z)\,K_h(z_{js}-z)\,\big(\psi_{js}-\mu_{\psi,j}(z)\big)\,dz,
\]
the density cancelling. Insert $\psi_{js}=\mu_{\psi,j}(z_{js})+e^{\psi}_{js}$ and split into a piece carrying the projection residual and a bias piece,
\[
    c_{2,T}(w_{js})=\tilde a_T(z_{js})\,e^{\psi}_{js}+B_T(z_{js}),
\]
\[
    \tilde a_T(y):=\int \tau_T(z)\,a(z)\,K_h(y-z)\,dz,
    \qquad
    B_T(y):=\int \tau_T(z)\,a(z)\,K_h(y-z)\big(\mu_{\psi,j}(y)-\mu_{\psi,j}(z)\big)\,dz .
\]
Here $\tilde a_T$ is the weight $\tau_T a$ smoothed by the kernel; it is a deterministic function. We show
\begin{equation}\label{eq:c2-pieces}
    \text{(a)}\quad \E_T\big[(\tilde a_T-a)(z_{js})\,e^{\psi}_{js}\big]=o_p(T^{-1/2}),
    \qquad
    \text{(b)}\quad \E_T\big[B_T(z_{js})\big]=o_p(T^{-1/2}),
\end{equation}
which together give $\E_T[c_{2,T}(w_{js})]=\E_T[a(z_{js})e^{\psi}_{js}]+o_p(T^{-1/2})$.

\emph{(a) The residual piece.} The summands $(\tilde a_T-a)(z_{js})e^{\psi}_{js}$ are i.i.d.\ across $s$; since $\tilde a_T-a$ is a deterministic function of $z$ and $\E[e^{\psi}_{js}\mid z_{js}]=0$, each has mean zero. As $e^{\psi}$ is bounded,
\[
    \operatorname{Var}\Big(\E_T\big[(\tilde a_T-a)\,e^{\psi}\big]\Big)
    =\frac1T\,\E\big[(\tilde a_T-a)^2 (e^{\psi})^2\big]
    \le \frac{C}{T}\,\big\|\tilde a_T-a\big\|_{L^2}^2 ,
\]
so all that is required is $\|\tilde a_T-a\|_{L^2}=o(1)$---no rate. This holds by dominated convergence. The family is uniformly bounded, 
\begin{align*}
    \|\tilde a_T\|_\infty=\|\int K(u)\,\tau_T(y-hu)\,a(y-hu)\,du\|_\infty\le\bar C\int|K(u)|du,
\end{align*} and for a.e.\ $y$ in the interior of $\mathcal Z$ (the boundary is Lebesgue-null by Condition~\ref{cond:kernel}\eqref{cond:kernel-boundedness}) we have $d(y,\partial\mathcal Z)>0$, so for all $T$ large enough $\tau_T(y-hu)=1$ for every $u\in\operatorname{supp}K$, whence
\[
    \tilde a_T(y)=\int K(u)\,\tau_T(y-hu)\,a(y-hu)\,du=\int K(u)\,a(y-hu)\,du\longrightarrow a(y)
\]
by continuity of $a$ at $y$ and $\int K(u)\,du=1$. Chebyshev's inequality then gives (a).
 
\emph{(b) The bias piece.} $\E_T[B_T(z_{js})]$ is an i.i.d.\ average, so $\E_T[B_T]=\E[B_T]+O_p\big(\sqrt{\E[B_T^2]/T}\big)$.
 
For the mean, Fubini and the change of variables $y=z+hu$ give
\[
    \E[B_T]=\int \tau_T(z)\,a(z)\left[\int K(u)\big(\mu_{\psi,j}(z+hu)-\mu_{\psi,j}(z)\big)f_j(z+hu)\,du\right]dz .
\]
For every $z$ with $\tau_T(z)=1$ the window $B(z,Rh)$ lies inside $\mathcal Z$ by \eqref{eq:window-inside}, so
$G_z(v):=\big(\mu_{\psi,j}(z+v)-\mu_{\psi,j}(z)\big)f_j(z+v)=m_{\psi,j}(z+v)-\mu_{\psi,j}(z)f_j(z+v)$
is $l$-times continuously differentiable in $v$ on that window with derivatives bounded uniformly in $z$ (Condition~\ref{cond:kernel}\eqref{cond:kernel-psi}), and $G_z(0)=0$. The order-$l$ kernel annihilates the Taylor terms of degrees $1,\dots,l-1$, so the bracket is $O(h^{l})$ uniformly over $\{\tau_T=1\}$, and since $a$ is bounded,
\[
    \E[B_T]=O(h^{l})=o(T^{-1/2}),
\]
the last step by the undersmoothing $q>\tfrac1{2l}$ in Condition~\ref{cond:kernel}\eqref{cond:kernel-smoothness-bw}. Only boundedness of $a$ and smoothness of $m_{\psi,j}$ and $f_j$ are used; the indicator is never differentiated, because in this Fubini form it sits on the outer variable.
 
For the fluctuation a crude bound suffices. Uniformly in $y$, using only the Lipschitz continuity of $\mu_{\psi,j}$ on $\mathcal Z$ (which follows from Condition~\ref{cond:kernel}\eqref{cond:kernel-psi} and $f_j\ge\underline f$),
\[
    |B_T(y)|\le \bar C\,L_\mu\int|K_h(y-z)|\,\|y-z\|\,dz
    =\bar C\,L_\mu\,h\int|K(u)|\,\|u\|\,du=O(h),
\]
so $\E[B_T^2]=O(h^2)$ and $\E_T[B_T]-\E[B_T]=O_p(h\,T^{-1/2})=o_p(T^{-1/2})$. This closes (b), and hence
\[
    \E_T\big[c_{2,T}(w_{js})\big]=\E_T\big[a(z_{js})\,e^{\psi}_{js}\big]+o_p(T^{-1/2}).
\]

\emph{Projection fixing the outer observation $w_{jt}=(z_{jt},\psi_{jt})$.}
Averaging over the inner observation, and using $\E[\psi_{js}\mid z_{js}]=\mu_{\psi,j}(z_{js})$,
\begin{align*}
    c_{1,T}(w_{jt})
    =&\frac{\tau_T(z_{jt})\,a(z_{jt})}{f_j(z_{jt})}\int K_h(z-z_{jt})\,\big(\mu_{\psi,j}(z)-\mu_{\psi,j}(z_{jt})\big)\,f_j(z)\,dz\\
    =&\frac{\tau_T(z_{jt})\,a(z_{jt})}{f_j(z_{jt})}\int K(u)\,\big(\mu_{\psi,j}(z_{jt}+hu)-\mu_{\psi,j}(z_{jt})\big)\,f_j(z_{jt}+hu)\,du
    =O(h^l),
\end{align*}
uniformly: when $\tau_T(z_{jt})=0$ the expression vanishes, and when $\tau_T(z_{jt})=1$ the window lies inside $\mathcal Z$ by \eqref{eq:window-inside}, the integrand vanishes at $u=0$, and the $l$-th order kernel kills the lower-order Taylor terms, exactly as in the display for $\E[B_T]$. Hence $\E_T[c_{1,T}(w_{jt})]=O(h^{l})=o(T^{-1/2})$ deterministically, and the constant $\bar c_T=\E[c_{1,T}(w_{jt})]=\E[c_{2,T}(w_{js})]=\E[B_T]$ is likewise $O(h^l)$.
 
\emph{Assembling the projection.}
Substituting the two projections into \eqref{eq:hajek}, the $c_{1,T}$ term and the centering constants are $O(h^l)=o(T^{-1/2})$, so
\[
    U_T=\E_T\big[c_{2,T}(w_{js})\big]+o(T^{-1/2})+(\text{remainder})
    =\E_T\big[a(z_{js})\,e^{\psi}_{js}\big]+o_p(T^{-1/2})+(\text{remainder}).
\]

\emph{Negligibility of the remainder.}
Recall from \eqref{eq:hajek} that the remainder is $U_T$ minus its Hájek projection,
\[
    R_T:=U_T-\Big(\bar c_T+\E_T[c_{1,T}(w_{jt})]+\E_T[c_{2,T}(w_{js})]-2\bar c_T\Big).
\]
Both terms below use only $|\tau_T|\le1$ and the boundedness of $a/f_j$, so the argument is that of the untrimmed case. Separate the diagonal and off-diagonal parts of the double sum defining $U_T$:
\[
    U_T=\frac{1}{T^2}\sum_{t\ne s}c_T(w_{jt},w_{js})+\frac{1}{T^2}\sum_{t=1}^{T}c_T(w_{jt},w_{jt})
    =:U_T^{\mathrm{off}}+U_T^{\mathrm{diag}}.
\]
 
\emph{Diagonal term.}
Since $\tau_T a/f_j$, $\psi$, and $\mu_{\psi,j}$ are bounded and $K_h(0)=h^{-d_z}K(0)$,
\[
    \big|U_T^{\mathrm{diag}}\big|
    =\Big|\frac1{T^2}\sum_{t}\frac{\tau_T(z_{jt})a(z_{jt})}{f_j(z_{jt})}K_h(0)\big(\psi_{jt}-\mu_{\psi,j}(z_{jt})\big)\Big|
    \le \frac{C}{T^2}\cdot T\cdot h^{-d_z}K(0)
    =O\!\Big(\frac{1}{Th^{d_z}}\Big)=o(T^{-1/2}).
\]
where $(Th^{d_z})^{-1}=o(T^{-1/2})$. Hence $U_T^{\mathrm{diag}}=o_p(T^{-1/2})$.

\emph{Off-diagonal term.}
Write $\mathring c_T=c_T-c_{1,T}-c_{2,T}+\bar c_T$ for the doubly-centered kernel, which satisfies, for $t\ne s$, $\E[c_{2,T}(w_{js})|w_{jt}]=\E[c_{2,T}(w_{js})]=\bar{c}_T$ and $\E[c_{1,T}(w_{jt})|w_{js}]=\E[c_{1,T}(w_{jt})]=\bar c_T$, so
\[
    \E\big[\mathring c_T(w_{jt},w_{js})\mid w_{jt}\big]=0
    \quad\text{and}\quad
    \E\big[\mathring c_T(w_{jt},w_{js})\mid w_{js}\big]=0.
\]
The degenerate ($U$-statistic) part of the off-diagonal sum is
\[
    \mathring U_T=\frac{1}{T^2}\sum_{t\ne s}\mathring c_T(w_{jt},w_{js}),
\]
and 
$$R_T=\mathring U_T+U_T^{\mathrm{diag}}-\frac{1}{T}\Big(\E_T[c_{1,T}(w_{jt})]+\E_T[c_{2,T}(w_{js})]-\bar c_T\Big)=\mathring U_T+U_T^{\mathrm{diag}}+\frac{1}{T}O_p(T^{-1/2}+h^l).$$ 
It remains to bound $\mathring U_T$. Because the summands are doubly centered, all cross-terms in the second moment vanish: for $(t,s)\ne(t',s')$ with $t\ne s$, $t'\ne s'$, at least one index is unmatched, and conditioning on the other three observations gives $\E[\mathring c_T(w_{jt},w_{js})\mathring c_T(w_{jt'},w_{js'})]=0$. 
For example, if $t=t'$ but $s\ne s'$, then 
\begin{align*}
    \E[\mathring c_T(w_{jt},w_{js})\mathring c_T(w_{jt},w_{js'})]=&\E[\E[\mathring c_T(w_{jt},w_{js})\mathring c_T(w_{jt},w_{js'})|w_{jt},w_{js}]]\\
    =&\E[\mathring c_T(w_{jt},w_{js})\E[\mathring c_T(w_{jt},w_{js'})|w_{jt},w_{js}]]\\
    =&\E[\mathring c_T(w_{jt},w_{js})\E[\mathring c_T(w_{jt},w_{js'})|w_{jt}]] = 0
\end{align*} 
Only the terms with $\{t,s\}=\{t',s'\}$ survive, of which there are $O(T^2)$, so
\[
    \E\big[\mathring U_T^2\big]
    =\frac{1}{T^4}\sum_{t\ne s}\E\big[\mathring c_T(w_{jt},w_{js})^2\big]
    +\frac{1}{T^4}\sum_{t\ne s}\E\big[\mathring c_T(w_{jt},w_{js})\mathring c_T(w_{js},w_{jt})\big]
    \le \frac{2}{T^2}\,\E\big[\mathring c_T(w_{jt},w_{js})^2\big],
\]
where in the second sum we used Cauchy--Schwarz $\mathring c_T(w_{jt},w_{js})\mathring c_T(w_{js},w_{jt})\leq \frac{1}{2}(\mathring c_T(w_{jt},w_{js})^2+\mathring c_T(w_{js},w_{jt})^2)$. Now $\mathring c_T$ is a bounded linear combination of $c_T$ and its projections, and the projections are $O(1)$ uniformly,
\begin{align*}
    \bar c_{T}=O(h^l),\qquad \sup_{w}\|c_{1,T}(w)\|=O(h^l),\qquad \E c_{2,T}(w_{js})^2\le 2\,\E[\|\tilde a_T\|_{\infty}^2\,(e^{\psi}_{js})^2]+ 2\E[B_T(z_{js})^2]\le \bar{C}+O(h^2),
\end{align*}
by the previous steps,
so $\E[\mathring c_T^2]\le C\,\E[c_T(w_{jt},w_{js})^2]+O(1)$. For the leading term, boundedness of $\tau_T a/f_j$ and $e^{\psi}$ gives
\[
    \E\big[c_T(w_{jt},w_{js})^2\big]
    \le C\,\E\big[K_h(z_{js}-z_{jt})^2\big]
    = C\int\!\!\int K_h(z'-z)^2 f_j(z)f_j(z')\,dz\,dz'
    = C\,h^{-d_z}\!\int K(u)^2\Big(\int f_j(z)f_j(z+hu)\,dz\Big)du,
\]
and the inner integral is bounded by $\sup_z f_j(z)<\infty$, so $\E[c_T^2]=O(h^{-d_z})$. Combining,
\[
    \E\big[\mathring U_T^2\big]\le \frac{2}{T^2}\,O(h^{-d_z})=O\!\Big(\frac{1}{T^2 h^{d_z}}\Big)=o(T^{-1}),
\]
the last step again because $Th^{d_z}\to\infty$. By Chebyshev, $\mathring U_T=O_p\big((T^2h^{d_z})^{-1/2}\big)=o_p(T^{-1/2})$.

\emph{Collecting.}
Therefore $R_T=\mathring U_T+U_T^{\mathrm{diag}}+o_p(T^{-1})=o_p(T^{-1/2})$, so from \eqref{eq:hajek} and the projections computed above,
\[
    U_T=\E_T\big[c_{2,T}(w_{js})\big]+O(h^{l})+R_T
    =\E_T\big[a(z_{js})\,e^{\psi}_{js}\big]+o_p(T^{-1/2}),
\]
which, after relabeling the summation index $s$ as $t$, is exactly \eqref{eq:kernel-proj}.
\end{proof}

\begin{lemma}[Rate for Kernel plugged-in weight]\label{lemma:kernel-weight}
Let $\hat\Sigma_{j}(z)=\dfrac{\E_T\!\big[K_h(z_{jt}-z)\,\hat\xi_{jt}^{\,2}\big]}{\hat f_j(z)}$, $\hat\xi_{jt}=y_{j}(\tilde\sigma,w_t)-\tilde\alpha\,p_{jt}-x_{jt}'\tilde\beta$, where $\tilde\theta=(\tilde\sigma,\tilde\alpha,\tilde\beta)$ is an estimator of $\theta_0$ satisfying $\|\tilde\theta-\theta_0\|=O_p(T^{-1/2})$.
Suppose Conditions~\ref{cond:regularity}, \ref{cond:y-smooth}, and \ref{cond:kernel} hold. Then for each $j=1,\dots,J$,
\[
    \sup_{z\in\mathcal Z_T}\big|\hat\Sigma_{j}(z)^{-1}-\Sigma_{j}(z)^{-1}\big|=O_p(\rho_T+T^{-1/2})=o_p(T^{-1/4}),
\]
and thus
\[
    \E_T\Big[\big|\hat\Sigma_{j}(z_{jt})^{-1}-\Sigma_{j}(z_{jt})^{-1}\big|^2\,\tau_T(z_{jt})\Big]=O_p(\rho_T^2+T^{-1})=o_p(T^{-1/2}).
\]
Equivalently, in the notation of Section~\ref{sec:theory-general}, where $\Lambda_j=\Sigma_j^{-1}$ and $\hat\Lambda_j=\hat\Sigma_j^{-1}\tau_T$,
\[
    \sup_{z\in\mathcal Z}\big|\hat\Lambda_{j}(z)-\Lambda_{j}(z)\tau_T(z)\big|=o_p(T^{-1/4}),
    \qquad
    \E_T\big[\hat\Lambda_{j}(z_{jt})-\Lambda_{j}(z_{jt})\tau_T(z_{jt})\big]^2=o_p(T^{-1/2}),
\]
which are Condition~\ref{cond:consistency}\eqref{cond:cons-uniform} and the weight part of Condition~\ref{cond:normality}\eqref{cond:norm-msr}.
\end{lemma}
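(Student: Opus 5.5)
The plan is to reduce the feasible weight to the infeasible kernel fit of the true squared errors and then invoke Lemma~\ref{lemma:kernel}. Write $\hat\xi_{jt}=\xi_{jt}+\Delta_{jt}$ with
\[
\Delta_{jt}=\bigl(y_j(\tilde\sigma,w_t)-y_j(\sigma_0,w_t)\bigr)-(\tilde\alpha-\alpha_0)p_{jt}-x_{jt}'(\tilde\beta-\beta_0),
\]
so that $\hat\xi_{jt}^2=\xi_{jt}^2+2\xi_{jt}\Delta_{jt}+\Delta_{jt}^2$. By the Lipschitz property of $y_j(\cdot,w_t)$ in $\sigma$ (Condition~\ref{cond:kernel}\eqref{cond:kernel-psi}) and the boundedness of $p_{jt}$ and $x_{jt}$ (Condition~\ref{cond:regularity}\eqref{cond:cons-reg}), we have $\max_t|\Delta_{jt}|\le C\|\tilde\theta-\theta_0\|=O_p(T^{-1/2})$, uniformly over markets. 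Since $\xi_{jt}$ is also bounded, it follows that $\max_t|\hat\xi_{jt}^2-\xi_{jt}^2|=O_p(T^{-1/2})$.

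Next, define the infeasible estimator $\check\Sigma_j(z)=\E_T[K_h(z_{jt}-z)\xi_{jt}^2]/\hat f_j(z)$. The difference $\hat\Sigma_j(z)-\check\Sigma_j(z)$ is a Nadaraya--Watson weighted average of $\hat\xi_{jt}^2-\xi_{jt}^2$. For a nonnegative kernel this is bounded by the maximum above. For a higher-order kernel, which can take negative values, I will instead bound it by $\max_t|\hat\xi^2_{jt}-\xi^2_{jt}|\cdot\E_T[|K_h(z_{jt}-z)|]/\hat f_j(z)$. The numerator $\E_T|K_h(z_{jt}-z)|$ is $O_p(1)$ uniformly on $\mathcal Z_T$ by the same argument as Lemma~\ref{lemma:kernel} applied with $|K|$, which is also Lipschitz and compactly supported; only the variance part and boundedness are needed, not the order. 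The denominator satisfies $\hat f_j\ge\underline f/2$ w.p.a.1 on $\mathcal Z_T$. This gives $\sup_{\mathcal Z_T}|\hat\Sigma_j-\check\Sigma_j|=O_p(T^{-1/2})$. Lemma~\ref{lemma:kernel} with $\psi=\xi_{jt}^2$ (explicitly listed there, with the smoothness of $\Sigma_j f_j$ coming from Condition~\ref{cond:kernel}\eqref{cond:kernel-psi}) gives $\sup_{\mathcal Z_T}|\check\Sigma_j-\Sigma_j|=O_p(\rho_T)$. Combining, $\sup_{\mathcal Z_T}|\hat\Sigma_j-\Sigma_j|=O_p(\rho_T+T^{-1/2})$.

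To pass to inverses, I will use $\Sigma_j\ge 1/\bar c>0$ from Condition~\ref{cond:kernel}\eqref{cond:kernel-boundedness}. Because the uniform error is $o_p(1)$, w.p.a.1 we have $\hat\Sigma_j\ge 1/(2\bar c)$ on $\mathcal Z_T$. Then
\[
|\hat\Sigma_j^{-1}-\Sigma_j^{-1}|=\frac{|\hat\Sigma_j-\Sigma_j|}{\hat\Sigma_j\Sigma_j}\le 2\bar c^2\,|\hat\Sigma_j-\Sigma_j|
\]
uniformly on $\mathcal Z_T$. The rate $\rho_T=o(T^{-1/4})$ is stated in Lemma~\ref{lemma:kernel}, and $T^{-1/2}=o(T^{-1/4})$ trivially, so the sup bound is $o_p(T^{-1/4})$. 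The mean-square statement follows by bounding each trimmed summand by the squared supremum over $\mathcal Z_T$. The equivalent $\hat\Lambda_j$ forms hold because $\hat\Lambda_j-\Lambda_j\tau_T$ vanishes off $\mathcal Z_T$ and equals $\hat\Sigma_j^{-1}-\Sigma_j^{-1}$ on it.

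The main obstacle is the step controlling the kernel-weighted average of the residual perturbation when $K$ is a higher-order kernel and therefore not a probability weight. This is why I plan to route it through a uniform bound on $\E_T|K_h(z_{jt}-z)|$ rather than simply bounding by $\max_t|\hat\xi^2_{jt}-\xi^2_{jt}|$. A secondary point is the $O_p(T^{-1/2})$ rate assumed for $\tilde\theta$. It is taken as given in the statement, but in application it must come from Theorem~\ref{thm:normality} applied to the unweighted preliminary estimator, whose weight $\hat\Lambda_j=\tau_T$ trivially satisfies the weight conditions.
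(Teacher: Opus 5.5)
Your proposal is correct and follows the paper's proof essentially step for step. It uses the same-denominator infeasible fit of $\xi_{jt}^2$ controlled by Lemma~\ref{lemma:kernel}, and the uniform $O_p(T^{-1/2})$ residual perturbation passed through $\sup_{z\in\mathcal Z_T}\E_T|K_h(z_{jt}-z)|=O_p(1)$ to handle the sign-changing higher-order kernel. It then does the inverse transfer on the event $\{\hat\Sigma_j\ge 1/(2\bar c)\}$ and reads the mean-square bound off the supremum. Your remark that the $|K|$ step needs only the variance bound plus $\int|K(u)|f_j(z+hu)\,du\le\bar f\int|K|$, rather than a bias rate for $|K|$, is a slight tidying of the paper's version.
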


\begin{proof}[Proof of Lemma~\ref{lemma:kernel-weight}]
Throughout, $C$ denotes a finite constant that may change from line to line, and all suprema are over $z\in\mathcal Z_T$ unless indicated otherwise. Define the \emph{infeasible} weight, formed with the true unobserved product characteristics $\xi_{jt}$ but the same denominator,
\[
    \Sigma_{j}^{*}(z)=\frac{\E_T\!\big[K_h(z_{jt}-z)\,\xi_{jt}^{2}\big]}{\hat f_j(z)}.
\]
We bound $\sup_{z\in\mathcal Z_T}|\hat\Sigma_{j}(z)-\Sigma_{j}(z)|$ through the decomposition
\begin{equation}\label{eq:kw-decomp}
    \sup_{z\in\mathcal Z_T}\big|\hat\Sigma_{j}(z)-\Sigma_{j}(z)\big|
    \le \underbrace{\sup_{z\in\mathcal Z_T}\big|\hat\Sigma_{j}(z)-\Sigma_{j}^{*}(z)\big|}_{=:A_T}
    +\underbrace{\sup_{z\in\mathcal Z_T}\big|\Sigma_{j}^{*}(z)-\Sigma_{j}(z)\big|}_{=:B_T}.
\end{equation}

\medskip
\noindent\textit{Step 0: preliminaries.}
By Lemma~\ref{lemma:kernel} applied to $\psi\equiv1$, $\sup_{z\in\mathcal Z_T}|\hat f_j(z)-f_j(z)|=O_p(\rho_T)=o_p(1)$, and since $\inf_{\mathcal Z}f_j\ge\underline f>0$ (Condition~\ref{cond:kernel}\eqref{cond:kernel-boundedness}), the event
\begin{equation}\label{eq:kw-denom}
    \mathcal E_T=\Big\{\inf_{z\in\mathcal Z_T}\hat f_j(z)\ge \underline f/2\Big\}
\end{equation}
satisfies $P(\mathcal E_T)\to1$. All bounds below are stated on $\mathcal E_T$, which suffices for $O_p(\cdot)$ statements. On $\mathcal E_T$ the maps $g\mapsto g/\hat f_j$ are Lipschitz on $\mathcal Z_T$ with constant $2/\underline f$, a fact used repeatedly.
 
Moreover, by Condition~\ref{cond:regularity}\eqref{cond:cons-reg} the $\xi_{jt}$ are uniformly bounded, $\sup_{j,t}|\xi_{jt}|\le C_\xi<\infty$, and by Condition~\ref{cond:regularity} and Condition~\ref{cond:kernel}\eqref{cond:kernel-psi} the estimated residuals are uniformly bounded in probability:
\begin{equation}\label{eq:kw-muhat-bd}
    \sup_{j,t}\big|\hat\xi_{jt}\big|
    \le L\,|\tilde\sigma-\sigma_0|+\sup_{j,t}\big|y_{j}(\sigma_0,w_t)\big|+|\tilde\alpha|\,\sup_{j,t}|p_{jt}|+\sup_{j,t}\|x_{jt}\|_{\infty}\|\tilde\beta\|_1=O_p(1).
\end{equation}

\medskip
\noindent\textit{Step 1: the infeasible term $B_T$.}
The regressand $\xi_{jt}^{2}$ is bounded by $C_\xi^2$, and its density-weighted conditional mean equals $\Sigma_{j}(z)f_j(z)$, which is $l$-times continuously differentiable in $z$ with uniformly bounded derivatives by Condition~\ref{cond:kernel}\eqref{cond:kernel-psi} (which lists $\Sigma_{j}$ and $f_j$ among the $l$-smooth targets). Lemma~\ref{lemma:kernel}, applied with $\psi=\xi_{jt}^{2}$, therefore yields
\[
    \sup_{z\in\mathcal Z_T}\Big|\E_T\!\big[K_h(z_{jt}-z)\xi_{jt}^{2}\big]-\Sigma_{j}(z)f_j(z)\Big|=O_p(\rho_T).
\]
Writing $\Sigma_{j}^{*}=\big(\E_T[K_h\xi_{jt}^{2}]\big)/\hat f_j$ and $\Sigma_{j}=(\Sigma_{j} f_j)/f_j$, adding and subtracting $(\Sigma_{j} f_j)/\hat f_j$ and using \eqref{eq:kw-denom},
\begin{align*}
    B_T
    &\le \sup_{z\in\mathcal Z_T}\frac{\big|\E_T[K_h\xi_{jt}^{2}]-\Sigma_{j}(z)f_j(z)\big|}{\hat f_j(z)}
    +\sup_{z\in\mathcal Z_T} \Sigma_{j}(z)\,\frac{\big|\hat f_j(z)-f_j(z)\big|}{\hat f_j(z)}\\
    &\le \frac{2}{\underline f}\,O_p(\rho_T)+\frac{2\,\overline\Sigma_{j}}{\underline f}\,O_p(\rho_T)
    =O_p(\rho_T),
\end{align*}
where $\overline\Sigma_{j}=\sup_{\mathcal Z}\Sigma_{j}<\infty$.

\medskip
\noindent\textit{Step 2: uniform bound on the residual perturbation.}
Decompose the estimated residual error, evaluated at the preliminary $\tilde\theta=(\tilde\sigma,\tilde\alpha,\tilde\beta)$, as
\[
    \hat\xi_{jt}-\xi_{jt}
    =\big[y_{j}(\tilde\sigma,w_t)-y_{j}(\sigma_0,w_t)\big]
    -(\tilde\alpha-\alpha_0)p_{jt}-x_{jt}'(\tilde\beta-\beta_0),
\]
using $\xi_{jt}=y_{j}(\sigma_0,w_t)-\alpha_0p_{jt}-x_{jt}'\beta_0$. By the Lipschitz property of $y_{j}(\sigma,w_t)$ in $\sigma$ (Condition~\ref{cond:kernel}\eqref{cond:kernel-psi}) and the boundedness of $p_{jt}$ and $x_{jt}$,
\begin{equation}\label{eq:kw-resid}
    \Delta_{\xi,T}:=\sup_{j,t}\big|\hat\xi_{jt}-\xi_{jt}\big|\lesssim |\tilde\sigma-\sigma_0|+|\tilde\alpha-\alpha_0|+\|\tilde\beta-\beta_0\|_1=O_p(\|\tilde\theta-\theta_0\|)=O_p(T^{-1/2}).
\end{equation}

\medskip
\noindent\textit{Step 3: the plug-in term $A_T$.}
On $\mathcal E_T$ the common denominator cancels and
\[
    A_T
    =\sup_{z\in\mathcal Z_T}\frac{\big|\E_T[K_h(z_{jt}-z)(\hat\xi_{jt}^{\,2}-\xi_{jt}^{2})]\big|}{\hat f_j(z)}
    \le \frac{2}{\underline f}\,\sup_{z\in\mathcal Z_T}\,\E_T\Big[\big|K_h(z_{jt}-z)\big|\,\big|\hat\xi_{jt}^{\,2}-\xi_{jt}^{2}\big|\Big].
\]
Factor $\hat\xi_{jt}^{\,2}-\xi_{jt}^{2}=(\hat\xi_{jt}-\xi_{jt})(\hat\xi_{jt}+\xi_{jt})$ and bound the second factor uniformly by $\sup_{j,t}(|\hat\xi_{jt}|+|\xi_{jt}|)=O_p(1)$ (Step~0). Hence, pulling the uniform residual bound \eqref{eq:kw-resid} out of the average,
\[
    A_T\le \frac{2}{\underline f}\,\Delta_{\xi,T}\cdot O_p(1)\cdot \sup_{z\in\mathcal Z_T}\,\E_T\big|K_h(z_{jt}-z)\big|=\Delta_{\xi,T}\cdot O_p(1).
\]
The last equality uses
\[
    \sup_{z\in\mathcal Z_T}\Big|\E_T\big|K_h(z_{jt}-z)\big|-C_K f_j(z)\Big|=o_p(1),
    \qquad C_K=\int|K(u)|\,du,
\]
which follows from the argument of Lemma~\ref{lemma:kernel} with $\psi\equiv1$ and $K$ replaced by $|K|$---a Lipschitz, compactly supported function, so the variance bound is unchanged, while the bias is $O(h)$ rather than $O(h^l)$ because $|K|$ is of order one; only $o_p(1)$ is needed here. Consequently $\sup_{z\in\mathcal Z_T}\E_T|K_h(z_{jt}-z)|=O_p(1)$ and
\begin{equation}\label{eq:kw-AT}
    A_T=O_p(\Delta_{\xi,T})=O_p\big(T^{-1/2}\big).
\end{equation}

\medskip
\noindent\textit{Step 4: conclusion.}
Collecting the bounds for $A_T$ and $B_T$ in \eqref{eq:kw-decomp}, and using $\|\tilde\theta-\theta_0\|=O_p(T^{-1/2})$,
\[
    \sup_{z\in\mathcal Z_T}\big|\hat\Sigma_{j}(z)-\Sigma_{j}(z)\big|=O_p(\rho_T+T^{-1/2})= o_p(T^{-1/4}),
\]
because Lemma~\ref{lemma:kernel} gives $\rho_T=o(T^{-1/4})$.
Note $\Sigma_{j}(z)\ge\underline c>0$ by Condition~\ref{cond:regularity}\eqref{cond:cons-weight}, which bounds $\Lambda_j=\Sigma_j^{-1}$ above; since $\sup_{z\in\mathcal Z_T}|\hat\Sigma_{j}(z)-\Sigma_{j}(z)|=o_p(1)$, the event $\mathcal F_T=\{\inf_{\mathcal Z_T}\hat\Sigma_{j}\ge\underline c/2\}$ has $P(\mathcal F_T)\to1$. On $\mathcal E_T\cap\mathcal F_T$,
\[
    \big|\hat\Sigma_{j}(z)^{-1}-\Sigma_{j}(z)^{-1}\big|
    =\frac{\big|\hat\Sigma_{j}(z)-\Sigma_{j}(z)\big|}{\hat\Sigma_{j}(z)\,\Sigma_{j}(z)}
    \le \frac{2}{\underline c^{2}}\,\big|\hat\Sigma_{j}(z)-\Sigma_{j}(z)\big|
    \qquad\text{for all }z\in\mathcal Z_T,
\]
so taking the supremum,
\[
    \sup_{z\in\mathcal Z_T}\big|\hat\Sigma_{j}(z)^{-1}-\Sigma_{j}(z)^{-1}\big|
    \le \frac{2}{\underline c^{2}}\,\sup_{z\in\mathcal Z_T}\big|\hat\Sigma_{j}(z)-\Sigma_{j}(z)\big|
    =o_p(T^{-1/4}).
\]
The mean-square statement follows since 
\begin{align*}
    \E_T\Big[\big|\hat\Sigma_{j}(z_{jt})^{-1}-\Sigma_{j}(z_{jt})^{-1}\big|^2\,\tau_T(z_{jt})\Big]
    \le \sup_{\mathcal Z_T}\Big[\big|\hat\Sigma_{j}(z)^{-1}-\Sigma_{j}(z)^{-1}\big|^2\Big]=o_p(T^{-1/2}).
\end{align*}
The statements for $\hat\Lambda_j$ follow from $\hat\Lambda_j-\Lambda_j\tau_T=(\hat\Sigma_j^{-1}-\Sigma_j^{-1})\tau_T$, both sides vanishing off $\mathcal Z_T$.

\end{proof}

\begin{lemma}[Design matrix]\label{lemma:series-gram}
Under Condition~\ref{cond:series}\eqref{cond:series-gram} and \eqref{cond:series-rate},
\[
    \big\|\hat Q_j-Q_j\big\|_{op}\lesssim \zeta_0(K)\sqrt{\log K/T}\quad \text{w.p.a.1},
\]
and hence, $\zeta_0(K)\sqrt{\log K/T}=o(1)$ implies the event $\{\lambda_{\min}(\hat Q_j)\ge\underline q/2\}$ has probability approaching one, on which $\|\hat Q_j^{-1}\|_{op}\le 2/\underline q$.
\end{lemma}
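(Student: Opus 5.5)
I will apply the matrix Bernstein inequality of Lemma~\ref{lem:bernstein} to the centered summands
\[
X_t=\tfrac1T\big(\phi^K(z_{jt})\phi^K(z_{jt})'-Q_j\big),\qquad t=1,\dots,T,
\]
so that $\hat Q_j-Q_j=\sum_{t=1}^T X_t$. These are independent across markets by Condition~\ref{cond:regularity}\eqref{cond:independence}, symmetric, $K\times K$, and mean zero. The almost-sure bound follows from $\|\phi^K\phi^{K\prime}\|_{op}=\|\phi^K\|^2\le\zeta_0(K)^2$ and $\|Q_j\|_{op}\le\overline q$: this gives $\|X_t\|_{op}\le M:=(\zeta_0(K)^2+\overline q)/T\lesssim \zeta_0(K)^2/T$. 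Here I use that $\zeta_0(K)^2\ge \operatorname{tr}Q_j\ge K\underline q$, so $\zeta_0(K)$ is bounded below and $\overline q$ is absorbed.

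For the variance proxy, $\E[(\phi\phi'-Q_j)^2]=\E[\|\phi\|^2\phi\phi']-Q_j^2\preceq \zeta_0(K)^2 Q_j$. Hence $\sigma^2=\|\sum_t\E X_t^2\|_{op}\le \zeta_0(K)^2\overline q/T$. Bernstein then gives
\[
P\big(\|\hat Q_j-Q_j\|_{op}\ge u\big)\le 2K\exp\!\Big(-\frac{u^2/2}{C\zeta_0(K)^2/T+C\zeta_0(K)^2u/(3T)}\Big).
\]

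Take $u=C_1\zeta_0(K)\sqrt{\log K/T}$. The rate condition in Condition~\ref{cond:series}\eqref{cond:series-rate} makes $u=o(1)$, and in particular $u\le 1$ eventually. The denominator is therefore $\lesssim\zeta_0(K)^2/T$, and the exponent is $\lesssim -C_1^2\log K/C$. The bound becomes $2K^{1-cC_1^2}$, which tends to zero for $C_1$ large since $K\to\infty$. (If one worries about $K$ fixed, replace $\log K$ by $\log(K\vee T)$, which the same rate condition covers.) This yields the first claim w.p.a.1.

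For the second claim, Weyl's inequality gives $\lambda_{\min}(\hat Q_j)\ge\lambda_{\min}(Q_j)-\|\hat Q_j-Q_j\|_{op}\ge\underline q-o(1)\ge\underline q/2$ w.p.a.1. On that event $\|\hat Q_j^{-1}\|_{op}=1/\lambda_{\min}(\hat Q_j)\le 2/\underline q$.

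The only delicate step is the variance proxy: bounding $\E[\|\phi\|^2\phi\phi']$ by $\zeta_0(K)^2Q_j$ rather than crudely by $\zeta_0(K)^4$. Doing so is what gives the $\zeta_0(K)\sqrt{\log K/T}$ rate instead of $\zeta_0(K)^2$. It is also where I will check that the constant $M$ term remains dominated, using $u=o(1)$.
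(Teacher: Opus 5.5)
Your proposal is correct and follows the paper's proof essentially step for step. Both apply matrix Bernstein to $\frac1T(\phi^K\phi^{K\prime}-Q_j)$ with $M\lesssim\zeta_0(K)^2/T$ and variance proxy $\lesssim\zeta_0(K)^2\overline q/T$, take $u\asymp\zeta_0(K)\sqrt{\log K/T}$ to get a tail of order $K^{-c}$, and finish with Weyl's inequality. The only difference is minor: you bound $\E[(\phi\phi'-Q_j)^2]\preceq\zeta_0(K)^2Q_j$ through the semidefinite ordering, which is slightly cleaner than the paper's triangle-inequality bound (that bound carries an extra $\|Q_j\|_{op}^2$ term) but changes nothing.
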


\begin{proof}[Proof of Lemma~\ref{lemma:series-gram}]
Each entry of $\hat Q_j-Q_j$ is a market average of the mean-zero, independent terms $Z_t:=\tfrac 1T\left[\phi^K(z_{jt})\phi^K(z_{jt})'-\E[\phi^K(z_{jt})\phi^K(z_{jt})']\right]$. $Z_t$ are independent, symmetric, and mean-zero.

\medskip\noindent\emph{Uniform bound.}
Let $\phi^K_t=\phi^K(z_{jt})$.
Since $\|\phi^K_t\|\le\zeta_0(K)$ a.s.,
$\|\phi^K_t\phi^{K\prime}_t\|_{op}=\|\phi^K_t\|^2\le\zeta_0(K)^2$, and
$\|Q\|_{op}\le tr(Q)=\E\|\phi^K_t\|^2\le\zeta_0(K)^2$. Hence
\[
  \|Z_t\|_{op}\le\tfrac1T\big(\|\phi^K_t\phi^{K\prime}_t\|_{op}+\|Q\|_{op}\big)
  \le\frac{2\zeta_0(K)^2}{T}=:M.
\]

\medskip\noindent\emph{Variance proxy.}
Using $(\phi^K_t\phi^{K\prime}_t)^2=\|\phi^K_t\|^2\phi^K_t\phi^{K\prime}_t$,
\[
  \E[(\phi^K_t\phi^{K\prime}_t-Q)^2]=\E[\|\phi^K_t\|^2\phi^K_t\phi^{K\prime}_t]-Q^2,
\]
so that
\begin{align*}
    \sigma^2:=&\Big\|\textstyle\sum_t \E[Z_t^2]\Big\|_{op}=\tfrac1T\big\|\E[(\phi^K_t\phi^{K\prime}_t-Q)^2]\big\|_{op}
    \le\tfrac1T(\big\|\E[\|\phi^K_t\|^2\phi^K_t\phi^{K\prime}_t]\big\|_{op}+\|Q\|_{op}^2)
    \le\frac{2\zeta_0(K)^2\,\bar q}{T}\quad \text{w.l.o.g. } \bar q\ge1.
\end{align*}

\medskip\noindent\emph{Matrix Bernstein.}
For every $t\ge0$,
\[
  P\big(\|\hat Q_j-Q_j\|_{op}\ge t\big)
  \le 2K\exp\!\Big(\frac{-t^2/2}{\sigma^2+Mt/3}\Big).
\]
Taking $t\asymp\sqrt{\zeta_0(K)^2\log K/T}=o(1)$ yields
\[
  P\big(\|\hat Q_j-Q_j\|_{op}\ge t\big)
  \le 2K\exp\!(\frac{-c\log K}{2\bar q+2\sqrt{\zeta_0(K)^2\log K/T}}\Big)
  \lesssim K^{-c'},
\]
which implies
\[
  \|\hat Q_j-Q_j\|_{op}
  \lesssim \zeta_0(K)\sqrt{\tfrac{\log K}{T}}\quad \text{w.p.a.1}.
\]

\medskip\noindent\emph{Eigenvalue transfer.}
Let $A_n=\{\|\hat Q_j-Q_j\|_{op}\le \underline q/2\}$; then $P(A_n)\to1$.
By Weyl's inequality, on $A_n$,
\[
  \lambda_{\min}(\hat Q_j)\ge\lambda_{\min}(Q_j)-\frac{\underline{q}}{2}\ge\frac{\underline q}{2},
  \qquad
  \lambda_{\max}(\hat Q_j)\le\lambda_{\max}(Q_j)+\frac{\underline{q}}{2}\le \frac{3\underline q}{2},
\]
so $\hat Q_j$ is invertible and $\|\hat Q_j^{-1}\|_{op}=1/\lambda_{\min}(\hat Q_j)\le 2/\underline q$.
\end{proof}

\begin{lemma}[Series convergence rate]\label{lemma:series-rate}
Under Condition~\ref{cond:regularity}, \ref{cond:y-smooth}, and \ref{cond:series}, for each $\psi$,
\[
    \sup_{\sigma\in\Theta_\sigma}\big\|\hat\Pi_{\psi,j}(\sigma)-\Pi^{K}_{\psi,j}(\sigma)\big\|
    =O_p\!\Big(\sqrt{K\log(K\vee T)/T}+K^{-\gamma}\Big)=o_p(T^{-1/4}),
\]
and consequently,
\[
    \sup_{\sigma\in\Theta_\sigma}\Big(\E_T\big[\hat\mu_{\psi,j}(z_{jt};\sigma)-\mu_{\psi,j}(z_{jt};\sigma)\big]^2\Big)^{1/2}
    =O_p\!\Big(\sqrt{K\log(K\vee T)/T}+K^{-\gamma}\Big)=o_p(T^{-1/4}).
\]
and
\[
    \sup_{\sigma\in\Theta_\sigma}\sup_{z}\big|\hat\mu_{\psi,j}(z;\sigma)-\mu_{\psi,j}(z;\sigma)\big|=O_p\!\Big(\zeta_0(K)\sqrt{K\log(K\vee T)/T}+\zeta_0(K)K^{-\gamma}\Big)=o_p(1).
\]
\end{lemma}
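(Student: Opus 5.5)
The plan is the standard decomposition of the least-squares coefficient error into a stochastic part and an approximation part, working on the event of Lemma~\ref{lemma:series-gram} where $\|\hat Q_j^{-1}\|_{op}\le 2/\underline q$. Write $\psi_{jt}(\sigma)=\phi^K(z_{jt})'\Pi^K_{\psi,j}(\sigma)+r_{jt}(\sigma)+e^{\psi}_{jt}(\sigma)$. Here $r_{jt}(\sigma)=\mu_{\psi,j}(z_{jt};\sigma)-\phi^K(z_{jt})'\Pi^K_{\psi,j}(\sigma)$ is the approximation error, bounded by $O(K^{-\gamma})$ uniformly in $(\sigma,z)$ by Condition~\ref{cond:series}\eqref{cond:series-approx}. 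Then
\[
  \hat\Pi_{\psi,j}(\sigma)-\Pi^K_{\psi,j}(\sigma)=\hat Q_j^{-1}\E_T\big[\phi^K(z_{jt})r_{jt}(\sigma)\big]+\hat Q_j^{-1}\E_T\big[\phi^K(z_{jt})e^{\psi}_{jt}(\sigma)\big].
\]

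For the first term, $\hat Q_j^{-1}\E_T[\phi^K r]$ is the coefficient vector of the empirical projection of $r$ onto the basis. Its norm is bounded by $\|\hat Q_j^{-1/2}\|_{op}\,(\E_T r^2)^{1/2}\le C K^{-\gamma}$, uniformly in $\sigma$, since the projection is an $\E_T$-contraction.

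For the second term it suffices to show
\[
  \sup_\sigma\big\|\E_T[\phi^K(z_{jt})e^{\psi}_{jt}(\sigma)]\big\|=O_p\big(\sqrt{K\log(K\vee T)/T}\big).
\]
\begin{enumerate}
\item At a fixed $\sigma$, the summands $\phi^K_t e^{\psi}_{jt}(\sigma)/T$ are independent $K\times1$ mean-zero vectors, since $\E[e^\psi\mid z]=0$. They are bounded by $C\zeta_0(K)/T$, because $\psi$ is bounded by the Lipschitz condition together with compactness of $\Theta_\sigma$. Their variance proxy satisfies $\|\sum\E[X_tX_t']\|_{op}\le C\bar q/T$ and $\sum\E[X_t'X_t]\le CK/T$.
\item Lemma~\ref{lem:bernstein} with $d_1=K$, $d_2=1$ then gives a tail bound at level $t\asymp\sqrt{K\log(K\vee T)/T}$. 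The $Mt$ term, of order $\zeta_0(K)\sqrt{K\log(K\vee T)}/T^{3/2}$, is negligible by Condition~\ref{cond:series}\eqref{cond:series-rate}.
\item For uniformity, cover the compact set $\Theta_\sigma$ by a grid of mesh $\ell_T$ with $N\asymp\ell_T^{-d_\sigma}$ points, and take a union bound; this costs a factor $\log N$ in the exponent.
\item Off the grid, use the Lipschitz condition in $\sigma$ (Condition~\ref{cond:series}\eqref{cond:series-Lipschitz}). The differences $e^\psi(\sigma_1)-e^\psi(\sigma_2)$ are bounded by $2L|\sigma_1-\sigma_2|$, because $\mu_\psi$ inherits the Lipschitz constant through the conditional expectation. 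Hence $\|\E_T\phi^K(e(\sigma_1)-e(\sigma_2))\|\le 2L\zeta_0(K)\ell_T$.
\item Choose $\ell_T=T^{-1}$. Then $\zeta_0(K)/T$ is $o(\sqrt{K/T})$, and $\log N\asymp\log T$.
\end{enumerate}
Combining the two terms gives the first display. It is $o_p(T^{-1/4})$ because $\sqrt{T}K^{-\gamma}=o(1)$ and because $K\log(K\vee T)/\sqrt T\le\zeta_0(K)K\log(K\vee T)/\sqrt T=o(1)$, using $\zeta_0(K)\gtrsim1$ (which follows from $\operatorname{tr}Q_j\ge K\underline q$).

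For the mean-square statement, write
\[
  \hat\mu_{\psi,j}-\mu_{\psi,j}=\phi^{K\prime}(\hat\Pi-\Pi^K)-r .
\]
Then $\E_T[(\phi^{K\prime}v)^2]=v'\hat Q_jv\le(3\underline q/2)\|v\|^2$ on the good event, and $\E_T r^2=O(K^{-2\gamma})$. Taking the supremum over $\sigma$ gives the claimed rate. For the sup-norm statement, apply Cauchy--Schwarz: $|\phi^K(z)'v|\le\zeta_0(K)\|v\|$. This gives $\zeta_0(K)(\sqrt{K\log(K\vee T)/T}+K^{-\gamma})+K^{-\gamma}$, which is $o_p(1)$ by the second rate condition and the first rate condition.

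The main obstacle is the uniform-in-$\sigma$ stochastic term. The covering must be fine enough that the Lipschitz discretization error, which is inflated by $\zeta_0(K)$, is below $\sqrt{K/T}$, while the entropy $\log N$ stays of order $\log T$ so that the $\log(K\vee T)$ factor absorbs it. For $\psi=p_{jt}$, or other $\sigma$-free regressands, this step is trivial.
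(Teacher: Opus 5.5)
Your proposal is correct and follows essentially the same route as the paper. Both arguments split the coefficient error into an approximation part and a stochastic part on the Gram-matrix event, use matrix Bernstein at fixed $\sigma$ with a Lipschitz covering of $\Theta_\sigma$ for uniformity, and use the $\hat Q_j$-quadratic form plus Cauchy--Schwarz for the two consequences. The differences are cosmetic: your finer mesh $\ell_T=T^{-1}$ with the envelope bound replaces the paper's mesh of order $T^{-1/2}$ with the trace bound $\sqrt{\operatorname{tr}\hat Q_j}$, and your projection-contraction bound for the bias term replaces the paper's eigenvalue bound.

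One slip affects only constants. On that event, Weyl's inequality gives $\lambda_{\max}(\hat Q_j)\le\bar q+\underline q/2$, not $3\underline q/2$; the same misprint appears in Lemma~\ref{lemma:series-gram}.
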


\begin{proof}[Proof of Lemma~\ref{lemma:series-rate}]
Work on the event $\mathcal E_T=\{\lambda_{\min}(\hat Q_j)\ge\underline q/2\}$, which has $P(\mathcal E_T)\to1$ by Lemma~\ref{lemma:series-gram}; off $\mathcal E_T$ the contribution is $o_p(1)$. Decompose $\psi_{jt}(\sigma)=\mu_{\psi,j}(z_{jt};\sigma)+e^{\psi}_{jt}(\sigma)$ and $\mu_{\psi,j}(z_{jt};\sigma)=\phi^K(z_{jt})'\Pi^{K}_{\psi,j}(\sigma)+b_{jt}(\sigma)$ with $\sup_{\sigma,t}|b_{jt}(\sigma)|=O(K^{-\gamma})$ by Condition~\ref{cond:series}\eqref{cond:series-approx}. Then
\begin{align*}
    \hat\Pi_{\psi,j}(\sigma)-\Pi^{K}_{\psi,j}(\sigma)
    &=\underbrace{\hat Q_j^{-1}\,\E_T\!\big[\phi^K(z_{jt})\,e^{\psi}_{jt}(\sigma)\big]}_{\text{(I)}}
    +\underbrace{\hat Q_j^{-1}\,\E_T\!\big[\phi^K(z_{jt})\,b_{jt}(\sigma)\big]}_{\text{(II)}}.
\end{align*}

\emph{For the first term (I),} $\|\hat Q_j^{-1}\|\le 2/\underline q$ on $\mathcal E_T$.
By the Lipschitz property of $\psi_{jt}(\sigma)$ in Condition~\ref{cond:series}\eqref{cond:series-Lipschitz}, $e^{\psi}_{jt}(\sigma)$ is also Lipschitz in $\sigma$. 
The supremum over $\sigma\in\Theta_\sigma$ is controlled by a similar concentration argument as in Lemma~\ref{lemma:kernel}, using a finite $\epsilon$-net of $\Theta_\sigma$ and the Lipschitz property to control the difference between points in the net and arbitrary points in $\Theta_\sigma$. This yields on $\mathcal E_T$,
\begin{align*}
    \sup_{\Theta_\sigma}\Big\|\hat Q_j^{-1}\,\E_T\!\big[\phi^K(z_{jt})\,(e^{\psi}_{jt}(\sigma))\big]\Big\|
    \lesssim & \sup_{\Theta_\sigma}\Big\|\E_T\!\phi^K(z_{jt})\,e^{\psi}_{jt}(\sigma)\Big\|\\
    \le & \max_{i\le (C/\delta)^{d_\sigma}} \Big\|\E_T\!\big[\phi^K(z_{jt})\,(e^{\psi}_{jt}(\sigma_i))\big]\Big\|+\sup_{\Theta_\sigma,\|\sigma-\sigma_i\|\le \delta}\Big\|\E_T\!\big[\phi^K(z_{jt})\,(e^{\psi}_{jt}(\sigma)-e^{\psi}_{jt}(\sigma_i))\big]\Big\|\\
    \le & \max_{i\le (C/\delta)^{d_\sigma}} \Big\|\E_T\!\big[\phi^K(z_{jt})\,(e^{\psi}_{jt}(\sigma_i))\big]\Big\|+\sqrt{\E_T\big\|\phi^K(z_{jt})\big\|^2}\sup_{\Theta_\sigma,\|\sigma-\sigma_i\|\le \delta}\sqrt{\E_T (e^{\psi}_{jt}(\sigma)-e^{\psi}_{jt}(\sigma_i))^2}\\
    \lesssim & \max_{i\le (C/\delta)^{d_\sigma}} \Big\|\E_T\!\big[\phi^K(z_{jt})\,(e^{\psi}_{jt}(\sigma_i))\big]\Big\|+\sqrt{tr(\hat{Q}_j)}\sup_{\Theta_\sigma,\|\sigma-\sigma_i\|\le \delta}|\sigma-\sigma_i|\\
    \lesssim & \max_{i\le (C/\delta)^{d_\sigma}} \Big\|\E_T\!\big[\phi^K(z_{jt})\,(e^{\psi}_{jt}(\sigma_i))\big]\Big\|+O_p(\sqrt{K/T})\quad \text{choose }\delta =c/\sqrt{T}
\end{align*}
Define $X_t = \frac{1}{T}\phi^K(z_{jt})e^{\psi}_{jt}(\sigma_i)$, which is a mean-zero random vector with $\|X_t\|_{op}\le \frac{1}{T}\zeta_0(K)\sup_{\Theta_\sigma}|e^{\psi}_{jt}(\sigma)|\lesssim \frac{\zeta_0(K)}{T}$. 
The variance term 
\begin{align*}
    \sigma^2 = \Big\|\sum_{t=1}^T \E[X_t X_t']\Big\|_{op} \le \frac{1}{T^2}\sum_{t=1}^T \E[\|\phi^K(z_{jt})\|^2 (e^{\psi}_{jt}(\sigma_i))^2] \lesssim \frac{1}{T} tr(Q_j)\lesssim \frac{K}{T}.
\end{align*}
Applying the matrix Bernstein inequality, with $t=C\sqrt{K\log(K\vee T)/T}$ and $\zeta_0(K)\sqrt{K\log(K\vee T)/T}\to 0$ we have
\begin{align*}
    P\Big(\Big\|\E_T\!\big[\phi^K(z_{jt})\,(e^{\psi}_{jt}(\sigma_i))\big]\Big\| \ge t\Big) \le& (K+1) \exp\Big(-\frac{t^2/2}{\sigma^2 + Mt/3}\Big)\\
    \lesssim& K \exp\Big(-\frac{t^2/2}{K/T + \zeta_0(K)t/(3T)}\Big)\\
    =& K\exp\Big(-\frac{Tt^2/2}{K + \zeta_0(K)t/3}\Big)\\
    \lesssim& K\exp\Big(-c\frac{Tt^2}{K}\Big)\\
    \le& K(K\vee T)^{-cC^2}
\end{align*}
and hence with large enough $C$,
\begin{align*}
    P(\max_{i\le C/\delta} \Big\|\E_T\!\big[\phi^K(z_{jt})\,(e^{\psi}_{jt}(\sigma_i))\big]\Big\| \ge t) \le& (\frac{C}{\delta})^{d_\sigma} K(K\vee T)^{-cC^2}\lesssim T^{d_\sigma/2} K(K\vee T)^{-cC^2} \to 0
\end{align*}
Summarizing, we have
\begin{align*}
    \sup_{\Theta_\sigma}\Big\|\hat Q_j^{-1}\,\E_T\!\big[\phi^K(z_{jt})\,(e^{\psi}_{jt}(\sigma))\big]\Big\|
    =& O_p(\sqrt{K\log(K\vee T)/T})+O_p(\sqrt{K/T})=O_p(\sqrt{K\log(K\vee T)/T}).
\end{align*}
\emph{For the second term, (II).}
\begin{align*}
    \sup_{\Theta_\sigma}\|\hat Q_j^{-1} \E_T\!\big[\phi^K(z_{jt})\,b_{jt}(\sigma)\big]\|\lesssim &\sup_{\Theta_\sigma}\,\|\E_T\!\big[\phi^K(z_{jt})\,b_{jt}(\sigma)\big]\|\\
    = &\sup_{\Theta_\sigma}\sup_{\|u\|=1}u'\E_T\big[\phi^K(z_{jt})\,b_{jt}(\sigma)\big]\\
    = &\sup_{\Theta_\sigma}\sup_{\|u\|=1}\E_T u'\phi^K(z_{jt})\,b_{jt}(\sigma)\\
    \le &\sup_{\Theta_\sigma}\sup_{\|u\|=1}\sqrt{\E_T\|u'\phi^K(z_{jt})\|^2}\sqrt{\E_T b_{jt}(\sigma)^2}\\
    = &\sup_{\Theta_\sigma}\sup_{\|u\|=1}\sqrt{u'\hat Q_j u}\sqrt{\E_T b_{jt}(\sigma)^2}\\
    \lesssim & \sup_{\Theta_\sigma}\sqrt{\E_T b_{jt}(\sigma)^2} = O_p(K^{-\gamma}).
\end{align*}
We therefore have the bound 
\[
    \sup_{\sigma\in\Theta_\sigma}\big\|\hat\Pi_{\psi,j}(\sigma)-\Pi^{K}_{\psi,j}(\sigma)\big\|
    =O_p\!\Big(\sqrt{K\log(K\vee T)/T}+K^{-\gamma}\Big).
\]
Consequently, on the event $\mathcal E_T$,
\begin{align*}
    &\sup_{\sigma\in\Theta_\sigma}\Big(\E_T\big[\hat\mu_{\psi,j}(z_{jt};\sigma)-\mu_{\psi,j}(z_{jt};\sigma)\big]^2\Big)^{1/2}\\
    =&\sup_{\sigma\in\Theta_\sigma}\Big(\E_T\big[\phi^K(z_{jt})(\hat \Pi_{\psi,j}(\sigma)-\Pi^{K}_{\psi,j}(\sigma))-b_{jt}(\sigma)\big]^2\Big)^{1/2}\\
    \le& \sup_{\sigma\in\Theta_\sigma}\Big(\E_T\big[\phi^K(z_{jt})(\hat \Pi_{\psi,j}(\sigma)-\Pi^{K}_{\psi,j}(\sigma))\big]^2\Big)^{1/2}+\sup_{\sigma\in\Theta_\sigma}\Big(\E_T\big[b_{jt}(\sigma)\big]^2\Big)^{1/2}\\
    \lesssim& \sup_{\sigma\in\Theta_\sigma}\big\|\hat \Pi_{\psi,j}(\sigma)-\Pi^{K}_{\psi,j}(\sigma)\big\|+O_p(K^{-\gamma})\\
    =&O_p\!\Big(\sqrt{K\log(K\vee T)/T}+K^{-\gamma}\Big),
\end{align*}
which gives the second claim.
Finally,
\begin{align*}
    \sup_{\sigma\in\Theta_\sigma}\sup_{z}\big|\hat\mu_{\psi,j}(z;\sigma)-\mu_{\psi,j}(z;\sigma)\big|
    \le& \sup_{\sigma\in\Theta_\sigma}\sup_{z}|\phi^K(z)'(\hat\Pi_{\psi,j}(\sigma)-\Pi^{K}_{\psi,j}(\sigma))|+O_p(K^{-\gamma})\\
    \le& \sup_{\sigma\in\Theta_\sigma}\zeta_0(K)\,\big\|\hat\Pi_{\psi,j}(\sigma)-\Pi^{K}_{\psi,j}(\sigma)\big\|+O_p(K^{-\gamma})\\
    =&O_p(\zeta_0(K)\sqrt{K\log(K\vee T)/T}+\zeta_0(K)K^{-\gamma})+O_p(K^{-\gamma})=o_p(1).
\end{align*}

\end{proof}

\begin{lemma}[Rate for series plugged-in weight]\label{lemma:series-weight}
Let $\hat\Sigma_{j}(z)=\phi^K(z)'\hat Q_j^{-1}\E_T[\phi^K(z_{jt})\hat\xi_{jt}^2]$ be the series estimator of the conditional variance, where $\hat\xi_{jt}$ is the residual from a preliminary unweighted fit $\tilde\theta$, with $\|\tilde\theta-\theta_0\|=O_p(T^{-1/2})$. Under Conditions~\ref{cond:regularity}, \ref{cond:y-smooth} and \ref{cond:series},
\[
    \Big(\E_T\big[\hat\Sigma_{j}(z_{jt})^{-1}-\Sigma_{j}(z_{jt})^{-1}\big]^2\Big)^{1/2}
    =O_p\!\Big(\sqrt{K\log(K\vee T)/T}+K^{-\gamma}+\|\tilde\theta-\theta_0\|\Big)=o_p(T^{-1/4}).
\]
and
\[
    \sup_{z\in\mathcal Z}\big|\hat\Sigma_{j}(z)^{-1}-\Sigma_{j}(z)^{-1}\big|=O_p\!\Big(\zeta_0(K)\big(\sqrt{K\log(K\vee T)/T}+K^{-\gamma}+\|\tilde\theta-\theta_0\|\big)\Big)=o_p(1).
\]
\end{lemma}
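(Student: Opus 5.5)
I will mirror the kernel argument of Lemma~\ref{lemma:kernel-weight}, replacing the local average by the global projection and the sup-norm residual bound by the $\ell^2$ coefficient bounds of Lemma~\ref{lemma:series-rate}. Introduce the infeasible fit
\[
    \Sigma_j^{*}(z)=\phi^K(z)'\hat Q_j^{-1}\E_T[\phi^K(z_{jt})\xi_{jt}^2],
\]
and decompose $\hat\Sigma_j-\Sigma_j=(\hat\Sigma_j-\Sigma_j^{*})+(\Sigma_j^{*}-\Sigma_j)$. The infeasible piece is Lemma~\ref{lemma:series-rate} applied with $\psi=\xi_{jt}^2$. This is legitimate because $\xi_{jt}^2$ is bounded by Condition~\ref{cond:regularity}\eqref{cond:cons-reg}, its conditional mean $\Sigma_j$ is in the approximation list of Condition~\ref{cond:series}\eqref{cond:series-approx}, and no $\sigma$-uniformity is needed. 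It gives $(\E_T[\Sigma_j^{*}-\Sigma_j]^2)^{1/2}=O_p(\sqrt{K\log(K\vee T)/T}+K^{-\gamma})$, with the sup-norm version carrying an extra factor $\zeta_0(K)$.

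For the plug-in piece, work on the event $\{\lambda_{\min}(\hat Q_j)\ge\underline q/2\}$ of Lemma~\ref{lemma:series-gram}. Write
\[
    \hat\Sigma_j-\Sigma_j^{*}=\phi^K(z)'\hat Q_j^{-1}\E_T[\phi^K(z_{jt})d_{jt}],
    \qquad d_{jt}=\hat\xi_{jt}^2-\xi_{jt}^2 .
\]
Since $\phi^K(z_{jt})'\hat Q_j^{-1}\E_T[\phi^K d]$ is the least-squares fitted value of $d$, and projection is an $L^2(\E_T)$ contraction, we get $\E_T[(\hat\Sigma_j-\Sigma_j^{*})(z_{jt})^2]\le\E_T d_{jt}^2$. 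As in Steps 0 and 2 of Lemma~\ref{lemma:kernel-weight}, the Lipschitz property in Condition~\ref{cond:series}\eqref{cond:series-Lipschitz} and boundedness of $p,x$ give $\sup_{t}|d_{jt}|\lesssim\sup_t|\hat\xi_{jt}-\xi_{jt}|\cdot O_p(1)=O_p(\|\tilde\theta-\theta_0\|)$. For the sup norm, bound
\[
    |\phi^K(z)'\hat Q_j^{-1}\E_T[\phi^K d]|\le\zeta_0(K)\,\|\hat Q_j^{-1}\|_{op}\,\|\E_T[\phi^K d]\|,
\]
and control $\|\E_T[\phi^K d]\|\le\sup_{\|u\|=1}\sqrt{u'\hat Q_ju}\,(\E_T d^2)^{1/2}$ exactly as in term (II) of Lemma~\ref{lemma:series-rate}. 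This gives $O_p(\zeta_0(K)\|\tilde\theta-\theta_0\|)$.

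Passing to inverses is the step I expect to be the main obstacle. The identity $|\hat\Sigma_j^{-1}-\Sigma_j^{-1}|=|\hat\Sigma_j-\Sigma_j|/(\hat\Sigma_j\Sigma_j)$ requires $\hat\Sigma_j$ to be bounded away from zero, since a series fit need not be positive. I will use the sup-norm bound just obtained, which is $o_p(1)$ by Condition~\ref{cond:series}\eqref{cond:series-rate} (since $\zeta_0(K)K^{-\gamma}\le\zeta_0(K)K/\sqrt T\cdot\sqrt TK^{-\gamma}\to0$, $\zeta_0(K)\sqrt{K\log/T}\to0$, and $\zeta_0(K)T^{-1/2}\to0$). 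Together with $\Sigma_j\ge 1/\bar c$ from Condition~\ref{cond:series}\eqref{cond:series-gram}, this makes $\{\inf_z\hat\Sigma_j\ge 1/(2\bar c)\}$ an event with probability approaching one. On it the inverse difference is bounded by a constant times $|\hat\Sigma_j-\Sigma_j|$, pointwise and hence in both the empirical $L^2$ norm and the sup norm.

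Finally, the $o_p(T^{-1/4})$ claim follows from $\sqrt{K\log(K\vee T)/T}\le\zeta_0(K)K\log/\sqrt T\cdot T^{-1/4}\cdot O(1)$, which requires the mild normalization $\zeta_0(K)\gtrsim1$, and from $K^{-\gamma}=o(T^{-1/2})$ and $\|\tilde\theta-\theta_0\|=O_p(T^{-1/2})$.
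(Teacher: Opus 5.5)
Your proposal is correct and follows the paper's proof essentially step for step. You use the same infeasible fit $\Sigma_j^{*}$, Lemma~\ref{lemma:series-rate} with $\psi=\xi_{jt}^2$ for the infeasible piece, an $O_p(\|\tilde\theta-\theta_0\|)$ bound on the plug-in piece, and the sup-norm bound to keep $\hat\Sigma_j$ away from zero before inverting; for the plug-in piece you use the $L^2(\E_T)$ contraction of the least-squares fit where the paper uses the eigenvalue bounds on $\hat Q_j$, which amounts to the same thing. Two cosmetic points: the normalization $\zeta_0(K)\gtrsim1$ is automatic since $\zeta_0(K)^2\ge\mathrm{tr}(Q_j)\ge K\underline q$; and your last display does not hold as written in general, so replace it with the identity $\sqrt{K\log(K\vee T)/T}=T^{-1/4}\big(K\log(K\vee T)/\sqrt T\big)^{1/2}$, which gives the same conclusion.
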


\begin{proof}[Proof of Lemma~\ref{lemma:series-weight}]
Define the \emph{infeasible} weight, formed from the true shocks but the same design,
\[
    \Sigma_{j}^{*}(z)=\phi^K(z)'\hat Q_j^{-1}\E_T\big[\phi^K(z_{jt})\,\xi_{jt}^2\big],
\]
and split
\[
    \Big(\E_T\big[\hat\Sigma_{j}(z_{jt})-\Sigma_{j}(z_{jt})\big]^2\Big)^{1/2}
    \le \underbrace{\Big(\E_T\big[\hat\Sigma_{j}(z_{jt})-\Sigma_{j}^{*}(z_{jt})\big]^2\Big)^{1/2}}_{A_T}
    +\underbrace{\Big(\E_T\big[\Sigma_{j}^{*}(z_{jt})-\Sigma_{j}(z_{jt})\big]^2\Big)^{1/2}}_{B_T}.
\]

\emph{Infeasible term $B_T$.}
The infeasible weight $\Sigma^*$ is the series regression of $\xi_{jt}^2$ on $\phi^K(z_{jt})$, i.e.\ $\Sigma^*=\hat\mu_{\psi,j}$ for the regressand $\psi_{jt}=\xi_{jt}^2$. This regressand is bounded (Condition~\ref{cond:regularity}), and its conditional mean is exactly $\Sigma_{j}(z)=\E[\xi_{jt}^2\mid z_{jt}=z]$, which is approximable at rate $K^{-\gamma}$ by the dictionary under Condition~\ref{cond:series}\eqref{cond:series-approx}. By Lemma~\ref{lemma:series-rate}, the series projection of $\xi_{jt}^2$ onto $\phi^K(z_{jt})$ converges at the rate
\[
    B_T=O_p\!\big(\sqrt{K\log(K\vee T)/T}+K^{-\gamma}\big).
\]

\emph{Residual replacement $A_T$.}
$\hat\Sigma_{j}-\Sigma_{j}^*=\phi^K(z)'\hat Q_j^{-1}\E_T[\phi^K(z_{jt})(\hat\xi_{jt}^2-\xi_{jt}^2)]$, and by Lemma~\ref{lemma:series-rate}, under the event $\{\lambda_{\min}(\hat Q_j)\ge\underline q/2\}\cap\{\lambda_{\max}(\hat{Q}_j)\le \frac{\underline q}{2}+\bar{q}\}$,
\begin{align*}
    \Big(\E_T\big[\hat\Sigma_{j}(z_{jt})-\Sigma_{j}^{*}(z_{jt})\big]^2\Big)^{1/2}=
    &\Big(\E_T\Big[\E_T[\phi^K(z_{jt})(\hat\xi_{jt}^2-\xi_{jt}^2)]'\hat Q_j^{-1}\phi^K(z_{jt})\phi^K(z_{jt})'\hat Q_{j}^{-1}\E_T[\phi^K(z_{jt})(\hat\xi_{jt}^2-\xi_{jt}^2)] \Big]\Big)^{1/2}\\
    = & \Big(\E_T[\phi^K(z_{jt})(\hat\xi_{jt}^2-\xi_{jt}^2)]'\hat Q_{j}^{-1}\E_T[\phi^K(z_{jt})(\hat\xi_{jt}^2-\xi_{jt}^2)] \Big)^{1/2}\\
    \lesssim & \|\E_T\phi^K(z_{jt})\big(\hat\xi_{jt}^2-\xi_{jt}^2\big)\|\\
    \le & \sqrt{\lambda_{\max}(\hat Q_j)}\; \sqrt{\E_T(\hat\xi_{jt}^2-\xi_{jt}^2)^2}\\
    \lesssim & \sqrt{\E_T(\hat\xi_{jt}-\xi_{jt})^2}\; \sqrt{\sup_{t}|\hat\xi_{jt}+\xi_{jt}|^2}.
\end{align*}
For the difference, use $\xi_{jt}=y_{j}(\sigma_0,w_t)-\alpha_0 p_{jt}-x_{jt}'\beta_0$ to write
\begin{align*}
    \hat\xi_{jt}-\xi_{jt}
    &=\big[y_{j}(\tilde\sigma,w_t)-y_{j}(\sigma_0,w_t)\big]
    -(\tilde\alpha-\alpha_0)p_{jt}-x_{jt}'(\tilde\beta-\beta_0).
\end{align*}
By Lipschitz properties of $y_{j}(\sigma,w_t)$ in $\sigma$ (Condition~\ref{cond:series}\eqref{cond:series-Lipschitz}),
\begin{align}\label{eq:kw-resid1}
    \Delta_{\xi,T}:=\sup\big|\hat\xi_{jt}-\xi_{jt}\big|\lesssim |\tilde\sigma-\sigma_0|+|\tilde\alpha-\alpha_0|+\|\tilde\beta-\beta_0\|_1=O_p(\|\tilde\theta-\theta_0\|)=O_p(T^{-1/2}).
\end{align}
Then
\[
    \sqrt{\E_{T}\big|\hat\xi_{jt}-\xi_{jt}\big|^2}\le \Delta_{\xi,T}=O_p\!\Big(\|\tilde{\theta}-\theta_0\|\Big),\quad \sup_{t}|\hat\xi_{jt}+\xi_{jt}|=O_p(1),
\]
and hence
\begin{align*}
    \Big(\E_T\big[\hat\Sigma_{j}(z_{jt})-\Sigma_{j}^{*}(z_{jt})\big]^2\Big)^{1/2}=O_p\!\Big(\|\tilde{\theta}-\theta_0\|\Big).
\end{align*}
so $A_T$ is of the same order.
Hence the total error is 
\[
    \Big(\E_T\big[\hat\Sigma_{j}(z_{jt})-\Sigma_{j}(z_{jt})\big]^2\Big)^{1/2}
    =O_p\!\Big(\sqrt{K\log(K\vee T)/T}+K^{-\gamma}+\|\tilde{\theta}-\theta_0\|\Big)=o_p(T^{-1/4}).
\]
\emph{Inverse transfer.} We need to first show that $\max_{1\le t \le T}|\hat\Sigma_{j}(z_{jt})-\Sigma_{j}(z_{jt})|=o_p(1)$, so that $\min_{1\le t\le T}\hat\Sigma_{j}(z_{jt})^{-1}$ is bounded away from zero with probability approaching one. 
\begin{align*}
    \sup_{z}|\hat\Sigma_{j}(z)-\Sigma_{j}(z)|
    &\le \sup_{z}|\hat\Sigma_{j}(z)-\Sigma_{j}^*(z)|+ \sup_{z}|\Sigma_{j}^*(z)-\Sigma_{j}(z)|\\
    &= \sup_{z}|\hat\Sigma_{j}(z)-\Sigma_{j}^*(z)|+O_p(\zeta_0(K)(\sqrt{K\log(K\vee T)/T}+K^{-\gamma}))\\
    &= \sup_{z}|\phi^K(z)\hat{Q}_j^{-1}\E_T[\phi^K(z_{jt})(\hat\xi_{jt}^2-\xi_{jt}^2)]|+O_p(\zeta_0(K)(\sqrt{K\log(K\vee T)/T}+K^{-\gamma}))\\
    &\lesssim \zeta_0(K)\|\E_T\big[\phi^K(z_{jt})(\hat\xi_{jt}^2-\xi_{jt}^2)\big]\|+O_p(\zeta_0(K)(\sqrt{K\log(K\vee T)/T}+K^{-\gamma}))\quad \text{w.p.a.1}\\
    &\lesssim \zeta_0(K)\sqrt{\E_T(\hat\xi_{jt}-\xi_{jt})^2}\ \sqrt{\sup_{z}|\hat\xi_{jt}+\xi_{jt}|^2}+O_p(\zeta_0(K)(\sqrt{K\log(K\vee T)/T}+K^{-\gamma}))\\
    &=O_p(\zeta_0(K)(\sqrt{K\log(K\vee T)/T}+K^{-\gamma}+\|\tilde{\theta}-\theta_0\|))\\
    &=O_p(\zeta_0(K)(\sqrt{K\log(K\vee T)/T}+K^{-\gamma}))=o_p(1),
\end{align*}
where the second inequality uses Lemma~\ref{lemma:series-gram}, the third inequality applies the same argument in the above analysis, and $\zeta_0(K)\ \|\tilde\theta-\theta_0\|=o_p(1)$ holds by Condition~\ref{cond:series}\eqref{cond:series-rate}.
Hence,
\begin{align*}
    \E_T\big[\hat\Sigma_{j}(z_{jt})^{-1}-\Sigma_{j}(z_{jt})^{-1}\big]^2=&\E_T\big[\hat\Sigma_{j}(z_{jt})^{-1}(\Sigma_{j}(z_{jt})-\hat\Sigma_{j}(z_{jt}))\Sigma_{j}(z_{jt})^{-1}\big]^2\\
    \le & \E_T\big[\hat\Sigma_{j}(z_{jt})-\Sigma_{j}(z_{jt})\big]^2\frac{1}{(\min_{t}\hat\Sigma_{j}(z_{jt}))^2 \ (\min_{t}\Sigma_{j}(z_{jt}))^2},\\
    \lesssim & \E_T\big[\hat\Sigma_{j}(z_{jt})-\Sigma_{j}(z_{jt})\big]^2\frac{1}{(\min_{t}\Sigma_{j}(z_{jt})-\max_{t}|\hat\Sigma_{j}(z_{jt})-\Sigma_{j}(z_{jt})|)^2}\\
    \le & \E_T\big[\hat\Sigma_{j}(z_{jt})-\Sigma_{j}(z_{jt})\big]^2\frac{1}{(\inf_{z}\Sigma_{j}(z)-\sup_{z}|\hat\Sigma_{j}(z)-\Sigma_{j}(z)|)^2}\\
    =&\E_T\big[\hat\Sigma_{j}(z_{jt})-\Sigma_{j}(z_{jt})\big]^2\ O_p(1)=o_p(T^{-1/2}).
\end{align*}
\end{proof}

The central result is the asymptotic-linear representation, which discharges Condition~\ref{cond:normality}\eqref{cond:norm-linear}. As in the kernel case, first-stage estimation leaves no first-order footprint: the projection of the series error onto the weight $a(z_{jt})$ collapses to the sample average of $\xi_{jt}a(z_{jt})$.

\begin{lemma}[Series influence function]\label{lemma:series-linear}
Let $a(z_{jt})=\Lambda_j\,\Gamma_{jt}$ and $\Lambda_{j}$ satisfies $0<\underline c<\Lambda_{j}(z)<\bar c$ for all $z\in\mathcal Z$, where we suppress the subscript $j$ of $a$ for simplicity. Elements of $a(z_{jt})$, $a^k(z_{jt})$ satisfies approximation condition, i.e., there exist coefficient vectors $\Pi^{a^k}_{K,j}$ with $\mu_{a^k,j}(z)=\E[a^k(z_{jt})\mid z_{jt}=z]$ such that
\[
    \sup_{z\in\mathcal Z}\big|\mu_{a^k,j}(z)-\phi^K(z)'\Pi^{a^k}_{K,j}\big|=O(K^{-\gamma}).
\]
Under Conditions~\ref{cond:regularity}, \ref{cond:y-smooth} and \ref{cond:series},
\[
    \sqrt T\,\E_J\E_T\Big[\big((\hat\mu_{y,j}-\mu_{y,j})-\alpha_0(\hat\mu_{p,j}-\mu_{p,j})\big)\,a(z_{jt})\Big]
    =\sqrt T\,\E_T\,\E_J\big[\xi_{jt}a(z_{jt})\big]+o_p(1),
\]

so Condition~\ref{cond:normality}\eqref{cond:norm-linear} holds with $\delta(w_t)=\E_J[\xi_{jt}a(z_{jt})]$.
\end{lemma}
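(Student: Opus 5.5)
The plan is to exploit linearity of the series projection in the regressand and then use the self-adjointness of least squares, so that ``project the error, then average against $a$'' becomes ``average the residual against the projection of $a$.'' Fix $j$ and set $\psi_{jt}:=y_j(\sigma_0,w_t)-\alpha_0p_{jt}$. Since both $\hat\mu_{y,j}$ and $\hat\mu_{p,j}$ use the same design $\hat Q_j$, the combination $\hat\mu_{y,j}-\alpha_0\hat\mu_{p,j}$ is the series fit of $\psi_{jt}$. Its conditional mean is $\mu_{\psi,j}=\mu_{y,j}-\alpha_0\mu_{p,j}=x_{jt}'\beta_0$ by \eqref{eq:theory-id}, and $e^{\psi}_{jt}=\psi_{jt}-\mu_{\psi,j}(z_{jt})=\xi_{jt}$. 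Write $\mu_{\psi,j}(z_{jt})=\phi^K(z_{jt})'\Pi^K_{\psi,j}+b_{jt}$ with $\sup|b_{jt}|=O(K^{-\gamma})$. On the event of Lemma~\ref{lemma:series-gram}, the fitted error at the sample points is
\[
\hat\mu_{\psi,j}(z_{jt})-\mu_{\psi,j}(z_{jt})=\phi^K(z_{jt})'\hat Q_j^{-1}\E_T[\phi^K(z_{js})(\xi_{js}+b_{js})]-b_{jt}.
\]
Multiplying by $a(z_{jt})$ and averaging over $t$ gives
\[
\E_T[\hat a_K(z_{js})(\xi_{js}+b_{js})]-\E_T[b_{jt}a(z_{jt})],
\]
where $\hat a_K(z):=\phi^K(z)'\hat Q_j^{-1}\E_T[\phi^K(z_{jt})a(z_{jt})]$ is the empirical series projection of $a$ (componentwise). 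The target is then
\[
\E_T[\hat a_K\xi]=\E_T[a\,\xi]+o_p(T^{-1/2}),
\]
while the bias terms $\E_T[\hat a_K b]-\E_T[a\,b]=\E_T[(\hat a_K-a)b]$ are $o_p(T^{-1/2})$.

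\textbf{Bias terms.} By Cauchy--Schwarz, these are bounded by $\|\hat a_K-a\|_{T}\cdot\sup|b|$. Since $\hat a_K$ is an empirical $L^2(\E_T)$ projection, $\|\hat a_K\|_T\le\|a\|_T\le \bar C$. This gives the bound $O_p(K^{-\gamma})=o(T^{-1/2})$ by $\sqrt T K^{-\gamma}=o(1)$; no rate on $\hat a_K-a$ is needed here.

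\textbf{Main term.} Split it as $\E_T[(\hat a_K-a)\xi]$. Introduce the approximant $a_K(z)=\phi^K(z)'\Pi^{a}_{K,j}$ from the assumed approximation condition; $a$ is a function of $z_{jt}$, so $\mu_{a^k,j}=a^k$ and $\sup|a-a_K|=O(K^{-\gamma})$. Because $a_K$ lies in the span, its empirical projection is itself. Hence $\hat a_K-a=(\hat a_K-a_K)+(a_K-a)$, and
\[
\hat a_K-a_K=\phi^K{}'\hat Q_j^{-1}\E_T[\phi^K(a-a_K)].
\]
The piece $\E_T[(a_K-a)\xi]$ is a mean-zero i.i.d.\ average of a deterministic function times $\xi$ ($\E[\xi\mid z]=0$), so Chebyshev gives $O_p(K^{-\gamma}T^{-1/2})$. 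The remaining piece equals
\[
\E_T[\phi^K(a-a_K)]'\hat Q_j^{-1}\E_T[\phi^K\xi].
\]
The second factor is $O_p(\sqrt{K/T})$ by the variance computation in Lemma~\ref{lemma:series-rate} with $\mathrm{tr}(Q_j)\lesssim K$. The first factor is $\le \sqrt{\lambda_{\max}(\hat Q_j)}\,\|a-a_K\|_T=O_p(K^{-\gamma})$, by the same Cauchy--Schwarz step as term (II) in Lemma~\ref{lemma:series-rate}. With $\|\hat Q_j^{-1}\|\le 2/\underline q$, the product is $O_p(K^{-\gamma}\sqrt{K/T})=o_p(T^{-1/2})$, since $\sqrt TK^{-\gamma}\to0$ and $K/T\to0$. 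Summing over $j$ with $\E_J$ (finite $J$) yields $\delta(w_t)=\E_J[\xi_{jt}a(z_{jt})]$, which has mean zero and finite variance by boundedness.

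The expected obstacle is the cross term $\E_T[\phi^K(a-a_K)]'\hat Q_j^{-1}\E_T[\phi^K\xi]$. The two factors are dependent through shared observations, which is why I bound it by a product of norms rather than by a conditional-mean argument. I must also check that $\E_T\|\phi^K\|^2\xi^2$ is controlled uniformly, which needs only bounded $\xi$ and $\mathrm{tr}(\hat Q_j)=O_p(K)$. If the product-of-rates bound were too crude, I would condition on $\{z_{jt}\}$: given the instruments, the first factor is fixed and $\E[\xi\mid z]=0$. The conditional second moment is then at most $C\,\E_T[\phi^K(a-a_K)]'\hat Q_j^{-1}\E_T[\phi^K(a-a_K)]/T\lesssim K^{-2\gamma}/T$, which is even sharper.
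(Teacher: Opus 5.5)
Your proof is correct, and it takes a genuinely different and leaner route than the paper's.

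\textbf{The paper's route.} It splits $\sqrt T\,\E_T[(\hat\mu_{\psi,j}-\mu_{\psi,j})a]$ into two pieces. The first is an empirical-process piece, bounded by $\|\sqrt T(\E_T-\E)[\phi^K a]\|=O_p(\sqrt K)$ times the coefficient rate of Lemma~\ref{lemma:series-rate}. The second is a population piece $\sqrt T\,\E_z[(\hat\mu_{\psi,j}-\mu_{\psi,j})a]$, which it decomposes through the population projections $\psi^K$ and $a^K$ into $A_1,A_2,A_3$. The key term $A_1$ is controlled with the perturbation bound $\|\hat Q_j^{-1}-Q_j^{-1}\|_{op}\lesssim\zeta_0(K)\sqrt{\log K/T}$ of Lemma~\ref{lemma:series-gram}. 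This is where the binding restriction $\zeta_0(K)K\log(K\vee T)/\sqrt T=o(1)$ is used.

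\textbf{Your route.} You use the symmetry of the least-squares hat matrix to move the projection from the regressand error onto $a$. The problem then collapses to $\E_T[(\hat a_K-a)\xi]$ plus a bias term, which the empirical contraction $\|\hat a_K\|_T\le\|a\|_T$ disposes of. Your argument needs only $\sqrt TK^{-\gamma}\to0$, $K/T\to0$ and well-conditioning of $\hat Q_j$ w.p.a.1. It requires no rate for $\hat Q_j^{-1}-Q_j^{-1}$ and no appeal to Lemma~\ref{lemma:series-rate}.

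\textbf{A simplification.} The conditioning argument you offer as a fallback can handle the entire main term at once. The objects $a$, $\hat a_K$ and $\hat Q_j$ depend only on the instruments, and $\E[\xi_{jt}\mid z_{j1},\dots,z_{jT}]=\E[\xi_{jt}\mid z_{jt}]=0$ by independence across markets. At the sample points, $\hat a_K-a=-(I-P)(a-a_K)$ with $P$ the hat matrix. This gives $O_p(K^{-\gamma}T^{-1/2})$ directly.

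\textbf{A slip your route avoids.} In the paper's treatment of $A_1$, the second moment of $\sqrt T\,\E_T[\phi^K(\psi-\psi^K)]$ is bounded by $\zeta_0(K)^2K^{-2\gamma}$. But $\psi-\psi^K$ contains the noise $e^{\psi}_{jt}$, so the correct order is $K$ (through $\operatorname{tr}(Q_j)$). That corrected order still delivers the $\zeta_0(K)\sqrt{K\log K/T}$ rate the paper states at the end of that step, so its conclusion survives.

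\textbf{What each approach buys.} The paper's template mirrors the kernel argument of Lemma~\ref{lemma:kernel-linear}, applied regressand by regressand. For this lemma, your duality argument is shorter and valid under weaker rate conditions.
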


\begin{proof}[Proof of Lemma~\ref{lemma:series-linear}]
Fix $j$ and assume $a(z_{jt})$ is a scalar (a coordinate of $a_{jt}$) w.l.o.g.; it suffices to prove, for a generic regressand $\psi$ with projection residual $e^{\psi}_{jt}=\psi_{jt}(\sigma_0)-\mu_{\psi,j}(z_{jt};\sigma_0)$ evaluated at $\sigma_0$,
\begin{equation}\label{eq:series-target}
    \sqrt T\,\E_T\big[(\hat\mu_{\psi,j}(z_{jt};\sigma_0)-\mu_{\psi,j}(z_{jt};\sigma_0))\,a(z_{jt})\big]
    =\sqrt T\,\E_T\big[e^{\psi}_{jt}\,a(z_{jt})\big]+o_p(1),
\end{equation}
and then set $\psi\in\{y_j(\sigma_0,w_t),p_{jt}\}$ and combine, using $e^{y}_{jt}-\alpha_0 e^{p}_{jt}=\xi_{jt}$. The proof is in two steps.

\emph{Step 1: the empirical process is negligible.} Split
\begin{align*}
    \sqrt T\,\E_T \big[(\hat\mu_{\psi,j}-\mu_{\psi,j})(z_{jt})\,a(z_{jt})\big]
    &=\underbrace{\sqrt T\,\E_T\big[(\hat\mu_{\psi,j}-\mu_{\psi,j})(z_{jt})\,a(z_{jt})\big]-\sqrt T\E_z[(\hat\mu_{\psi,j}-\mu_{\psi,j})(z)\,a(z)]}_{(\mathrm{i})}\\
    &\quad
    +\underbrace{\sqrt T\E_z[(\hat\mu_{\psi,j}-\mu_{\psi,j})(z)\,a(z)]}_{(\mathrm{ii})}.
\end{align*}
where $\E_z$ is the expectation over $z_{jt}$ conditional on the sample $\{z_{js}\}_{s=1}^T$. The first term is an empirical process, and the second term is a centered average of the approximation error. We show the first term is $o_p(1)$, while the second term is approximately linear.
For $(\mathrm{i})$, write $\hat\mu_{\psi,j}-\mu_{\psi,j}=\phi^K(z)'(\hat\Pi_{\psi,j}-\Pi^{K}_{\psi,j})-b(z)$ and factor the centered average,
\begin{align*}
    \big|(\mathrm{i})\big|
    &=\Big|\sqrt T\,(\E_T-\E)\big[\phi^K(z)'a(z)\big]\,(\hat\Pi_{\psi,j}-\Pi^{K}_{\psi,j})-\sqrt T\,(\E_T-\E)\big[a(z)b(z)\big]\Big|\\
    &\le \Big\|\sqrt T\,(\E_T-\E)\big[\phi^K(z)'a(z)\big]\Big\|\;\big\|\hat\Pi_{\psi,j}-\Pi^{K}_{\psi,j}\big\| + \Big\|\sqrt T\,(\E_T-\E)\big[a(z)b(z)\big]\Big\|,
\end{align*}
where $b(z)=\mu_{\psi,j}(z)-\phi^K(z)'\Pi^{K}_{\psi,j}$ is the approximation error.
For the first term, the first factor has second moment $\E\|\sqrt T(\E_T-\E)[\phi^K(z)'a(z)]\|^2\le\E[\|\phi^K\|^2 a^2]\le C\,\E\|\phi^K\|^2=tr(Q_j)=O(K)$, so it is $O_p(\sqrt K)$; the second factor is $O_p(\sqrt{K\log(K\vee T)/T}+K^{-\gamma})$ by Lemma~\ref{lemma:series-rate}. 
Similarly, the second term has second moment $\E\|\sqrt T(\E_T-\E)[a(z)\;b(z)]\|^2=\E[a^2(z) b^2(z)]\le C\,K^{-2\gamma}=O(K^{-2\gamma})$, so it is $O_p(K^{-\gamma})$.
Hence
\begin{align*}
    \big|(\mathrm{i})\big|
    =O_p\!\Big(\sqrt K\big(\sqrt{K\log(K\vee T)/T}+K^{-\gamma}\big)\Big)
    =O_p\!\Big(K\sqrt{\log(K\vee T)}/\sqrt T+\sqrt K\,K^{-\gamma}\Big)=o_p(1),
\end{align*}
the last equality by Condition~\ref{cond:series}\eqref{cond:series-rate} ($\sqrt{K}K^{-\gamma}=(K/\sqrt{T})^{1/2}T^{1/4}K^{-\gamma}\le (K/\sqrt{T})^{1/2}\sqrt{T}K^{-\gamma}=o(1)$).

\emph{Step 2: the remainder equals the target asymptotically.} Let $\psi^{K}(z)=\phi^K(z)'Q_j^{-1}\E[\phi^K(z_{jt})\psi_{jt}]$ and $a^K(z)=\phi^K(z)'Q_j^{-1}\E[\phi^K(z_{jt})a(z_{jt})]$ be the population projections of $\mu_{\psi,j}$ (or $\psi$ equivalently) and $a$, and set $\Psi^K=\E[\phi^K(z_{jt})a(z_{jt})]$. Decompose the projection minus the target,
\begin{align*}
    \sqrt T\,\E_z\big[(\hat\mu_{\psi,j}-\mu_{\psi,j})a\big]-\sqrt T\,\E_T\big[e^{\psi}_{jt}a\big]
    &=\underbrace{\sqrt T\,\E_z\big[(\hat\mu_{\psi,j}-\psi^{K})a\big]-\sqrt T\,\E_T\big[e^{\psi}_{jt}a^K\big]}_{A_1}\\
    &\quad+\underbrace{\sqrt T\,\E_z\big[(\psi^K-\mu_{\psi,j})a\big]}_{A_2}
    +\underbrace{\sqrt T\,\E_T\big[e^{\psi}_{jt}(a^K-a)\big]}_{A_3}.
\end{align*}

\emph{Term $A_2$.} By orthogonality of the population projection, $\E_z[(\psi^K-\mu_{\psi,j})a^K]=0$, so
\begin{align*}
    \|A_2\|=&\sqrt T\,\big|\E_z[(\psi^K-\mu_{\psi,j})(a-a^K)]\big|\\
    \le& \sqrt T\,\|\psi^K-\mu_{\psi,j}\|_{L^2}\,\|a-a^K\|_{L^2}\\
    \le& \sqrt T\,\|\mu_{\psi,j}-\phi^K{}'\Pi^{K}_{\psi,j}\|_{L^2}\,\|a-\phi^K{}'\Pi_{a,j}\|_{L^2}
    =O\big(\sqrt T\,K^{-2\gamma}\big)=o(1),
\end{align*}
where the second inequality uses the property of linear projection and the following equalities use Condition~\ref{cond:series}\eqref{cond:series-approx} for both factors and $K^{-\gamma}=o(T^{-1/2})$.

\emph{Term $A_3$.} This is a centered i.i.d.\ average of $e^{\psi}_{jt}(a^K-a)$, where $a$ is any element of $\Lambda_j\Gamma_{jt}$, so
\begin{align*}
    \sqrt{\E\|A_3\|^2}
    =\sqrt{\E\big[e^{\psi}_{jt}{}^2(a^K-a)^2\big]}
    \lesssim \|a^K-a\|_{L^2}
    =O(K^{-\gamma})=o_p(1),
\end{align*}
by Condition~\ref{cond:series}\eqref{cond:series-approx}.

\emph{Term $A_1$.} Using $\hat\mu_{\psi,j}(z)=\phi^K(z)'\hat Q_j^{-1}\E_T[\phi^K\psi]$, there is
\begin{align*}
    A_1=& \sqrt T\,\E_z\big[(\hat\mu_{\psi,j}-\psi^{K})a\big]-\sqrt T\,\E_T\big[e^{\psi}_{jt}a^K\big]\\
    =&\sqrt T\,\Psi^K{}'\Big(\hat Q_j^{-1}\E_T[\phi^K\psi]-Q_j^{-1}\E[\phi^K\psi]\Big)
    -\sqrt T\,\Psi^K{}'Q_j^{-1}\E_T\big[\phi^K(\psi-\psi^K)\big]-\sqrt T\,\Psi^K{}'Q_j^{-1}\E_T\big[\phi^K(\psi^K-\mu_{\psi,j})\big]\\
    =&\sqrt T\,\Psi^K{}'\big(\hat Q_j^{-1}-Q_j^{-1}\big)\E_T\big[\phi^K(\psi-\psi^K)\big]-\sqrt T\,\Psi^K{}'Q_j^{-1}\E_T\big[\phi^K(\psi^K-\mu_{\psi,j})\big],
\end{align*}
Bounding with the $l_2$ norm,
\begin{align*}
    \|A_1\|
    \le& \big\|\Psi^{K\prime}(\hat Q_j^{-1}-Q_j^{-1})\big\|\;
    \Big\|\sqrt T\,\E_T\big[\phi^K(z_{jt})(\psi(z_{jt})-\psi^K(z_{jt}))\big]\Big\|+\big\|\sqrt T\,\Psi^{K\prime}Q_j^{-1}\E_T\big[\phi^K(\psi^K-\mu_{\psi,j})\big]\big\|.\\
    \le& \big\|(\hat Q_j^{-1}-Q_j^{-1})\big\|_{op}\ \|\Psi^{K\prime}\|_2\;\Big\|\sqrt T\,\E_T\big[\phi^K(z_{jt})(\psi(z_{jt})-\psi^K(z_{jt}))\big]\Big\|+\big\|\sqrt T\,\Psi^{K\prime}Q_j^{-1}\E_T\big[\phi^K(\psi^K-\mu_{\psi,j})\big]\big\|.
\end{align*}
For the first factor, $\|\hat Q_j^{-1}-Q_j^{-1}\|_{op}=\|Q_j^{-1}(\hat Q_j-Q_j)\hat Q_j^{-1}\|_{op}\lesssim \zeta_0(K)\sqrt{\log K/T}\;w.p.a.1$ (Lemma~\ref{lemma:series-gram}); and $\|\Psi^{K}\|=\sup_{\|u\|=1}u'\E[\phi^Ka]\le \sup_{\|u\|=1}\sqrt{\E u'\phi^K\phi^K{}'u}\sqrt{\E a^2}\lesssim \sqrt{\E a^2}=O(1)$. For the third factor, observe that $\E[\phi^K(\psi-\psi^K)]=0$ by the property of linear projection, so $\sqrt T\,\E_T[\phi^K(\psi-\psi^K)]$ is a centered i.i.d.\ average with second moment
\begin{align*}
    \E\Big\|\sqrt T\,\E_T[\phi^K(z_{jt})(\psi(z_{jt})-\psi^K(z_{jt}))]\Big\|^2
    &=\E\big[(\psi(z_{jt})-\psi^K(z_{jt}))^2\,\phi^K(z_{jt})'\phi^K(z_{jt})\big]\\
    &\le \zeta_0(K)^2\,\E\big[(\psi(z_{jt})-\psi^K(z_{jt}))^2\big]\\
    &\le \zeta_0(K)^2\,\E\big[(\psi(z_{jt})-\phi^K(z_{jt})'\Pi^{K}_{\psi,j})^2\big]\\
    &\lesssim \zeta_0(K)^2\,K^{-2\gamma} = o(1),
\end{align*}
by noticing the fact $\zeta_0(K)\,K^{-\gamma}=\zeta_0(K)/\sqrt{T}\ \sqrt{T}K^{-\gamma}=o(1)$. Hence the first term is $O_p(\zeta_0(K)\sqrt{\log K/T})$.
For the last term,
\begin{align*}
    \Big\|\sqrt T\,\Psi^{K\prime}Q_j^{-1}\E_T[\phi^K(\psi^K-\mu_{\psi,j})]\Big\|
    \le&\|\Psi^{K\prime}Q_j^{-1}\|\,\|\sqrt T \E_T[\phi^K(\psi^K-\mu_{\psi,j})]\|\\
    \lesssim& \,\|\Psi^{K}\|\,O_p(\E[\|\phi^K\|^2(\psi^K-\mu_{\psi,j})^2]^{1/2})\quad \text{because } \E[\phi^K(\psi^K-\mu_{\psi,j})]=0\\
    \lesssim& \zeta_0(K)\|\psi^K-\mu_{\psi,j}\|_{L^2}\\
    \le& \zeta_0(K)\|\phi^K{}'\Pi^{K}_{\psi,j}(\sigma_0)-\mu_{\psi,j}\|_{L^2}\lesssim \zeta_0(K)\,K^{-\gamma},
\end{align*}
Combining,
\begin{align*}
    \|A_1\|
    =O_p\!\Big(\big(\zeta_0(K)\sqrt{K\log K/T}+\zeta_0(K)K^{-\gamma}\big)\Big)=o_p(1),
\end{align*}
by Condition~\ref{cond:series}\eqref{cond:series-rate}.

Collecting $A_1,A_2,A_3=o_p(1)$ establishes \eqref{eq:series-target}.
\end{proof}

\section{Proofs of Main Theorems}

\begin{proof}[Proof of Theorem~\ref{thm:consistency}]
Define the population and feasible sample criteria
\begin{align*}
    Q(\theta)&=\E_J\,\E\big[\big(\mu_{y,j}(z;\sigma)-\alpha\,\mu_{p,j}(z)-x'\beta\big)^2\,\Lambda_{j}(z)\big],\\
    \hat Q_n(\theta)&=\E_J\,\E_T\big[\big(\hat\mu_{y,j}(z_{jt};\sigma)-\alpha\,\hat\mu_{p,j}(z_{jt})-x_{jt}'\beta\big)^2\,\hat\Lambda_{j}(z_{jt})\big],
\end{align*}
so that $\hat\theta=\arg\min_{\theta\in\Theta}\hat Q_n(\theta)$.
We verify the conditions of the consistency theorem for extremum estimators (\citealp{newey1994large}, Theorem~2.1): (i) $Q$ is uniquely minimized at $\theta_0$; (ii) $\Theta$ is compact; (iii) $Q$ is continuous; and (iv) $\sup_{\theta\in\Theta}|\hat Q_n(\theta)-Q(\theta)|\xrightarrow{p}0$.
 
Condition~(i) holds by Condition~\ref{cond:regularity}\eqref{cond:cons-id}: $Q(\theta)\ge0$ with $Q(\theta)=0$ if and only if $\theta=\theta_0$. Condition~(ii) is assumed in Condition~\ref{cond:regularity}\eqref{cond:cons-reg}. Condition~(iii) follows because the integrand is continuous in $\theta$ by Condition~\ref{cond:regularity}\eqref{cond:cons-bound} (By uniform boundedness and Dominated convergence theorem). It remains to establish the uniform convergence in~(iv).
 
Before splitting, record two consequences of the conditions on the weight that are used repeatedly. First, since $\hat\Lambda_j$ and $\Lambda_j\tau_T$ both vanish off $\mathcal Z_T$,
\begin{equation}\label{eq:weight-id}
    \sup_{z\in\mathcal Z}\big|\hat\Lambda_{j}(z)-\Lambda_{j}(z)\tau_T(z)\big|
    =\sup_{z\in\mathcal Z_T}\big|\hat\omega_{j}(z)^{-1}-\omega_{j}(z)^{-1}\big|=o_p(1)
\end{equation}
by Condition~\ref{cond:consistency}\eqref{cond:cons-uniform}; together with $\Lambda_j\le \overline C$ (Condition~\ref{cond:regularity}\eqref{cond:cons-weight}) this gives $0\le\sup_{z\in\mathcal{Z}}|\hat\Lambda_{j}(z)|\lesssim\overline C$ with probability approaching one. Second, by Markov's inequality and Condition~\ref{cond:consistency}\eqref{cond:cons-trim},
\begin{equation}\label{eq:trim-mass}
    \E_J\E_T\big[1-\tau_T(z_{jt})\big]=O_p(p_T)=o_p(1).
\end{equation}
 
To separate the first-stage estimation error from the sampling variation, introduce the \emph{infeasible} criterion that replaces the estimated nuisance functions and weight by their population counterparts,
\[
    \tilde Q_n(\theta)=\E_J\,\E_T\big[\big(\mu_{y,j}(z_{jt};\sigma)-\alpha\,\mu_{p,j}(z_{jt})-x_{jt}'\beta\big)^2\,\Lambda_{j}(z_{jt})\big],
\]
and split
\[
    \sup_{\theta\in\Theta}\big|\hat Q_n(\theta)-Q(\theta)\big|
    \le
    \underbrace{\sup_{\theta\in\Theta}\big|\hat Q_n(\theta)-\tilde Q_n(\theta)\big|}_{(\mathrm{A})}
    +
    \underbrace{\sup_{\theta\in\Theta}\big|\tilde Q_n(\theta)-Q(\theta)\big|}_{(\mathrm{B})}.
\]
 
\textbf{Term (B).}
The infeasible criterion $\tilde Q_n$ is a sample average, over markets and products, of the function
$q(\theta;z)=\big(\mu_{y,j}(z;\sigma)-\alpha\mu_{p,j}(z)-x'\beta\big)^2\Lambda_{j}(z)$,
which is continuous in $\theta$ for each $z$ and, by Conditions~\ref{cond:regularity}\eqref{cond:cons-bound}, bounded by a constant uniformly over the compact set $\Theta$. A uniform law of large numbers (\citealp{newey1994large}, Lemma~2.4) therefore gives $\sup_{\theta\in\Theta}|\tilde Q_n(\theta)-Q(\theta)|=o_p(1)$.
 
\textbf{Term (A).}
Write the true and estimated residuals as
\[
    r_{jt}(\theta)=\mu_{y,j}(z_{jt};\sigma)-\alpha\,\mu_{p,j}(z_{jt})-x_{jt}'\beta,
    \qquad
    \hat r_{jt}(\theta)=\hat\mu_{y,j}(z_{jt};\sigma)-\alpha\,\hat\mu_{p,j}(z_{jt})-x_{jt}'\beta,
\]
and let $\Delta_{jt}(\theta)=\hat r_{jt}(\theta)-r_{jt}(\theta)=\big(\hat\mu_{y,j}(z_{jt};\sigma)-\mu_{y,j}(z_{jt};\sigma)\big)-\alpha\big(\hat\mu_{p,j}(z_{jt})-\mu_{p,j}(z_{jt})\big)$.
We have
\begin{align*}
    \sup_{\theta\in\Theta}\E_T\Big[|\Delta_{jt}(\theta)|^2\tau_T(z_{jt})\Big]\lesssim& \sup_{\sigma\in\Theta_\sigma}\E_T\Big[|\hat\mu_{y,j}(z_{jt};\sigma)-\mu_{y,j}(z_{jt};\sigma)|^2\tau_T(z_{jt})\Big] \\
    &+ \E_T\Big[|\hat\mu_{p,j}(z_{jt})-\mu_{p,j}(z_{jt})|^2\tau_T(z_{jt})\Big]\,\sup_{\theta\in\Theta}|\alpha|^2=o_p(1),
\end{align*}
Hence by uniform boundedness of $r_{jt}(\theta)$ over $\Theta$ (Condition~\ref{cond:regularity}),
\begin{align*}
    \sup_{\theta\in\Theta}\E_T\Big[|\hat r^2_{jt}(\theta)-r^2_{jt}(\theta)|\tau_T(z_{jt})\Big]=&\sup_{\theta\in\Theta}\E_T\Big[|\Delta_{jt}(\theta)(\Delta_{jt}(\theta)+2r_{jt}(\theta))|\tau_T(z_{jt})\Big]\\
    \lesssim & \sup_{\theta\in\Theta}\E_T\Big[|\Delta_{jt}(\theta)|^2\tau_T(z_{jt})\Big] + \sup_{\theta\in\Theta}\E_T\Big[|\Delta_{jt}(\theta)|\tau_T(z_{jt})\Big]\\
    \lesssim & \sup_{\theta\in\Theta}\E_T\Big[|\Delta_{jt}(\theta)|^2\tau_T(z_{jt})\Big] + \sup_{\theta\in\Theta}\sqrt{\E_T\Big[|\Delta_{jt}(\theta)|^2\tau_T(z_{jt})\Big]}\\
    =& o_p(1).
\end{align*}
Decompose the summand difference, adding and subtracting the \emph{trimmed} population weight $\Lambda_{j}\tau_T$,
\begin{align*}
    \hat r_{jt}^2(\theta)\,\hat\Lambda_{j}(z_{jt})-r_{jt}^2(\theta)\,\Lambda_{j}(z_{jt})
    =&\big(\hat r_{jt}^2(\theta)-r_{jt}^2(\theta)\big)\,\big(\hat\Lambda_{j}(z_{jt})-\Lambda_{j}(z_{jt})\tau_T(z_{jt})\big)
    + \big(\hat r_{jt}^2(\theta)-r_{jt}^2(\theta)\big)\,\Lambda_{j}(z_{jt})\tau_T(z_{jt})\\
    &+r_{jt}^2(\theta)\big(\hat\Lambda_{j}(z_{jt})-\Lambda_{j}(z_{jt})\tau_T(z_{jt})\big)
    -r_{jt}^2(\theta)\,\Lambda_{j}(z_{jt})\big(1-\tau_T(z_{jt})\big).
\end{align*}
The first three terms are the analogues of the decomposition with an untrimmed weight; the fourth is the price of the trimming.
 
For the second and third terms, $|r_{jt}(\theta)|$ and $|\Lambda_{j}(z_{jt})\tau_T|$ are bounded uniformly, and $\sup_{\theta}\E_J\E_T\big|\hat r_{jt}^2(\theta)-r_{jt}^2(\theta)\big|\tau_T(z_{jt})=o_p(1)$, $\E_J\E_T\big|\hat\Lambda_{j}(z_{jt})-\Lambda_{j}(z_{jt})\tau_T(z_{jt})\big|=o_p(1)$ by \eqref{eq:weight-id}, so
\[
    \sup_{\theta\in\Theta}\Big|\E_J\E_T\big[r_{jt}^2(\theta)\big(\hat\Lambda_{j}(z_{jt})-\Lambda_{j}(z_{jt})\tau_T(z_{jt})\big)\big]\Big|
    \lesssim \E_J\E_T\big|\hat\Lambda_{j}(z_{jt})-\Lambda_{j}(z_{jt})\tau_T(z_{jt})\big|=o_p(1).
\]
and
\[
    \sup_{\theta\in\Theta}\Big|\E_J\E_T\big[\big(\hat r_{jt}^2(\theta)-r_{jt}^2(\theta)\big)\,\Lambda_{j}(z_{jt})\tau_T(z_{jt})\big]\Big|
    \lesssim \E_J\E_T\big|\hat r_{jt}^2(\theta)-r_{jt}^2(\theta)\big|\tau_T(z_{jt})=o_p(1).
\]
For the first term,
\begin{align*}
    \sup_{\theta}\Big|\E_J\E_T\big[(\hat r_{jt}^2-r_{jt}^2)\big(\hat\Lambda_{j}(z_{jt})-\Lambda_{j}(z_{jt})\tau_T(z_{jt})\big)\big]\Big|
    \le& \sup_{\theta}\E_J\E_T\big|\hat r_{jt}^2-r_{jt}^2\big|\tau_T(z_{jt})\,\sup_{z\in\mathcal Z_T}\big|\hat\omega_{j}(z)^{-1}-\omega_{j}(z)^{-1}\big|=o_p(1).
\end{align*}
For the fourth term, $r_{jt}^2(\theta)\Lambda_{j}(z_{jt})$ is bounded uniformly over $\Theta$ by Conditions~\ref{cond:regularity}\eqref{cond:cons-reg}, \eqref{cond:cons-weight} and \eqref{cond:cons-bound}, so by \eqref{eq:trim-mass}
\[
    \sup_{\theta\in\Theta}\Big|\E_J\E_T\big[r_{jt}^2(\theta)\,\Lambda_{j}(z_{jt})\big(1-\tau_T(z_{jt})\big)\big]\Big|
    \lesssim \E_J\E_T\big[1-\tau_T(z_{jt})\big]=O_p(p_T)=o_p(1).
\]
So $(\mathrm{A})=o_p(1)$. Note that consistency uses only $p_T\to0$; no rate on the trimming is required.
 
\textbf{Conclusion.}
Terms (A) and (B) together yield $\sup_{\theta\in\Theta}|\hat Q_n(\theta)-Q(\theta)|=o_p(1)$, which is condition~(iv). Since conditions~(i)--(iv) hold, \citet{newey1994large}, Theorem~2.1, gives $\hat\theta\xrightarrow{p}\theta_0$.
\end{proof}

\begin{proof}[Proof of Theorem~\ref{thm:normality}]
By Theorem~\ref{thm:consistency}, $\hat\theta\xrightarrow{p}\theta_0$. A mean-value expansion of $\E_T\,g(w_t,\cdot,\hat\eta)$ about $\theta_0$ gives
\[
    0=\E_T\,g(w_t,\hat\theta,\hat\eta)=\E_T\,g(w_t,\theta_0,\hat\eta)+\hat M\,(\hat\theta-\theta_0),
    \qquad \hat M=\E_T\,\nabla_{\theta'}g(w_t,\bar\theta,\hat\eta),
\]
with $\bar\theta\xrightarrow{p}\theta_0$, so that
\begin{equation}\label{eq:linearization}
    \sqrt T\,(\hat\theta-\theta_0)=-\hat M^{-1}\,\sqrt T\,\E_T\,g(w_t,\theta_0,\hat\eta).
\end{equation}
The claim follows from Step~1, $\hat M\xrightarrow{p}M$, and Step~2, $\sqrt T\,\E_T\,g(w_t,\theta_0,\hat\eta)\xrightarrow{d}N(0,V_\delta)$. Throughout we use \eqref{eq:weight-id} and \eqref{eq:trim-mass} from the proof of Theorem~\ref{thm:consistency}.
 
\medskip
\noindent\textbf{Step 1: Jacobian.}
Since $\hat\eta=(\hat\mu_{p},\hat\mu_{y},\hat{\dot\mu}_{y},\hat \Lambda)$, $\hat{r}_{jt}(\theta)=\hat\mu_{y,j}(z_{jt};\sigma)-\alpha\hat\mu_{p,j}(z_{jt})-x_{jt}'\beta$, we have $g(w_t,\theta,\hat{\eta})=\E_J[\hat{r}_{jt}(\theta)\,\hat\Lambda_{j}\,\hat\Gamma_{jt}]$ with $\hat\Gamma_{jt}=(\hat{\dot\mu}_{y,j},-\hat\mu_{p,j},-x_{jt})'$ and $\nabla_{\theta'}\hat r=\hat\Gamma_{jt}'$,
\[
    \nabla_{\theta'}g(w_t,\bar\theta,\hat\eta)=\E_J\big[\hat\Lambda_{j}\,\hat\Gamma_{jt}\,\hat\Gamma_{jt}'\big]
    +\E_J\big[\hat r_{jt}(\theta)\,\hat\Lambda_{j}\,\nabla_{\theta'}\hat\Gamma_{jt}\big],
\]
and $\nabla_{\theta'}\hat\Gamma_{jt}$ contributes only the curvature entry $\ddot\mu_{y,j}$ in the $(\sigma,\sigma)$ block. Evaluate at $(\bar\theta,\hat\eta)$. Each averaged product of first-stage objects converges to its population value at $(\theta_0,\eta_0)$: for a representative term, adding and subtracting the trimmed weight $\Lambda_j\tau_T$ and using Conditions~\ref{cond:normality}\eqref{cond:norm-deriv} together with $\underline c\le\Lambda_{j}\le\overline C$,
\begin{align*}
    &\big|\E_J\E_T[\hat\Lambda_{j}\hat{\dot\mu}_{y,j}(\bar\sigma)^2]-\E_J\E[\Lambda_{j}\dot\mu_{y,j}(\sigma_0)^2]\big|\\
    \le & \big|\E_J\E_T[(\hat\Lambda_{j}-\Lambda_{j}\tau_T)(\hat{\dot\mu}_{y,j}(\bar\sigma)^2-\dot\mu_{y,j}(\sigma_0)^2)]\big|+\big|\E_J\E_T[\Lambda_{j}\tau_T(\hat{\dot\mu}_{y,j}(\bar\sigma)^2-\dot\mu_{y,j}(\sigma_0)^2)]\big|\\
    &+\big|\E_J\E_T[(\hat{\Lambda}_{j}-\Lambda_{j}\tau_T)\dot\mu_{y,j}(\sigma_0)^2]\big|
    +\big|\E_J\E_T[\Lambda_{j}(1-\tau_T)\dot\mu_{y,j}(\sigma_0)^2]\big|
    +\E_J(\E_T-\E)[\Lambda_{j}\dot\mu_{y,j}(\sigma_0)^2]\\
    \le &\E_J\E_T\Big[|\hat{\dot\mu}_{y,j}(\bar\sigma)^2-\dot\mu_{y,j}(\sigma_0)^2|\tau_T(z_{jt})\Big]\,\sup_{z\in\mathcal Z_T}|\hat \omega_{j}(z)^{-1}-\omega_{j}(z)^{-1}|+\E_J\E_T\Big[|\hat{\dot\mu}_{y,j}(\bar\sigma)^2-\dot\mu_{y,j}(\sigma_0)^2|\tau_T(z_{jt})\Big]\\
    &+\E_J\E_T|\hat{\Lambda}_{j}-\Lambda_{j}\tau_T|+ \overline C\,\E_J\E_T|1-\tau_T|+o_p(1)=o_p(1),
\end{align*}
the fourth term being $O_p(p_T)=o_p(1)$ by \eqref{eq:trim-mass} and the last $o_p(1)$ by the uniform law of large numbers. For the first and second term, by (\ref{eq:weight-id}), together with the following bound,
\begin{align*}
    &\E_T\Big[|\hat{\dot\mu}_{y,j}(\bar\sigma)^2-\dot\mu_{y,j}(\sigma_0)^2|\tau_T(z_{jt})\Big]\\
    \le &
    \sup_{\sigma}\E_T\Big[|\hat{\dot\mu}_{y,j}(\sigma)^2-\dot\mu_{y,j}(\sigma)^2|\tau_T(z_{jt})\Big]+\E_T\Big[|\dot\mu_{y,j}(\bar \sigma)^2-\dot\mu_{y,j}(\sigma_0)^2|\tau_T\Big]\\
    \le& \sup_{\sigma}\E_T\big[|\hat{\dot\mu}_{y,j}(\sigma)-\dot\mu_{y,j}(\sigma)|^2\tau_T+2|\hat{\dot\mu}_{y,j}(\sigma)-\dot\mu_{y,j}(\sigma)|\;|\dot\mu_{y,j}(\sigma)|\tau_T\big]+\E_T|\dot\mu_{y,j}(\bar \sigma)-\dot\mu_{y,j}(\sigma_0)|\ |\dot\mu_{y,j}(\bar \sigma)+\dot\mu_{y,j}(\sigma_0)|\\
    \lesssim& \sup_{\sigma}\E_T\big[|\hat{\dot\mu}_{y,j}(\sigma)-\dot\mu_{y,j}(\sigma)|^2\tau_T+|\hat{\dot\mu}_{y,j}(\sigma)-\dot\mu_{y,j}(\sigma)|\tau_T\big]+\E_T|\dot\mu_{y,j}(\bar \sigma)-\dot\mu_{y,j}(\sigma_0)|\\
    \lesssim& \sup_{\sigma}\E_T|\hat{\dot\mu}_{y,j}(\sigma)-\dot\mu_{y,j}(\sigma)|^2\tau_T + \sqrt{\sup_{\sigma}\E_T|\hat{\dot\mu}_{y,j}(\sigma)-\dot\mu_{y,j}(\sigma)|^2\tau_T}+|\bar\sigma-\sigma_0|=o_p(1),
\end{align*}
by Condition~\ref{cond:normality}\eqref{cond:norm-deriv} and Condition~\ref{cond:y-smooth}.
The same argument applies to the $\hat\Lambda_{j}\hat{\dot\mu}_{y,j}\hat\mu_{p,j}$, $\hat\Lambda_{j}\hat{\dot\mu}_{y,j}x$, $\hat\Lambda_{j}\hat\mu_{p,j}^2$, $\hat\Lambda_{j}\hat\mu_{p,j}x$, $\hat\Lambda_{j}xx'$ blocks and $r_{jt}(\bar\theta)\hat{\Lambda}_{j}\hat{\ddot\mu}_{y,j}(\bar\sigma)$. 
The curvature term satisfies\\
$\E_J\E_T[\hat\Lambda_{j}\hat{\ddot\mu}_{y,j}(\bar\sigma)\,\hat r_{jt}(\bar\theta)]\xrightarrow{p}\E_J\E[\Lambda_{j}\ddot\mu_{y,j}(\sigma_0)\,r_{jt}(\theta_0)]=0$,
the limit vanishing because $r_{jt}(\theta_0)=0$, which holds for every admissible weight. Collecting blocks yields $\hat M\xrightarrow{p}M$ in the stated outer-product form, with no curvature contribution. As in Term~(A) above, the trimming enters only through $\E_J\E_T[1-\tau_T]=O_p(p_T)$, so Step~1 requires no rate on $p_T$.
 
\medskip
\noindent\textbf{Step 2: Moment.}
Because $g(w_t,\theta_0,\eta_0)=0$,
\[
    \E_T\,g(w_t,\theta_0,\hat\eta)=\E_T\big[g(w_t,\theta_0,\hat\eta)-g(w_t,\theta_0,\eta_0)\big].
\]
With $\hat r_{jt}(\theta_0)=(\hat\mu_{y,j}-\mu_{y,j})-\alpha_0(\hat\mu_{p,j}-\mu_{p,j})$ (using $r_{jt}(\theta_0)=0$), expand the weighted product into a part linear in $\hat\eta-\eta_0$ and a quadratic remainder,
\[
    g(w_t,\theta_0,\hat\eta)-g(w_t,\theta_0,\eta_0)
    =\underbrace{\E_J\Big[\hat r_{jt}(\theta_0)\,\hat{\Lambda}_{j}\!\begin{bmatrix}\dot\mu_{y,j}\\-\mu_{p,j}\\-x_{jt}\end{bmatrix}\Big]}_{\text{linear}}
    +\underbrace{\E_J\Big[\hat r_{jt}(\theta_0)\,\hat{\Lambda}_{j}\!\begin{bmatrix}\hat{\dot\mu}_{y,j}-\dot\mu_{y,j}\\-(\hat\mu_{p,j}-\mu_{p,j})\\0\end{bmatrix}\Big]}_{\text{quadratic}}.
\]
\textbf{Quadratic term.}
By Cauchy--Schwarz and the boundedness of $\hat\Lambda_{j}$, the quadratic term is dominated by the mean-square errors in Condition~\ref{cond:normality}\eqref{cond:norm-msr}, so $\sqrt T\,\E_T[\text{quadratic}]=o_p(1)$. It is here that the trimming is used: $\hat\Lambda_{j}$ is uniformly bounded with probability approaching one by \eqref{eq:weight-id}, whereas an untrimmed $\hat\omega_{j}^{-1}$ need not be, being a nonparametric fit near $\partial\mathcal Z$.
Specifically, introduce the shorthand, all evaluated at $\sigma_0$,
\[
    \Delta y_{jt}=\hat\mu_{y,j}(z_{jt};\sigma_0)-\mu_{y,j}(z_{jt};\sigma_0),\quad
    \Delta p_{jt}=\hat\mu_{p,j}(z_{jt})-\mu_{p,j}(z_{jt}),\quad
    \Delta \dot y_{jt}=\hat{\dot\mu}_{y,j}(z_{jt};\sigma_0)-\dot\mu_{y,j}(z_{jt};\sigma_0),
\]
so that $\hat r_{jt}(\theta_0)=\Delta y_{jt}-\alpha_0\,\Delta p_{jt}$ and the quadratic term reads
\[
    \E_J\Big[\hat r_{jt}(\theta_0)\,\hat{\Lambda}_{j}(z_{jt})\!\begin{bmatrix}\Delta\dot y_{jt}\\-\Delta p_{jt}\\0\end{bmatrix}\Big].
\]
Taking the Euclidean norm, using the triangle inequality across the two nonzero components, and bounding the weight by $\Lambda_{j}(z_{jt})\tau_T(z_{jt})\le\overline C\tau_T(z_{jt})$ (Condition~\ref{cond:regularity}\eqref{cond:cons-weight}),
\begin{align*}
    &\Big\|\E_J\E_T\big[\hat r_{jt}(\theta_0)\,\hat\Lambda_{j}(\Delta\dot y_{jt},\,-\Delta p_{jt},\,0)'\big]\Big\|\\
    \le &\Big\|\E_J\E_T\big[\hat r_{jt}(\theta_0)\,(\hat\Lambda_{j}-\Lambda_{j}\tau_T)(\Delta\dot y_{jt},\,-\Delta p_{jt},\,0)'\big]\Big\| + \Big\|\E_J\E_T\big[\hat r_{jt}(\theta_0)\,\Lambda_{j}\tau_T(\Delta\dot y_{jt},\,-\Delta p_{jt},\,0)'\big]\Big\|\\
    \le &\E_J\E_T\big[\,|\hat r_{jt}(\theta_0)|\tau_T\,|\hat\Lambda_{j}-\Lambda_{j}\tau_T|\,(|\Delta\dot y_{jt}|+|\Delta p_{jt}|)\,\big]
    +\,\overline C\,\E_J\E_T\big[\,|\hat r_{jt}(\theta_0)|\tau_T\,(|\Delta p_{jt}|+|\Delta\dot y_{jt}|)\,\big]
\end{align*}
Observe that $|\hat r_{jt}(\theta_0)|\le|\Delta y_{jt}|+|\alpha_0|\,|\Delta p_{jt}|$, $\E_J\E_T|\hat r_{jt}(\theta_0)|^2\tau_T=o_p(T^{-1/2})$, $\E_J\E_T[\Delta\dot y_{jt}^2\tau_T]=o_p(T^{-1/2})$, and $\E_J\E_T[|\hat\Lambda_{j}-\Lambda_{j}\tau_T|^2]=\E_J\E_T[(\hat\omega_{j}^{-1}-\omega_{j}^{-1})^2\tau_T]=o_p(T^{-1/2})$ by Condition~\ref{cond:normality}\eqref{cond:norm-msr}. Each summand is a cross-product of two first-stage errors. Applying the Cauchy--Schwarz inequality to the joint average $\E_J\E_T[\,\cdot\,]$ term by term,
\begin{align*}
    \E_J\E_T\big[|\hat r_{jt}(\theta_0)|\tau_T\,|\Delta\dot y_{jt}|\big]
    &\le \big(\E_J\E_T[\hat r_{jt}(\theta_0)]^2\tau_T\big)^{1/2}\big(\E_J\E_T[\Delta\dot y_{jt}]^2\tau_T\big)^{1/2},\\
    \E_J\E_T\big[|\Delta p_{jt}|\tau_T\,|\hat r_{jt}(\theta_0)|\big]
    &\le \big(\E_J\E_T[\Delta p_{jt}]^2\tau_T\big)^{1/2}\big(\E_J\E_T[\hat r_{jt}(\theta_0)]^2\tau_T\big)^{1/2},\\
    \E_J\E_T\big[|\hat r_{jt}(\theta_0)|\tau_T\,|\hat\Lambda_{j}-\Lambda_{j}\tau_T|\,|\Delta p_{jt}|\big]
    &\le \E_J\big(\E_T[\hat r_{jt}(\theta_0)]^2\tau_T\big)^{1/2}\big(\E_T[\Delta p_{jt}]^2\tau_T\big)^{1/2}\sup_{z\in\mathcal Z_T}|\hat\omega_{j}(z)^{-1}-\omega_{j}(z)^{-1}|.
\end{align*}
Hence right-hand side above is bounded by $o_p(T^{-1/2})^{1/2}\cdot o_p(T^{-1/2})^{1/2}=o_p(T^{-1/2})$,
\[
    \Big\|\E_J\E_T\big[\hat r_{jt}(\theta_0)\,\hat\Lambda_{j}(\Delta\dot y_{jt},\,-\Delta p_{jt},\,0)'\big]\Big\|=o_p(T^{-1/2}),
\]
and therefore $\sqrt T\,\E_T[\text{quadratic}]=o_p(1)$.
 
\noindent\textbf{Linear term.}
The linear term is the left-hand side of Condition~\ref{cond:normality}\eqref{cond:norm-linear} up to one correction for the estimation of the weight, giving $\E_T\,\delta(w_t)+o_p(T^{-1/2})$.
Specifically, writing $\Gamma_{jt}=(\dot\mu_{y,j},-\mu_{p,j},-x_{jt})'$ and adding and subtracting $\Lambda_{j}\Gamma_{jt}$,
\begin{align*}
    &\E_J\E_T\big[\hat r_{jt}(\theta_0)\,\hat{\Lambda}_{j}\,\Gamma_{jt}\big]\\
    = & \E_J\E_T\big[\hat r_{jt}(\theta_0)\,\Lambda_{j}\tau_T\,\Gamma_{jt}\big]
    +\E_J\E_T\big[\hat r_{jt}(\theta_0)\,\big(\hat{\Lambda}_{j}-\Lambda_{j}\tau_T\big)\,\Gamma_{jt}\big]\\
    = & \E_T\,\delta(w_t)+o_p(T^{-1/2}) + \E_J\E_T\big[\hat r_{jt}(\theta_0)\,\big(\hat{\Lambda}_{j}-\Lambda_{j}\tau_T\big)\,\Gamma_{jt}\big]\\
    = & \E_T\,\delta(w_t)+o_p(T^{-1/2}),
\end{align*}
where the first equality is Condition~\ref{cond:normality}\eqref{cond:norm-linear} for the leading term, and the last uses Cauchy--Schwarz on the second, together with the boundedness of $\Gamma_{jt}$, to give
\[
    \Big\|\E_J\E_T\big[\hat r_{jt}(\theta_0)\,\big(\hat{\Lambda}_{j}-\Lambda_{j}\tau_T\big)\,\Gamma_{jt}\big]\Big\|
    \lesssim \big(\E_J\E_T[\hat r_{jt}(\theta_0)]^2\tau_T\big)^{1/2}
    \big(\E_J\E_T[\hat{\Lambda}_{j}-\Lambda_{j}\tau_T]^2\big)^{1/2}
    =o_p(T^{-1/4})\,o_p(T^{-1/4})=o_p(T^{-1/2}),
\]
both factors being controlled by Condition~\ref{cond:normality}\eqref{cond:norm-msr}.
The markets are independent (Condition~\ref{cond:regularity}\eqref{cond:independence}) and $\delta(w_t)$ is mean-zero and square-integrable, so the Lindeberg--L\'evy central limit theorem gives $\sqrt T\,\E_T\,\delta(w_t)\xrightarrow{d}N(0,V_\delta)$.
 
\medskip
\noindent\textbf{Conclusion.}
Substituting Steps~1--2 into \eqref{eq:linearization} and applying Slutsky's theorem,
\(
    \sqrt T\,(\hat\theta-\theta_0)\xrightarrow{d}N\big(0,\ M^{-1}V_\delta(M^{-1})'\big).
\)
\end{proof}

\begin{proof}[Proof of Theorem~\ref{thm:kernel-normality}]
It suffices to verify the hypotheses of Theorem~\ref{thm:normality}; the limiting distribution then follows.
 
First consider the preliminary estimator $\tilde\theta$, computed with equal weights $\omega_j\equiv1$, so that $\Lambda_j\equiv1$ and $\hat\Lambda_j=\tau_T$. Condition~\ref{cond:regularity}\eqref{cond:cons-weight}\eqref{cond:cons-id}, Condition~\ref{cond:consistency}\eqref{cond:cons-uniform} and the weight term of Condition~\ref{cond:normality}\eqref{cond:norm-msr} hold trivially, since $\hat\Lambda_j-\Lambda_j\tau_T\equiv0$; Condition~\ref{cond:consistency}\eqref{cond:cons-trim} holds with $p_T=O(h)$ by the remark to Condition~\ref{cond:kernel}; and Condition~\ref{cond:normality}\eqref{cond:deriv} holds by construction of the kernel derivative estimators. Lemma~\ref{lemma:kernel} discharges the remaining first-stage requirements: applying it to each $\psi\in\{y_j(\sigma),\partial_\sigma y_j(\sigma),\partial^2_\sigma y_j(\sigma),p_{jt}\}$ gives, uniformly over $\Theta_\sigma\times\mathcal Z_T$,
\begin{align*}
    \max_{j}\sup_{\Theta_\sigma\times\mathcal Z_T}\Big\{
    &\big|\hat\mu_{y,j}(z;\sigma)-\mu_{y,j}(z;\sigma)\big|
    +\big|\hat{\dot\mu}_{y,j}(z;\sigma)-\dot\mu_{y,j}(z;\sigma)\big|\\
    &+\big|\hat{\ddot\mu}_{y,j}(z;\sigma)-\ddot\mu_{y,j}(z;\sigma)\big|
    +\big|\hat\mu_{p,j}(z)-\mu_{p,j}(z)\big|\Big\}
    =O_p(\rho_T),
\end{align*}
so Conditions~\ref{cond:consistency}\eqref{cond:cons-rate} and \ref{cond:normality}\eqref{cond:norm-deriv} hold, and squaring gives the mean-square part of \eqref{cond:norm-msr},
\begin{align*}
    &\E_J\E_T\big(\hat\mu_{y,j}(z_{jt};\sigma_0)-\mu_{y,j}(z_{jt};\sigma_0)\big)^2\tau_T(z_{jt})
    +\E_J\E_T\big(\hat{\dot\mu}_{y,j}(z_{jt};\sigma_0)-\dot\mu_{y,j}(z_{jt};\sigma_0)\big)^2\tau_T(z_{jt})\\
    &+\E_J\E_T\big(\hat\mu_{p,j}(z_{jt})-\mu_{p,j}(z_{jt})\big)^2\tau_T(z_{jt})=O_p(\rho_T^2)=o_p(T^{-1/2}).
\end{align*}

Finally Lemma~\ref{lemma:kernel-linear} supplies the asymptotic-linearity Condition~\ref{cond:normality}\eqref{cond:norm-linear} with $\delta(w_t)=\E_J[\xi_{jt}\tilde a_{jt}]$, $\tilde a_{jt}=\Gamma_{jt}$; the weight $\tilde a_{jt}$ is bounded and continuous, which is all the lemma requires. All hypotheses of Theorem~\ref{thm:normality} are met, so
\[
    \sqrt{T}\,(\tilde\theta-\theta_0) \xrightarrow{d}N\!\big(0,\ \tilde M^{-1}\tilde V_\delta(\tilde M^{-1})'\big),
    \quad \tilde V_\delta = \E\big[(\E_J\xi_{jt}\tilde a_{jt})(\E_J\xi_{jt}\tilde a_{jt})'\big],
    \quad \tilde M = \E_J\E\big[\Gamma_{jt}\Gamma_{jt}'\big].
\]
Consequently $\|\tilde\theta-\theta_0\|=O_p(T^{-1/2})$, and Lemma~\ref{lemma:kernel-weight} gives the uniform rate for the estimated weight,
\begin{equation}\label{eq:kw-weight-rate}
    \sup_{z\in\mathcal Z_T}\big|\hat\Sigma_{j}(z)^{-1}-\Sigma_{j}(z)^{-1}\big|=o_p(T^{-1/4}).
\end{equation} 

Next consider the final estimator $\hat\theta$, computed with $\omega_j=\Sigma_j$ and the plug-in weight $\hat\Lambda_j=\hat\Sigma_j^{-1}\tau_T$. Condition~\ref{cond:regularity}\eqref{cond:cons-weight} holds by assumption on $\Sigma_j$. The first-stage requirements of Conditions~\ref{cond:consistency}\eqref{cond:cons-rate} and \ref{cond:normality}\eqref{cond:deriv}\eqref{cond:norm-deriv} do not involve the weight and are unchanged, as is the mean-square part of \eqref{cond:norm-msr}; the only new ingredient is the weight term of \eqref{cond:norm-msr}, which is \eqref{eq:kw-weight-rate}, and \eqref{eq:kw-weight-rate} also delivers Condition~\ref{cond:consistency}\eqref{cond:cons-uniform}. It remains to verify Condition~\ref{cond:normality}\eqref{cond:norm-linear}, which follows from Lemma~\ref{lemma:kernel-linear} applied with $a=\Sigma_j^{-1}\Gamma_j$: this weight is bounded and continuous on $\mathcal Z$, since $\Sigma_j$ is continuous and bounded away from zero and infinity and $\mu_{p,j},\dot\mu_{y,j}$ are continuous by Condition~\ref{cond:kernel}\eqref{cond:kernel-psi}. Combining $\psi=y_j(\sigma_0,w_t)$ and $\psi=p_{jt}$ through $e^{y}_{jt}-\alpha_0e^{p}_{jt}=\xi_{jt}$ gives $\delta(w_t)=\E_J[\xi_{jt}a_{jt}]$ with $a_{jt}=\Sigma_j(z_{jt})^{-1}\Gamma_{jt}$. Hence all hypotheses of Theorem~\ref{thm:normality} are satisfied for $\hat\theta$, and the conclusion follows with $V=V_\delta=\E[(\E_J a_{jt}\xi_{jt})(\E_J a_{jt}\xi_{jt})']$.
\end{proof}

\begin{proof}[Proof of Theorem~\ref{thm:series-normality}]
The series first stage is well behaved on all of $\mathcal Z$, since Condition~\ref{cond:series}\eqref{cond:series-gram} bounds the eigenvalues of $Q_j$ globally, so we take $\tau_T\equiv1$ and $p_T=0$; Condition~\ref{cond:consistency}\eqref{cond:cons-trim} holds trivially and $\hat\Lambda_j=\hat\omega_j^{-1}$ throughout. It suffices to verify the hypotheses of Theorem~\ref{thm:normality} twice, once for each stage.
 
\emph{Preliminary estimator $\tilde\theta$, with $\omega_j\equiv1$.} Then $\Lambda_j\equiv1$ and $\hat\Lambda_j\equiv1$, so Condition~\ref{cond:regularity}\eqref{cond:cons-weight}, Condition~\ref{cond:consistency}\eqref{cond:cons-uniform} and the weight term of Condition~\ref{cond:normality}\eqref{cond:norm-msr} hold trivially, and Condition~\ref{cond:normality}\eqref{cond:deriv} holds by construction, the derivative estimators being the $\sigma$-derivatives of the series fit with the design free of $\sigma$. Lemma~\ref{lemma:series-rate} verifies Condition~\ref{cond:consistency}\eqref{cond:cons-rate} and Condition~\ref{cond:normality}\eqref{cond:norm-deriv} and the mean-square part of \eqref{cond:norm-msr}, and Lemma~\ref{lemma:series-linear} verifies Condition~\ref{cond:normality}\eqref{cond:norm-linear} with $\delta(w_t)=\E_J[\xi_{jt}\tilde a_{jt}]$, $\tilde a_{jt}=\Gamma_{jt}$. Theorem~\ref{thm:normality} therefore applies and gives, in particular, $\|\tilde\theta-\theta_0\|=O_p(T^{-1/2})$.
 
\emph{Final estimator $\hat\theta$, with $\omega_j=\Sigma_j$.} Condition~\ref{cond:regularity}\eqref{cond:cons-weight} holds by assumption on $\Sigma_j$. The first-stage requirements just verified do not involve the weight and are unchanged. With $\|\tilde\theta-\theta_0\|=O_p(T^{-1/2})$ in hand, Lemma~\ref{lemma:series-weight} gives
\[
    \sup_{z\in\mathcal Z}\big|\hat\Sigma_j(z)^{-1}-\Sigma_j(z)^{-1}\big|=o_p(1),
    \qquad
    \E_T\big[\hat\Sigma_j(z_{jt})^{-1}-\Sigma_j(z_{jt})^{-1}\big]^2=o_p(T^{-1/2}),
\]
which are Condition~\ref{cond:consistency}\eqref{cond:cons-uniform} and the weight term of Condition~\ref{cond:normality}\eqref{cond:norm-msr}. Condition~\ref{cond:normality}\eqref{cond:norm-linear} follows from Lemma~\ref{lemma:series-linear} applied with $a_{jt}=\Sigma_j(z_{jt})^{-1}\Gamma_{jt}$, whose coordinates satisfy the approximation requirement of Condition~\ref{cond:series}\eqref{cond:series-approx}, giving $\delta(w_t)=\E_J[\xi_{jt}a_{jt}]$. All hypotheses of Theorem~\ref{thm:normality} are met, and the conclusion follows with $V=\operatorname{Var}(\delta(w_t))$ as stated.
\end{proof}

\end{appendices}

\noindent\textbf{Declaration of generative AI and AI-assisted technologies in the manuscript preparation process}

\noindent During the preparation of this work, the authors used Claude for structuring the paper's writing framework and supporting code writing and debugging.
The authors reviewed and edited the output as needed and take full responsibility for the content of the published article. 

\end{document}